\definecolor{illiniBlue}{HTML}{0072CE}
\definecolor{illiniOrange}{HTML}{E84A27}
\newcommand{\leqnos}{\tagsleft@true\let\veqno\@@leqno}
\newcommand{\reqnos}{\tagsleft@false\let\veqno\@@eqno}
\newcommand{\ketbra}[2]{\ket{#1}\!\bra{#2}}
\newcommand{\tr}[1]{\textnormal{tr}\left[#1\right]}
\newcommand{\eps}{\varepsilon}
\newcommand{\BC}{\mathcal{B}}
\newcommand{\DC}{\mathcal{D}}
\newcommand{\EC}{\mathcal{E}}
\newcommand{\HC}{\mathcal{H}}
\newcommand{\LC}{\mathcal{L}}
\newcommand{\MC}{\mathcal{M}}
\newcommand{\OC}{\mathcal{O}}
\newcommand{\PC}{\mathcal{P}}
\newcommand{\SC}{\mathcal{S}}
\newcommand{\UC}{\mathcal{U}}
\newcommand{\XC}{\mathcal{X}}
\newcommand{\YC}{\mathcal{Y}}
\newif\ifnotes\notestrue
 \definecolor{mygrey}{gray}{0.50}
 \newcommand{\notename}[2]{{\footnotesize{\bf (#1:} {#2}{\bf ) }}}
 \newcommand{\notename}[2]{{}}
\newtheorem{theorem}{Theorem}[section]
\newtheorem{definition}{Definition}
\newtheorem{openquestion}{Open Question}
\newtheorem{proposition}[theorem]{Proposition}
\newtheorem{lemma}[theorem]{Lemma}
\newtheorem{corollary}[theorem]{Corollary}
\numberwithin{equation}{section}
\title{Product testing with single-copy measurements}
\author[1,2]{Jacob Beckey}
\author[3]{Luke Coffman}
\author[4]{Ariel Shlosberg}
\author[5]{Louis Schatzki}
\author[1,2]{Felix Leditzky}
\affil[1]{Department of Mathematics, University of Illinois Urbana-Champaign}
\affil[2]{Illinois Quantum Information Science and Technology Center (IQUIST), University of Illinois Urbana-Champaign}
\affil[3]{Quantum Science and Engineering, Harvard University}
\affil[4]{Center for Quantum Information and Control (CQuIC), University of New Mexico}
\affil[5]{Department of Electrical and Computer Engineering, University of Illinois Urbana-Champaign}
\date{}
\begin{document}

\maketitle

\begin{abstract}
In this work, we study the sample complexity of two variants of product testing when restricted to single-copy measurements. In particular, we consider both bipartite product testing (i.e., does there exist at least one non-trivial cut across which the state is product) and multipartite product testing (i.e., is the state fully product across every cut). For the first variant, we prove an exponential lower bound on the sample complexity of any algorithm for this task which utilizes only single-copy measurements. When comparing this with known efficient algorithms that utilize multi-copy measurements, this establishes an exponential separation for this and several related entanglement learning tasks. For the second variant, we prove another sample lower bound that establishes a separation between single- and multi-copy strategies. To obtain our results, we prove a crucial technical lemma that gives a lower bound on the overlap between tensor products of permutation operators acting on subsystems of states that themselves carry a tensor structure. Finally, we provide an algorithm for multipartite product testing using only single-copy, local measurements, and we highlight several interesting open questions arising from this work.
\end{abstract}

\tableofcontents
\newpage

\section{Introduction}
In recent years, there have been many works proving sample complexity separations between multi-copy and single-copy measurement strategies~\cite{aharonov2022Quantum,bubeck2020Entanglement,chen2022Exponential,chen2024optimal, ye2025Exponential,noller2025infinite}. A well-known example is for the canonical problem of quantum state tomography (QST), where any single-copy QST algorithm must use at least $\Omega(d^3/\eps^2)$ samples~\cite{haah2017SampleOptimal}, while the seminal works of Refs.~\cite{haah2017SampleOptimal,odonnell2016Efficient} showed that there exists a multi-copy strategy using at most $O(d^2/\varepsilon^2)$ samples. These works are emblematic of a burgeoning research program striving to understand the power of quantum memory~\cite{chen2022Exponential,chen2024Tight,anshu2022Distributed,gong2024sample,hinsche2025SingleCopy,kim2025Fundamental, ye2025Exponential,noller2025infinite}, adaptivity~\cite{chen2022When,oufkir2023Adaptivity,chen2024Adaptivity}, and randomness~\cite{liu2024role} in quantum learning and testing. 

Even more dramatically, there are problems for which there exist exponential separations between multi- and single-copy strategies. For example, \cite{chen2022Exponential} proved that \textit{purity testing} on a $d$-dimensional quantum state with single copy measurements requires at least $\Omega(\sqrt{d})$ copies, while in the multi-copy setting, there exist simple algorithms (e.g., the two-copy SWAP test~\cite{barenco1997Stabilization,ekert2002Direct}) to solve this problem using at most $O(1)$ samples. Given that $d$ typically scales exponentially with the number of subsystems, we say such a result is an \textit{exponential separation} between learning with and without quantum memory. Such substantial separations are particularly exciting because they are already observable in experiments using just tens of qubits~\cite{huang2022Quantum}. 

Entanglement is a central object of study in quantum theory due to its foundational theoretical importance and its broad utility in practical applications~\cite{horodecki2009Quantum}; however, it is notoriously hard to quantify and characterize~\cite{guhne2009Entanglement}. A task with both practical and deep complexity-theoretic implication is that of \textit{product testing}, which was rigorously studied by Harrow and Montanaro in Ref.~\cite{harrow_testing_2010}. Given $T$ copies of a pure $n$-qudit state $\ket{\psi} \in (\mathbb{C}^d)^{\otimes n}$, the goal is to determine whether the state is fully product or $\varepsilon$-far from any such state in trace distance~\cite{montanaro2016Survey}. We will refer to this variation as \textit{multipartite product} (MP) testing (see Def.~\ref{def:BP-MP}). They construct an efficient product testing algorithm that utilizes multi-copy measurements and at most $T=O(1/\varepsilon^2)$ copies of the quantum state. Harrow and Montanaro then used this tester to prove that the complexity class QMA(2)=QMA(k) for $k>2$, a celebrated result in quantum complexity theory~\cite{harrow_testing_2010,Nishimura2025SwapSurvey}. We also note that their analysis has since been improved in Ref.~\cite{soleimanifar2022testingmatrixproductstates} and extended to the ``tolerant'' testing regime in Ref.~\cite{bakshi2024Learning}.

Another variant of product testing was first studied in Ref.~\cite{harrow2017Sequential}. One of the main results of this work is an algorithm that, given $T$ copies of a pure $n$-qudit state, can determine whether there exists some cut across which the state is product or whether it is $\varepsilon$-far from any such state in trace distance. We will refer to this task as \textit{bipartite product} (BP) testing. Their algorithm utilizes a sequence of multi-copy measurements requiring only $O(n/\eps^2)$ copies of the state, but with exponential time complexity. In the multi-copy setting, this is sample-optimal up to logarithmic factors in light of the $\Omega(n/\log{n})$ lower bound from Ref.~\cite{jones2024Testing}. The authors of Ref.~\cite{bouland2024state} recently provided a time- and sample-efficient algorithm for this task before generalizing it to find the specific cut across which the state is product, without sacrificing sample complexity. They referred to this as \textit{locating unentanglement}. Our main results, to which we now turn, have implications for the sample complexities of all of the aforementioned tasks when one is restricted to single-copy measurements. 

While we will formally define many of these notions in Sec.~\ref{sec:prelims}, we recall here the definition of an $\varepsilon$-tester in the realm of quantum property testing (see also the review \cite{montanaro2016Survey}) to state our main results. 

\begin{definition}[$\varepsilon$-tester] \label{def:eps-tester}
    Let $\PC$ denote the set of all quantum states $\ket{\psi} \in (\mathbb{C}^d)^{\otimes n}$ satisfying some property. Given $T$ copies of $\ket{\psi}$ and the promise that $\ket{\psi} \in \PC$ or $\ket{\psi}$ is $\varepsilon$-far from any element of $\PC$ in trace distance, an $\varepsilon$-tester is a two-outcome $POVM$ $\{M,\mathbb{I}-M\}$ with the following properties
    \begin{itemize}
        \item \textbf{Completeness.} If $\ket{\psi} \in \PC$, then $\tr{M \psi^{\otimes T}} \geq \frac{2}{3} \eqqcolon c. $
        \item \textbf{Soundness.} If $\ket{\psi}$ is $\varepsilon$-far (in trace distance) from any state in $\PC$, then $\tr{M \psi^{\otimes T}} \leq \frac{1}{3} \eqqcolon s$.
    \end{itemize}
    We often refer to a tester's bias, which is defined as $b\coloneqq c-s$. By our definition, our bias will be $1/3$, however, we note that the choice of this constant is arbitrary.
    \end{definition}
 In general, the assumption that a tester exists should be taken to mean that there is a tester such that $\Omega(1) \leq b$ (i.e., the bias is at least constant). To prove lower bounds on the sample complexity of a tester, then, it suffices to show an upper bound on the bias that decays to zero unless one takes $T$ sufficiently large. The result is a sample complexity lower bound of the form $T = \Omega(f(n,d,\eps))$. Let us outline this strategy at a high level before stating our main results. In classical and quantum learning theory, sample complexity lower bounds on property testing algorithms are often derived in the following manner.
\begin{enumerate}
    \item \textbf{Existence Assumption.} Assume that an algorithm exists which can solve the testing problem at hand. By this assumption, we can conclude that the probability of success $p_{\rm succ}$ must be at least a constant. That is, we assume $\Omega(1)\leq p_{\rm succ}$.
    \item \textbf{Ensemble Construction.} 
    Construct two ensembles of random states $\EC_1, \EC_2$ such that 
    \begin{align}
        \Pr_{\psi \sim \EC_1}[\psi \text{ is } \varepsilon\text{-close to}~\PC] = 1-o(1) \quad \text{and} \quad \Pr_{\psi \sim \EC_2}[\psi \text{ is } \varepsilon\text{-far from}~\PC] = 1-o(1).
    \end{align}
    In other words, the first ensemble contains states that the tester will \emph{accept} with high probability and the second ensemble should contain states that the tester will \emph{reject} with high probability.
    \item \textbf{Proving Hardness.} To prove distinguishing $\EC_1$ from $\EC_2$ with single-copy measurements is difficult, one must prove that the probability distributions over classical measurement outcomes resulting from these ensembles are hard to distinguish (i.e., they are close in statistical distance). This will imply an upper bound on $p_{\rm succ}$ that decays to zero unless the algorithm uses enough samples. Because we assumed that a tester, should it exist, would have a constant lower bound on the probability of success, we can solve for the  condition on the number of samples that ensures these inequalities hold, yielding a lower bound on the sample complexity of any tester for $\PC$.
\end{enumerate}
The non-trivial part of this process, then, is to construct two such ensembles which are provably difficult to distinguish using only single-copy measurements. In both variants of product testing, we considered distinguishing ensembles of global Haar random pure states (which are far from product with high probability, see Proposition~\ref{prop:far-from-BP}) from ensembles of product states where each local state was itself Haar random.

\subsection{Results}
Our first main result concerns BP testing with single-copy measurements. That is, given $T$ copies of $\ket{\psi} \in (\mathbb{C}^d)^{\otimes n}$, we want to determine whether there exists a cut $S \subset [n]$ for which $\ket{\psi}$ is product across $S:S^c$ or if $\ket{\psi}$ is far from any such state. Put another way, we are asked to determine whether $\ket{\psi}$ is genuinely multipartite entangled (GME). This problem is motivated by several applications for which GME states are a critical resource including: quantum communication and cryptography, measurement-based quantum computation, quantum networks, foundations of quantum mechanics, and error correction~\cite{horodecki2009Quantum}.

\begin{theorem}[Single-copy Lower Bound on BP Testing, Thm.~\ref{thm:BP-test-LB} Restated] Any algorithm using, potentially adaptive, single-copy measurements to test whether a state $\ket{\psi} \in (\mathbb{C}^d)^{\otimes n}$ is product across some cut, or is $\varepsilon$-far from any such state (with probability at least $2/3$) must use at least $\Omega(d^{n/4})$ samples. 
\end{theorem}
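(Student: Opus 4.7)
The plan is to instantiate the three-step template sketched in the introduction. For the ensemble construction, I would define $\EC_1$ by sampling a balanced cut $S\subset[n]$ with $|S|=\lfloor n/2\rfloor$ uniformly at random, then independently drawing Haar-random pure states $\ket{\psi_S}\in(\mathbb{C}^d)^{\otimes|S|}$ and $\ket{\psi_{S^c}}\in(\mathbb{C}^d)^{\otimes|S^c|}$ and setting $\ket{\psi}=\ket{\psi_S}\otimes\ket{\psi_{S^c}}$ after restoring the canonical subsystem ordering; the ensemble $\EC_2$ consists of globally Haar-random pure states on $(\mathbb{C}^d)^{\otimes n}$. Every draw from $\EC_1$ is by construction product across the chosen cut and hence trivially close to $\PC$, while Proposition~\ref{prop:far-from-BP} guarantees that a draw from $\EC_2$ is $\varepsilon$-far from every bipartite-product state with probability $1-o(1)$, so any successful tester must statistically separate the two induced outcome distributions.

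The second step is to write down the expected $T$-copy states in closed form. Using the Haar moment identity $\E[\ketbra{\psi}{\psi}^{\otimes T}]=\frac{1}{D^{(T)}}\sum_{\pi\in S_T}P_\pi$, with $D^{(T)}=D(D+1)\cdots(D+T-1)$ and $P_\pi$ permuting the $T$ copies, one obtains
\[
\rho_2\;=\;\frac{1}{(d^n)^{(T)}}\sum_{\pi\in S_T}P_\pi,\qquad \rho_1\;=\;\E_S\!\left[\frac{1}{(d^{|S|})^{(T)}\,(d^{|S^c|})^{(T)}}\sum_{\pi_1,\pi_2\in S_T}P^{(S)}_{\pi_1}\otimes P^{(S^c)}_{\pi_2}\right],
\]
where $P^{(S)}_\pi$ permutes only the $S$-subsystems across the $T$ copies. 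The key structural point is that $\rho_1$ is invariant under independent permutations of the $T$ copies on each side of the cut, while $\rho_2$ is only invariant under the diagonal permutation; the ``extra mass'' carried by the off-diagonal terms $\pi_1\neq\pi_2$ is exactly what any tester must pick up.

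The third step is to reduce the lower bound to an information-theoretic estimate. For an arbitrary adaptive single-copy protocol with $T$ samples, the outcome distributions $p^{(1)},p^{(2)}$ arise from a measurement tree, and the tree-based chi-squared framework from prior work on single-copy separations bounds $\text{TV}(p^{(1)},p^{(2)})$ by a leaf-averaged likelihood-ratio expression involving operator inner products of the form $\Tr\!\bigl[(M_{i_1}\otimes\cdots\otimes M_{i_T})\,\rho_k\bigr]$ along every root-to-leaf path. Because $\rho_1$ and $\rho_2$ are explicit linear combinations of permutation operators (with a subsystem decomposition on the $\rho_1$ side), every term that appears in this chi-squared bound can in principle be evaluated via the Weingarten-style identity
\[
\Tr\!\bigl[P_\pi^{-1}\bigl(P^{(S)}_{\pi_1}\otimes P^{(S^c)}_{\pi_2}\bigr)\bigr]\;=\;d^{\,|S|\,c(\pi^{-1}\pi_1)}\;d^{\,|S^c|\,c(\pi^{-1}\pi_2)},
\]
where $c(\cdot)$ counts cycles, together with its contractions against arbitrary POVM elements on $(\mathbb{C}^d)^{\otimes n}$.

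The heart of the proof---and the ``crucial technical lemma'' flagged in the abstract---is a uniform lower bound on such tensor-product permutation overlaps that remains effective after averaging over the random cut $S$ and over the off-diagonal permutation pairs $(\pi_1,\pi_2)$. A careful combinatorial estimate should show that, averaged over POVMs and over the cut, the per-sample contribution to the chi-squared divergence is at most $O(1/d^{n/2})$, so that the accumulated chi-squared after $T$ samples is $O(T^2/d^{n/2})$; requiring this to be $\Omega(1)$ yields the desired lower bound $T=\Omega(d^{n/4})$. I expect the principal obstacle to be controlling the permutation overlaps uniformly across all $\binom{n}{\lfloor n/2\rfloor}$ cuts $S$, since a naive union bound over cuts would incur an exponentially large prefactor that swamps the gain; the second obstacle is propagating the per-sample bound through the adaptive tree, which rules out a one-shot operator-norm argument on $\|\rho_1-\rho_2\|_1$ and forces a state-dependent likelihood-ratio analysis along each branch.
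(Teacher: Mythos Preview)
Your general framework is sound, but you have over-engineered the construction in a way that creates the very obstacle you flag as principal. In the single-copy regime there is no need to randomize the cut: a \emph{fixed} balanced bipartition already suffices, because without multi-copy access the tester cannot exploit knowledge of the cut to run a SWAP-type test. The paper takes $\EC_2$ to be Haar-random product states across one fixed cut $S$ with $|S|=\lfloor n/2\rfloor$, and $\EC_1$ to be globally Haar-random. This immediately removes your ``union bound over $\binom{n}{\lfloor n/2\rfloor}$ cuts'' worry.

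The second simplification you are missing is to route the TV distance through the maximally mixed state via the triangle inequality,
\[
d_{\mathrm{TV}}\bigl(\E[p^{\EC_1}],\E[p^{\EC_2}]\bigr)\;\le\;d_{\mathrm{TV}}\bigl(\E[p^{\EC_1}],p^{\rho_{mm}}\bigr)+d_{\mathrm{TV}}\bigl(p^{\rho_{mm}},\E[p^{\EC_2}]\bigr),
\]
and then bound each piece by a \emph{one-sided likelihood-ratio} argument (Lemma~\ref{lem:one-sided-LeCam}) rather than a chi-squared computation. For the global-Haar side this is Proposition~\ref{prop:max-mixed-vs-haar} and gives $O(T^2/d^n)$. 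For the product side, the ratio reduces to
\[
\frac{d^{nT}}{\bigl(d^{n/2}(d^{n/2}+1)\cdots(d^{n/2}+T-1)\bigr)^2}\sum_{\pi,\sigma\in S_T}\Tr\Bigl[P_{d^{n/2}}(\pi)\,\tilde{\otimes}\,P_{d^{n/2}}(\sigma)\;\bigotimes_{t=1}^T\psi_t\Bigr],
\]
and the ``crucial technical lemma'' is simply that the permutation sum is $\ge 1$ for \emph{any} collection of pure states $\psi_t$ (Lemma~\ref{lem:prod-of-perms-LB}), proved by a short induction on $T$ that symmetrizes the last $T$ factors and uses the double-coset decomposition of $S_{T+1}$. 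This yields $O(T^2/d^{n/2})$ directly, with no Weingarten calculus and no cut averaging.

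In short: your chi-squared tree analysis could be pushed through, but the paper's route is strictly simpler---fixed cut, triangle inequality through $\rho_{mm}$, and a pointwise lower bound on a permutation sum---and it sidesteps both of the obstacles you anticipated.
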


An algorithm utilizing a sequence of two-outcome measurements was shown in Ref.~\cite{harrow2017Sequential} to require only $O(n/\eps^2)$ copies of the state. Thus, our single-copy lower bound proves an exponential separation between single- and multi-copy measurements for this task. Ref.~\cite{harrow2017Sequential}'s algorithm leveraged the two-copy product test, which accepts with certainty when the state is product across a given cut and with probability $1-\Theta(\varepsilon^2)$ as long as the state is $\varepsilon$-far from product. Using $O(n/\varepsilon^2)$ copies of the state, the acceptance probability in the non-product case can be made as small as $2^{-\Omega(n)}$. The quantum OR bound allows the product tests for each of the $2^{n}-2$ bipartitions to be combined to test for genuine multipartite entanglement by making use of the gentle measurement lemma~\cite{gao2015Quantum,oskouei2019Union,odonnell2021Quantum}. In the multicopy setting, this is sample-optimal up to logarithmic factors in light of the $\Omega(n/\log{n})$ lower bound from Ref.~\cite{jones2024Testing}.

Unfortunately, the above algorithm had exponential time complexity, a shortcoming that was removed recently in Ref.~\cite{bouland2024state}. Here, the authors 1) made the algorithm time-efficient and 2) generalized it to find the specific cut across which the state is product, without sacrificing sample complexity. Therefore, our lower bound on BP testing immediately implies a lower bound on any algorithm using only single-copy measurements that could be used for this so-called \textit{hidden cut problem}. As such, any single-copy algorithm must use at least $\Omega(d^{n/4})$ copies while there exists an algorithm utilizing multi-copy measurements that needs only $O(n/\varepsilon^2)$ copies.

 Our next result concerns a sample lower bound for any single-copy MP testing algorithm. As mentioned above, in Ref.~\cite{harrow_testing_2010}, the authors provide a simple two-copy algorithm for MP testing using $\Theta(1/\eps^2)$ samples (the analysis was later simplified and improved in Ref.~\cite{soleimanifar2022testingmatrixproductstates}). In a short appendix, they also showed that any product tester on two copies of an $n$-qudit state that utilizes measurement operators that have a positive-partial transpose (PPT)\footnote{Note that PPT protocols are a relaxation of the class of all protocols  using only local operations and classical communication (LOCC)~\cite{chitambar2014Everything}.} must have a bias that scales as $O(1/d)$, implying a sample complexity that must scale with the local dimension of the state. To show that this is still the case for such testers given more than just two copies of the state, significant mathematical machinery is needed. In Ref.~\cite{harrow_testing_2010}, a forward citation is given to a paper that finally appeared in 2023~\cite{harrow2023Approximate}. In this work, Harrow uses the approximate orthogonality of permutation operators to derive several interesting results, including a $\Omega\left(d^{1/8}\right)$ lower bound on the $n=2, T>2$ case of product testing with PPT measurements.\footnote{Note also that our choice of parameter notation differs from Ref.~\cite{harrow2023Approximate} in which the author considers access to $n$ copies of a $k$ qudit state. Here, we consider access to $T$ copies of an $n$-qudit state.}  We sharpen this result in the $n=2$ case by considering a more restrictive (but experimentally motivated) notion of single-copy measurements, and we extend the regime of validity to all $n \geq 2$. 

\begin{theorem}[Single-copy Lower Bound on MP Testing, Thm.~\ref{thm:MP-test-LB} Restated] Any algorithm using, potentially adaptive, single-copy measurements to test whether a state is product across all cuts, or is $\varepsilon$-far from any such state (with probability at least $2/3$) must use at least $\Omega(\sqrt{d/n})$ samples.
\end{theorem}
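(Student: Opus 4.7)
The plan is to follow the three-step roadmap stated in the excerpt and to invoke the ``crucial technical lemma'' on overlaps of permutation operators advertised in the abstract. The YES-ensemble $\mathcal{E}_1$ consists of states $\ket{\psi_1}\otimes\cdots\otimes\ket{\psi_n}$ with each $\ket{\psi_k}$ drawn independently from the Haar measure on $\mathbb{C}^d$; such states are automatically fully product, hence in the property set $\mathcal{P}$. The NO-ensemble $\mathcal{E}_2$ is the Haar measure on pure states in $(\mathbb{C}^d)^{\otimes n}$; by standard concentration arguments (analogous to Proposition~\ref{prop:far-from-BP}), a Haar-random state is $\varepsilon$-far in trace distance from any fully product state with probability $1-o(1)$.

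Any $T$-copy single-copy strategy, adaptive or not, produces a classical transcript $\vec{x}$ whose distribution under ensemble $j$ is $P^j(\vec{x})=\mathrm{tr}[(\bigotimes_t M_{x_t}^{(t)})\rho_j^{(T)}]$, where $\rho_j^{(T)} \coloneqq \mathbb{E}_{\psi \sim \mathcal{E}_j}[\ketbraq{\psi}^{\otimes T}]$. Schur--Weyl duality gives
\[
\rho_1^{(T)} = \bigotimes_{k=1}^n \frac{\Pi_{\mathrm{sym}}^{T,d}}{\binom{d+T-1}{T}} = \frac{1}{(T!)^n\binom{d+T-1}{T}^n}\sum_{\tau_1,\ldots,\tau_n \in S_T} \bigotimes_{k=1}^n P_k(\tau_k),
\]
\[
\rho_2^{(T)} = \frac{\Pi_{\mathrm{sym}}^{T,d^n}}{\binom{d^n+T-1}{T}} = \frac{1}{T!\binom{d^n+T-1}{T}}\sum_{\sigma \in S_T} \bigotimes_{k=1}^n P_k(\sigma),
\]
where $P_k(\tau)$ denotes the permutation operator acting on $T$ copies of the $k$-th subsystem. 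In other words, $\rho_2^{(T)}$ is averaged only over the ``diagonal'' tuples $(\sigma,\ldots,\sigma)\in S_T^n$, while $\rho_1^{(T)}$ ranges over all $n$-tuples. Note in particular that both reductions have the same one-copy marginal (the maximally mixed state), so any distinguishing signal must come from the $T\geq 2$ moment structure.

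To bound the total variation distance between $P^1$ and $P^2$ I would use the likelihood-ratio/tree-model analysis for adaptive single-copy strategies developed in the single-copy lower-bound literature, which reduces the task to controlling quadratic forms of the form $\mathrm{tr}[P(\sigma)\bigotimes_k P_k(\tau_k)] = \prod_k d^{c(\sigma\tau_k)}$, where $c(\cdot)$ counts disjoint cycles. The paper's technical lemma on approximate orthogonality of permutation operators then provides sharp bounds on how much a tensor product of per-subsystem permutations can overlap a global permutation, with the overlap decaying in how far the tuple $(\tau_1,\ldots,\tau_n)$ is from the diagonal. Carefully summing the resulting contributions should produce a bound on the $\chi^2$-divergence of the transcript of the order $T^2 n/d$, so requiring the distinguishing bias to be $\Omega(1)$ forces $T=\Omega(\sqrt{d/n})$, contradicting the constant bias demanded by Definition~\ref{def:eps-tester}.

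The main obstacle is the cross-term bookkeeping: there are $\sim (T!)^n$ off-diagonal tuples contributing to $\rho_1^{(T)}$ but not to $\rho_2^{(T)}$, so crude per-tuple bounds on the permutation overlaps would vastly overcount, and the technical lemma must be applied with care to obtain the correct $n$-dependence in the denominator (rather than the weaker $\Omega(\sqrt{d})$ bound one would expect from a purity-test-style argument that ignores the product structure). A secondary difficulty is lifting the non-adaptive calculation to fully adaptive strategies, which requires a decision-tree argument tracking the posterior over $\mathcal{E}_1$ versus $\mathcal{E}_2$ as measurement outcomes accumulate and ensuring the per-step bias contributions compose as in the non-adaptive case.
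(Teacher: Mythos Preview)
Your ensemble choice is correct and matches the paper's. However, your proof mechanism diverges from the paper's and rests on a misreading of the technical lemma. The paper does \emph{not} attempt a direct $\chi^2$-style comparison of $P^1$ and $P^2$, nor does it invoke any ``approximate orthogonality of permutation operators'' (that phrase describes the machinery of \cite{harrow2023Approximate}, which the paper explicitly sidesteps here). Instead, the argument routes through the maximally mixed state via the triangle inequality, $d_{\rm TV}(P^1,P^2)\leq d_{\rm TV}(P^1,p^{\rho_{mm}})+d_{\rm TV}(p^{\rho_{mm}},P^2)$, and bounds each term with the one-sided likelihood-ratio method (Lemma~\ref{lem:one-sided-LeCam}). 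For the product-state ensemble this ratio equals
\[
\frac{d^{nT}}{\bigl(d(d+1)\cdots(d+T-1)\bigr)^{n}}\;\sum_{\pi_1,\ldots,\pi_n\in\SC_T}\tr{\widetilde{\bigotimes}_{i}P_d(\pi_i)\,\bigotimes_{t}\ketbraq{\psi_{s_t}}},
\]
and the crucial technical lemma (Lemma~\ref{lem:multi-qudit-overlap-LB}) is simply that this sum over all $n$-tuples is at least $1$: it is a \emph{lower bound} on a permanent-like functional of the measurement vectors, not an overlap estimate between permutation operators. The $n$-dependence then comes entirely from the prefactor, $\bigl(\prod_{t<T}(1+t/d)^{-1}\bigr)^{n}\geq 1-nT(T-1)/(2d)$, giving bias $O(nT^2/d)$ and hence $T=\Omega(\sqrt{d/n})$. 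Note this is \emph{weaker} than $\Omega(\sqrt d)$, contrary to what your final paragraph suggests.

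The ``cross-term bookkeeping'' over $(T!)^n$ off-diagonal tuples and the adaptive-lifting difficulty you anticipate are both artifacts of the direct-comparison route; the detour through $\rho_{mm}$ eliminates them, since the one-sided ratio bound holds pointwise for every transcript $\ell$ and therefore automatically covers arbitrary adaptive single-copy strategies.
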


Of course, this bound is most useful when $d\gg n$. While larger lower bounds could likely be found by considering more difficult distinguishing tasks, for $n=2$, we have an algorithm with sample complexity matching for constant $\varepsilon \leq 1/2$, implying the sample complexity of single-copy product testing in this regime is $\Theta(\sqrt{d})$.
\begin{theorem}[Single-copy, Local Upper Bound on MP Testing, Informal Version of Thm.~\ref{thm:MP-test-UB}] There exists a non-adaptive algorithm requiring only single-copy, local measurements on $O(n \log{n} \sqrt{d}/\eps^2)$ samples of a pure state to test whether it is fully product, or is $\varepsilon$-far from any such state with probability at least $2/3$. 
\end{theorem}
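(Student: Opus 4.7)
The plan is to reduce multipartite product testing to $n$ parallel single-copy purity tests, one for each single-qudit marginal $\rho_i := \mathrm{Tr}_{[n]\setminus\{i\}}\!\left(\ketbra{\psi}{\psi}\right)$, using the fact that a pure $n$-qudit state is fully product iff every $\rho_i$ is pure. First I would prove the following quantitative reduction lemma: \emph{if $\mathrm{Tr}(\rho_i^2) \ge 1-\eta$ for all $i \in [n]$, then $\ket{\psi}$ lies within trace distance $\sqrt{n\eta}$ of some product state}. Since $\lambda_1(\rho_i) \ge \mathrm{Tr}(\rho_i^2) \ge 1 - \eta$ (because $\mathrm{Tr}(\rho^2) = \sum_j \lambda_j^2 \le \lambda_1 \sum_j \lambda_j = \lambda_1$), let $\ket{\phi_i}$ be a top eigenvector of $\rho_i$ and expand $\ket{\psi} = \sum_{\vec{j}} c_{\vec{j}} \bigotimes_i \ket{j_i}$ in a local basis that places $\ket{\phi_i}$ as the zeroth vector on qudit $i$. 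A union bound over the local ``$j_i \neq 0$'' events gives
\[
1 - |c_{\vec 0}|^2 \;=\; \sum_{\vec{j}\neq\vec{0}} |c_{\vec{j}}|^2 \;\le\; \sum_{i=1}^{n} \sum_{\vec{j} : j_i \ne 0}|c_{\vec{j}}|^2 \;=\; \sum_{i=1}^n (1 - \lambda_1(\rho_i)) \;\le\; n\eta.
\]
Hence $|\braket{\psi}{\bigotimes_i \phi_i}|^2 \ge 1 - n\eta$, which gives the claimed trace-distance bound on the product state $\bigotimes_i\ket{\phi_i}$. Contrapositively, if $\ket{\psi}$ is $\varepsilon$-far from every product state, then some marginal obeys $\mathrm{Tr}(\rho_i^2) \le 1 - \varepsilon^2/n$.

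The algorithm then draws $T = O(n \log n\,\sqrt{d}/\varepsilon^2)$ copies of $\ket{\psi}$ and, on each copy, applies an independent, non-adaptively chosen tensor product of single-qudit random measurements (for concreteness, drawn from single-qudit Clifford $3$-designs on each site). Because the measurements factor across sites, the outcomes restricted to qudit $i$ across all $T$ copies constitute $T$ i.i.d.\ single-copy random-basis samples of $\rho_i$, which we feed into an off-the-shelf single-copy purity tester (the upper-bound counterpart to the $\widetilde{\Omega}(\sqrt{d})$ lower bound of~\cite{chen2022Exponential}) calibrated to distinguish $\mathrm{Tr}(\rho_i^2) = 1$ from $\mathrm{Tr}(\rho_i^2) \le 1 - \varepsilon^2/n$ with failure probability at most $1/(3n)$. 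The overall algorithm accepts iff all $n$ sub-tests accept.

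Completeness is immediate: in the product case each $\rho_i$ is pure so every sub-test accepts with probability $\ge 1 - 1/(3n)$, and a union bound yields total acceptance probability $\ge 2/3$. Soundness follows from the reduction lemma: at least one sub-test sees an input $\rho_i$ with $\mathrm{Tr}(\rho_i^2) \le 1 - \varepsilon^2/n$ and rejects with probability $\ge 1 - 1/(3n)$. For the sample count, the one-sided single-copy purity tester achieves $\widetilde{O}\bigl(\sqrt{d}\,\log(1/\delta)/\eta\bigr)$ samples for purity gap $\eta$ and confidence $\delta$; the linear $1/\eta$ rate (as opposed to $1/\eta^2$) is enabled by the one-sided structure, which lets each single-copy measurement act as an almost-deterministic accept/reject flag that one rejects on first failure. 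Plugging in $\eta = \Theta(\varepsilon^2/n)$ and $\delta = 1/(3n)$ yields $O(n\log n\,\sqrt{d}/\varepsilon^2)$ samples per sub-test, and since all $n$ sub-tests reuse the same sample batch, this is also the total sample complexity.

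The chief technical obstacle is substantiating the $1/\eta$ dependence of the single-qudit purity sub-routine: a generic unbiased purity estimator obtained from randomized local measurements suffers variance-limited $1/\eta^2$ sample complexity, so the linear rate must come from a dedicated one-sided tester that deterministically accepts pure states and rejects at the first failed elementary sub-measurement. Designing such a tester using only single-qudit measurements, and verifying that its correctness is unaffected by the correlations across qudits induced by the shared global state (this is handled by noting that each sub-tester is correct for \emph{any} marginal distribution of its own qudit's outcomes), is the step that will require the most care in the full proof.
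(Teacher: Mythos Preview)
Your overall strategy matches the paper's: reduce MP testing to testing purity of each single-qudit marginal, run the $n$ sub-tests in parallel on the same batch of copies, and union-bound the failure probabilities with $\delta = 1/(3n)$. Your reduction lemma, however, is different and more elementary than the paper's. The paper (Lemma~\ref{lem:avg_purity_upperbound}) argues directly that an $\eps$-far state has some marginal with purity at most $1 - \tfrac{4}{n}\eps^2(1-\eps^2)$, invoking the Harrow--Montanaro structural fact that the closest product state can be taken as $\ket{0^n}$ with all Hamming-weight-$1$ amplitudes vanishing. Your contrapositive union-bound argument (all marginals nearly pure $\Rightarrow$ state close to the tensor of top eigenvectors) is cleaner and avoids that machinery, at the cost of a worse constant ($\eps^2/n$ versus roughly $4\eps^2/n$), which is immaterial for the asymptotic statement.

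The one place you should revise is the purity sub-routine. You correctly identify that a generic variance-limited estimator would give $1/\eta^2$ and that the target bound requires $1/\eta$, but your sketch of a ``one-sided tester that deterministically accepts pure states and rejects on first failure'' is not how this rate is obtained and is not obviously realizable with single-copy measurements (a single random-basis sample cannot distinguish purity $1$ from purity $1-\eta$ with any useful bias). The paper sidesteps the issue entirely by citing the single-copy purity \emph{estimation} algorithms of \cite{anshu2022Distributed,gong2024sample}, which already achieve $O(\sqrt{d}\,\log(1/\delta)/\eta)$ samples for additive precision $\eta$ on a $d$-dimensional state; plugging these in as a black box on each qudit marginal yields the claimed complexity with no further construction. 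Replace your proposed one-sided tester with a citation to those results and the proof goes through.
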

The underlying idea of the algorithm is quite simple: if an $n$-qudit pure state $\ket{\psi}\in(\mathbb{C}^d)^{\otimes n}$ is $\varepsilon$-far in trace distance from the set of all MP states, then one would expect that at least one of the reduced qudit states is sufficiently impure, i.e., $1- \eps_{\mathrm{rej}} \geq \tr{\psi_i^2}$ holds for at least one $i$ and some tolerance $\eps_\mathrm{rej}$. The key technical challenge is to show that, for a state that is $\varepsilon$-far from an MP state, at least one marginal is sufficiently impure. Then, we can utilize a recent algorithms for single-copy purity estimation~\cite{anshu2022Distributed,gong2024sample} to estimate the purity of each local $\psi_i$ to sufficient accuracy to solve the testing problem with high probability. 

We note that, after posting an initial draft to arXiv, we were made aware of Ref.~\cite{liu2025Separation}, which proves exponential separations for various entanglement testing tasks. In particular, Theorem 2 in Ref.~\cite{liu2025Separation} tightens the sample complexity lower bound for product testing (with $n=2$ parties, i.e. when MP and BP are the same task) present in Theorem 9 of Ref.~\cite{harrow2023Approximate} from $\Omega(d^{1/8})$ to $\Omega(d^{1/4})$. Unaware of their work, we independently proved our Lemma 3.3 in order to then obtain our exponential lower bound on BP testing. Their Lemma C1 is an equivalent result proved using tensor network diagrams complementing our group theoretic, inductive approach. 

\section{Preliminaries}\label{sec:prelims}
Let us begin by introducing some terminology and notation. The problems we consider have two underlying tensor structures: one within the multi-qudit state and one between multiple copies of such a state. When both are present in an expression, we will denote the space of $T$-fold tensor product of a $n$-qudit state as $((\mathbb{C}^d)^{\tilde{\otimes}n})^{\otimes T}$, to avoid confusing the two tensor products. Recalling Def.~\ref{def:BP-MP}, we will denote the set of all BP states on $(\mathbb{C}^d)^{\otimes n}$ as $\BC_n$ and the set of all MP states as $\MC_n$. With this notation in place, let us collect some essential facts from quantum information theory.

\subsection{Quantum States, Measurements, and Distances}
\subsubsection{States and Distances}
Throughout this work, let $\ket{\psi} \in \mathbb{C}^d$ and denote the corresponding projector as $\psi \coloneqq \ketbra{\psi}{\psi}$. We will denote the maximally mixed state as $\rho_{mm}\coloneqq \frac{\mathbb{I}}{d}$, where the underlying dimension will be clear from context. Further, let the vector space of linear operators on a Hilbert space $\HC$ be denoted $\LC(\HC)$ and the set of all density operators $\DC(\HC) = \{X \in \LC(\HC) : X \geq 0, \tr{X}=1 \}$. Given a state $\phi\in \DC(\HC_1\otimes \cdots \otimes \HC_n)$ we will use the notation $\phi_i=\mathrm{tr}_{[n]\setminus \{i\}}[\phi]$ to denote the reduced state on site $i\in [n]\coloneqq \{1,\ldots,n\}$ or for $\phi\in \DC(\HC_A\otimes \HC_B)$ we may use the superscript $\phi^A=\mathrm{tr}_{A^c}[\phi]$ to denote the reduced state or a combination thereof. Additionally, given a subset $S\subset [n]$, we will use the notation $S^c\coloneqq[n]\setminus S$ to denote its compliment. With this notation in place, we may formally state the following variants of productness considered in this work (following the naming conventions of Ref.~\cite{jones2024Testing}).


\begin{definition}[Multipartite and Bipartite Productness]\label{def:BP-MP}
    A pure $n$-qudit state $\ket{\psi} \in (\mathbb{C}^d)^{\otimes n}$ is 
    \begin{itemize}
        \item \textbf{Genuinely multipartite entangled} (GME) if is it is entangled across every bipartition of the $n$ parties.
        \item \textbf{Bipartite product} (BP) if there exists at least one non-trivial subset $S \subset [n]$ such that the state is product across the cut $S:S^c$ (i.e., if it is not GME). We will denote the set of all such $n$-qudit pure states $\BC_n$.
        \item \textbf{Multipartite product} (MP) if the state is of the form $\ket{\psi} = \bigotimes_{i=1}^n \ket{\phi_i}$ for arbitrary pure states $\ket{\phi_i}$. That is, if it is product across \textit{every} partition. We will denote the set of all such $n$-qudit pure states $\MC_n$.
    \end{itemize}
    Note that $\MC_2 = \BC_2$ but $\MC_n \subset \BC_n$ for all $n>2$.
\end{definition}

 \begin{definition}[Trace Distance]\label{def:trace-distance}
        Let $\rho$ and $\sigma$ be quantum states on $\HC$. The trace distance between them is 
        \begin{align}
            d_{\rm tr}(\rho,\sigma)\coloneqq\frac{1}{2}\|\rho -\sigma\|_1,
        \end{align}
        where $\|X\|_1\coloneqq\tr{\sqrt{X^{\dagger}X}}$ for all $X \in \LC(\HC)$.
    \end{definition}
    When the input states are pure (i.e. $\rho = \ketbra{\psi}{\psi}$ and $\sigma = \ketbra{\phi}{\phi}$), the trace distance takes the simple form $d_{\rm tr}(\ketbra{\psi}{\psi},\ketbra{\phi}{\phi}) = (1- \abs{\braket{\psi|\phi}}^2)^{1/2}$. Next, note that because the trace distance is a function of the spectrum of $\rho-\sigma$, it is unitarily invariant. This will allow us to define what it means to be $\varepsilon$-close to a subset of quantum states in a well-defined manner. 
    \begin{definition}[Distance from a Subset]\label{def:subset-distance}
        Let $\PC \subseteq \DC(\HC)$ be a subset of all quantum states. Then, we say a state $\psi \in \DC(\HC)$ is $\varepsilon$-far from $\PC$ if 
        \begin{align}
            \varepsilon \leq \min_{\varphi \in \PC} d_{\rm tr}(\varphi,\psi).
        \end{align}
    \end{definition}
Finally, because we are restricting our attention to pure states and often discussing bipartitions of such states, it will be very useful to consider the Schmidt decomposition of the state~\cite{nielsen2010Quantum}.
    \begin{definition}[Schmidt Decomposition]\label{def:schmidt-decomp}
        A bipartite pure state always admits a Schmidt decomposition 
    \begin{align}
        \ket{\psi}_{AB} &= \sum_{i=1}^r \sqrt{\lambda_i} \ket{i_A}\otimes \ket{i_B},
    \end{align}
    where the $r$ non-zero Schmidt coefficients $\sqrt{\lambda_i}$ characterize the entanglement in the state. Moreover, the states $\{\ket{i_A}\}$ and $\{\ket{i_B}\}$ form orthonormal sets.
    \end{definition}
For a bipartite pure state to be a product state, the Schmidt rank (the number of nonzero $
\lambda_i$'s) must be equal to one. We will utilize this decomposition in our constructions below.

\subsubsection{Measurement Classes}
As classical creatures living in a quantum world, the only way to obtain information about our quantum state is to measure it. The most general measurements allowed by quantum mechanics are formally defined as follows. 

\begin{definition}[Positive Operator-valued Measure (POVM)]\label{def:POVM}
A positive operator-valued measure (POVM) is a collection of operators $\{E_k\}_k$ on a Hilbert space $\HC$ that satisfy:
\begin{enumerate}
    \item Positivity: $E_k \geq 0$ for all $k$
    \item Completeness: $\sum_k E_k = \mathbb{I}$.
\end{enumerate}
When one applies a POVM to a state $\rho$, the probability of obtaining outcome ``$k$'' is given as $p_k = \tr{\rho E_k}$.
\end{definition}

In this work, we will be considering testing in the single-copy setting. Although multi-copy strategies typically have lower sample complexity, the required measurements remain largely infeasible on near-term devices. As such, it is of great practical interest to determine the separation between these single- and multi-copy strategies. Moreover, among single-copy measurements, even enacting a global measurement across tens of systems can be very challenging, which motivates studying separations between single-copy global and single-copy local measurements. To ease this discussion, we define the following classes of measurement strategies.

\begin{definition}[Locality of POVMs]\label{def:locality-of-POVMs}
Let $\mathcal{H} = \bigl( (\mathbb{C}^d)^{\tilde{\otimes} n} \bigr)^{\otimes T}$ denote the Hilbert space of $T$ copies of an $n$-qudit system.  
A POVM on $\mathcal{H}$ is classified as follows\footnote{In the literature, many works refer to multi-copy measurements as entangled or coherent, implying single-copy measurements are unentangled or incoherent. We avoid this terminology to due to the ambiguity regarding entanglement between $T$ copies of $\ket{\psi} \in (\mathbb{C}^{d})^{\tilde{\otimes} n}$ and entanglement within a single copy of $\ket{\psi} \in (\mathbb{C}^{d})^{\tilde{\otimes} n}$.}:
\begin{itemize}
    \item \textbf{Multi-copy:} the POVM may act jointly across all $T$ copies, i.e.\ on the full space $\mathcal{H}$.
    \item \textbf{Single-copy, global:} the POVM acts on one $n$-qudit system at a time, i.e.\ on $(\mathbb{C}^d)^{\tilde{\otimes} n}$.
    \item \textbf{Single-copy, local:} the POVM factors as a tensor product across qudits and acts individually on each $\mathbb{C}^d$.
\end{itemize}
\end{definition}

Of particular importance, will be the case when the $E_k$'s are rank-1, in which we call the collection of operators a \textit{rank-1 POVM}. In this case, we may associate with each $E_k$ a pure quantum state $\ket{\psi_k}$. Formally, a rank-1 POVM is a (possibly overcomplete) set of rank-1 operators $\{d\cdot a_k \ketbra{\psi_k}{\psi_k}\}_k,$ with $a_k \geq 0$ such that 
\begin{align}
    \sum_k d \cdot a_k  \ketbra{\psi_k}{\psi_k} = \mathbb{I}_{\HC}
\end{align}
where the factor of $d$ ensures the $\sum_k a_k =1$. Concretely, if we measure $\rho$, the probability of obtaining the outcome ``$k$'' is given as $p_k=\tr{d \cdot a_k \cdot \rho \ketbra{\psi_k}{\psi_k}}$ and the corresponding post-measurement state is simply $\ketbra{\psi_k}{\psi_k}$. Rank-1 POVMs are particularly important when only the classical outcome is relevant (i.e., when the post-measurement state is discarded). In this case, all POVMs can be simulated by a rank-1 POVM with post-processing. 

\begin{lemma}[Simulating POVMs with Rank-$1$ POVMs] \label{lem:simulating-POVMs}
    Assuming that the post-measurement state can be neglected, an arbitrary POVM on $\mathbb{C}^d$ with $N$ outcomes can be simulated by a rank-$1$ POVM with $M \leq N \cdot d$ outcomes.
\end{lemma}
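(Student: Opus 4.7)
The plan is a refinement-plus-postprocessing argument. I would first spectrally decompose each POVM element as $E_k = \sum_{j=1}^{r_k} \lambda_{k,j} \ketbra{\psi_{k,j}}{\psi_{k,j}}$, with $r_k = \mathrm{rank}(E_k) \leq d$ and $\lambda_{k,j} > 0$. The rank-$1$ operators $F_{k,j} := \lambda_{k,j} \ketbra{\psi_{k,j}}{\psi_{k,j}}$ are nonnegative and satisfy $\sum_{k,j} F_{k,j} = \sum_k E_k = \mathbb{I}_{\HC}$, so $\{F_{k,j}\}$ is a valid POVM whose elements are all rank-$1$, with at most $\sum_k r_k \leq N \cdot d$ outcomes in total. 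Rewriting $F_{k,j} = d \cdot a_{k,j} \ketbra{\psi_{k,j}}{\psi_{k,j}}$ with $a_{k,j} := \lambda_{k,j}/d$, the trace of the completeness relation yields $\sum_{k,j} \lambda_{k,j} = d$, hence $\sum_{k,j} a_{k,j} = 1$, matching the normalisation convention used in the paragraph preceding the lemma.

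To simulate the original POVM, I would measure with $\{F_{k,j}\}$, observe an outcome $(k,j)$, and report only the first coordinate $k$ (a purely classical post-processing step). The probability of reporting $k$ on state $\rho$ is $\sum_j \tr{\rho F_{k,j}} = \tr{\rho E_k}$, so the induced distribution on reported outcomes is identical to that of $\{E_k\}$. Counting outcomes then gives the claimed bound $M \leq N \cdot d$.

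Honestly, no step here is an obstacle — the proof is a short combination of the spectral theorem and classical coarse-graining. The only conceptual point worth flagging is why the lemma needs the hypothesis that post-measurement states can be neglected: the refined measurement collapses $\rho$ onto one of the eigenvectors $\ket{\psi_{k,j}}$, whereas a general Kraus representation of $E_k$ would leave behind a different (typically higher-rank) post-measurement state. Thus the reduction preserves outcome statistics but not post-measurement states, which is precisely the scope assumed in the hypothesis.
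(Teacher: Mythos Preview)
Your proposal is correct and follows exactly the approach the paper sketches: spectrally decompose each $E_k$ into its rank-$1$ eigenprojectors, verify that the resulting collection is a valid rank-$1$ POVM with at most $N\cdot d$ outcomes, and recover the original outcome statistics by classically coarse-graining the label $(k,j)\mapsto k$. Your added remarks on the normalisation $\sum_{k,j} a_{k,j}=1$ and on why the post-measurement-state hypothesis is needed are accurate and go slightly beyond the paper's brief sketch, which simply points to the literature for a formal proof.
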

   The idea is simply that every rank-$k$ POVM element is Hermitian and positive, thus diagonalizable. So, to simulate a rank-$k$ POVM with a rank-$1$ POVM, we just diagonalize the rank-$k$ elements (yielding a weighted sum of rank-$1$ operators). Then, we implement a POVM with all of these rank-$1$ elements and group the outcomes accordingly. See Proposition~2 in Ref.~\cite{haapasalo2012Quantum} or Lemma~4.2 in Ref.~\cite{chen2022Exponential} for a formal proof of this fact. With these fundamentals in place, we now collect some essential facts from representation theory.

\subsection{Representation Theory Essentials} \label{sec:rep-theory-essentials}
By exploiting symmetries in physical systems, one can often greatly reduce the complexity of a problem. The tools that allow us to formally study such symmetries come from group and representation theory. They have become essential tools for many sub-fields modern physics and mathematics, quantum information theory included. Below, we provide only the essentials needed to understand our main results. We refer the reader to Ref.~\cite{harrow2013Church} and Ref.~\cite{mele2024Introduction} for a more comprehensive treatment of the results of this section.

Let $\SC_n$ denote the \textit{symmetric group} on $n$ objects, that is, the group of bijections on a set with a cardinality of size $n$. Below we give an important lemma regarding permutations of the symmetric group which will be used later in this work.

\begin{lemma}[Double Coset Decomposition] \label{lemma:double-coset-decomp}
Consider the symmetric group on $T+1$ objects, denoted $\SC_{T+1}$ and let \( \SC_T^{[2,T+1]} \subset \SC_{T+1} \) denote the subgroup of permutations that act only on indices \(2\) through \(T+1\). Then every permutation \( \pi \in S_{T+1} \) can be written as
\begin{align}
\pi = \alpha \cdot (1\ 2)^a \cdot \beta,
\end{align}
where \( a \in \{0,1\} \) and \( \alpha, \beta \in \SC_T^{[2,T+1]} \).
\end{lemma}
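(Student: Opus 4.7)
The plan is to view $H \coloneqq \SC_T^{[2,T+1]}$ as the stabilizer of the index $1$ in the natural action of $\SC_{T+1}$ on $\{1,\ldots,T+1\}$, and then show that there are exactly two double cosets $H \backslash \SC_{T+1} / H$, namely $H$ itself and $H(1\ 2)H$. The claim of the lemma is then just the statement that every $\pi \in \SC_{T+1}$ lies in one of these two cosets, with the exponent $a \in \{0,1\}$ recording which coset. I would prove this by an explicit, constructive case split on the value of $\pi(1)$ rather than by abstract double coset counting, since the construction is short and makes the decomposition algorithmically clear.

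First I would dispose of the easy case: if $\pi(1) = 1$, then $\pi$ fixes index $1$ and so $\pi \in H$, giving $\pi = e \cdot (1\ 2)^0 \cdot \pi$ with $\alpha = e$, $a = 0$, $\beta = \pi$. Next, for the nontrivial case $\pi(1) = j$ with $j \in \{2,\ldots,T+1\}$, I would use the transposition $\tau \coloneqq (2\ j) \in H$ (interpreting $\tau = e$ if $j = 2$) to normalize: $\tau \pi$ sends $1 \mapsto 2$. Premultiplying by $(1\ 2)$ then yields $\sigma \coloneqq (1\ 2)\tau\pi$, which fixes $1$ and therefore lies in $H$. Rearranging, and using that $\tau$ is an involution, gives
\begin{align}
\pi \;=\; \tau \cdot (1\ 2) \cdot \sigma,
\end{align}
so we may take $\alpha = \tau \in H$, $a = 1$, $\beta = \sigma \in H$. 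This exhausts all possible values of $\pi(1)$ and hence all $\pi \in \SC_{T+1}$.

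There is no serious obstacle here; the only thing to be careful about is the edge case $j = 2$, which I handle by letting $\tau$ be the identity so that $\tau \in H$ trivially. As an optional sanity check I would verify the cardinalities: the coset $H$ has size $T!$, while $H(1\ 2)H$ has size $|H|^2/|H \cap (1\ 2)H(1\ 2)| = (T!)^2/(T-1)! = T\cdot T!$, and these sum to $(T+1)! = |\SC_{T+1}|$, confirming that every element is covered exactly by one of the two forms and that the representative $a\in\{0,1\}$ is the correct one.
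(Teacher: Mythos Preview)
Your proof is correct and follows essentially the same approach as the paper: both argue by a case split on $\pi(1)$, using a permutation in $H$ to move $\pi(1)$ to $2$ in the nontrivial case and then peeling off $(1\ 2)$. Your version is slightly more explicit (you name the normalizing element as the transposition $(2\ j)$ and handle the edge case $j=2$), and your optional cardinality check is a nice addition not present in the paper.
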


\begin{proof}
Consider the action of \( \SC_{T+1} \) on index \(1\). Any permutation \( \pi \in \SC_{T+1} \) either fixes \(1\), in which case
\begin{align}
\pi \in \SC_T^{[2,T+1]},
\end{align}
or moves \(1\) to some \( j \in \{2, \dots, T+1\} \). In the latter case, choose \( \gamma \in \SC_T^{[2,T+1]} \) such that $\gamma(j) = 2.$ Then
\begin{align}
\gamma \circ \pi(1) &= \gamma(j) = 2 \implies 
(1\ 2) \circ \gamma \circ \pi(1)= 1.
\end{align}
So the permutation $\delta \coloneqq (1\ 2) \circ \gamma \circ \pi$ fixes index 1, hence $
\delta \in \SC_T^{[2,T+1]}.$ Solving for \( \pi \), we get
\begin{align}
\pi = \gamma^{-1} \cdot (1\ 2) \cdot \delta.
\end{align}
This gives the desired form with \( \alpha = \gamma^{-1} \), \( a = 1 \), and \( \beta = \delta \). If \( \pi \) already fixes index \(1\), then the decomposition holds trivially with
\begin{align}
a = 0, \quad \alpha = \pi, \quad \beta = e,
\end{align}
which concludes the proof.
\end{proof}
Permutations arise naturally when one considers their action on tensor products of quantum systems. Specifically, for $\pi \in \SC_n$, define 
\begin{align}\label{eq:action-of-sym-group}
    P_d(\pi) \ket{i_1} \otimes \dotsm \otimes \ket{i_n} &= \ket{i_{\pi^{-1}(1)}} \otimes \dotsm \otimes \ket{i_{\pi^{-1}(n)}},
\end{align}
to be the action of the symmetric group on $(\mathbb{C}^{d})^{\otimes n}$. From this definition, one can show that $P_d(e) = \mathbb{I}$ and for all $\pi,\sigma \in \SC_n$, we have $P_d(\pi\cdot \sigma)= P_d(\pi)P_d(\sigma)$. Together, these properties make $P_d(\pi)$ a \textit{unitary representation} of $\pi \in \SC_n$ on $(\mathbb{C}^{d})^{\otimes n}$. In the proofs below, when the local dimension is clear from context, we will simply let $\pi$ represent both the permutation and the unitary representation. 

In the case of $\SC_2$ we define $P_d((1\ 2))\eqqcolon\mathbb{F}$ to be the \textit{swap} operator. This operator has the following important property which will be utilized throughout the text.

\begin{lemma}[The Swap ``trick''] \label{lemma:swap-trick}
Given two linear operators on a finite-dimensional dimensional Hilbert space, $A$ and $B$, the following equality holds
\begin{align}
    \tr{\mathbb{F} \cdot A \otimes B} &= \tr{AB}.
\end{align}
This is commonly referred to as the swap ``trick.''
\end{lemma}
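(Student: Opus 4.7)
The plan is to prove the identity by a direct computation in a chosen orthonormal basis, exploiting the explicit action of $\mathbb{F}$ on product vectors. Fix an orthonormal basis $\{\ket{i}\}_{i=1}^{d}$ of the underlying Hilbert space $\HC$, so that $\{\ket{i}\ket{j}\}_{i,j}$ is an orthonormal basis of $\HC \otimes \HC$. By Eq.~\eqref{eq:action-of-sym-group} specialised to $\pi = (1\ 2) \in \SC_2$, the swap operator satisfies $\mathbb{F}\ket{k}\ket{l} = \ket{l}\ket{k}$, and hence its matrix elements are
\begin{align}
\bra{i}\bra{j}\mathbb{F}\ket{k}\ket{l} = \braket{i}{l}\braket{j}{k} = \delta_{i,l}\,\delta_{j,k}.
\end{align}

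Next, I expand the trace on the left-hand side in this basis and insert a resolution of the identity between $\mathbb{F}$ and $A\otimes B$:
\begin{align}
\tr{\mathbb{F}\cdot A\otimes B}
&= \sum_{i,j} \bra{i}\bra{j}\mathbb{F}(A\otimes B)\ket{i}\ket{j} \\
&= \sum_{i,j,k,l} \bra{i}\bra{j}\mathbb{F}\ket{k}\ket{l} \cdot \bra{k}\bra{l}(A\otimes B)\ket{i}\ket{j} \\
&= \sum_{i,j,k,l} \delta_{i,l}\,\delta_{j,k}\, A_{k,i}\, B_{l,j}.
\end{align}
The Kronecker deltas collapse the quadruple sum to a double sum over $i,j$, giving
\begin{align}
\tr{\mathbb{F}\cdot A\otimes B} = \sum_{i,j} A_{j,i} B_{i,j} = \sum_{i} (AB)_{i,i} = \tr{AB},
\end{align}
which is the desired identity.

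There is no real obstacle in this proof; the only thing to be careful about is to use the convention for $\mathbb{F}$ consistent with Eq.~\eqref{eq:action-of-sym-group} (i.e.\ $\mathbb{F}\ket{k}\ket{l}=\ket{l}\ket{k}$) so that the indices collapse correctly, and to note that the identity manifestly holds for arbitrary (not necessarily Hermitian or positive) linear operators $A,B$ on a finite-dimensional Hilbert space, since we never used any structural property of $A$ or $B$ beyond linearity. An equivalent, basis-free way to see the same identity is to note that $\mathbb{F}=\sum_{i,j}\ketbra{i}{j}\otimes\ketbra{j}{i}$ and then apply $\tr{(X\otimes Y)(A\otimes B)} = \tr{XA}\tr{YB}$ termwise, obtaining $\sum_{i,j} \bra{j}A\ket{i}\bra{i}B\ket{j} = \tr{AB}$; I would include this as a brief remark to make the proof as transparent as possible.
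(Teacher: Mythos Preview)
Your proof is correct and is precisely the ``explicit computation'' that the paper alludes to; the paper itself does not spell out the argument beyond remarking that it is standard. Your basis expansion and index collapse are exactly the intended verification, and the additional basis-free remark using $\mathbb{F}=\sum_{i,j}\ketbra{i}{j}\otimes\ketbra{j}{i}$ is a nice touch.
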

The proof of the swap trick is standard and can be verified by explicit computation. Next, we will also need several facts about the symmetric subspace~\cite{harrow2013Church,mele2024Introduction}, the subset of all quantum states that are permutationally invariant. The projector onto this subsapce is defined as
\begin{align}
    \Pi_{\rm sym}^{d,T} &= \frac{1}{T!} \sum_{\pi \in S_T} P_d(\pi),
\end{align}
with the dimension of the symmetric subspace given as
\begin{align}
   \tr{\Pi_{\rm sym}^{d,T}} &= \binom{d+T-1}{T}. 
\end{align}
From these basic definitions, one can show that for any $\pi \in S_T$, $P_d(\pi)\Pi_{\rm sym}^{d,T} = \Pi_{\rm sym}^{d,T} P_d(\pi) = \Pi_{\rm sym}^{d,T}$. 

Next, let us denote the unitary group $\UC_d \coloneqq \{U \in \LC(\mathbb{C}^d) : U^{\dagger}U = \mathbb{I}_d\}$. The unitary group admits a unique uniform measure called the \textit{Haar measure} which will allow us to take uniform averages over $\UC_d$. For a great review of the Haar measure with quantum information theorists in mind, see Ref.~\cite{mele2024Introduction}.

\begin{definition}[Haar Measure on $\UC_d$]\label{def:haar}
    The Haar measure on the unitary group $\UC_d$ is the unique probability measure $dU$ that is left- and right-invariant over $\UC_d$. That is, for all integrable functions $f$ and all $V \in \UC_d$, we have
    \begin{align}
        \int_{\UC_d} f(U) dU =  \int_{\UC_d} f(VU) dU =  \int_{\UC_d} f(UV) dU.
    \end{align}
\end{definition}
We note that for any (measurable) $S \subseteq \UC_d$, it follows that $\int_S 1 dU \geq 0$ and $\int_{\UC_d} 1 dU = 1$, which makes the Haar measure a probability measure. Of particular importance in this work, will be Haar random pure states, which can be obtained by simply applying a Haar random unitary to an arbitrary pure state.
\begin{definition}[Haar Random Pure State] \label{def:Haar-random-state}
Let $\mathbb{C}^d$ be a $d$-dimensional Hilbert space.  
A Haar random pure state on $\mathbb{C}^d$ is defined as
\[
\ket{\psi} = U \ket{\phi},
\]
where $\ket{\phi} \in \mathbb{C}^d$ is any fixed pure state and 
$U \in \UC_d$ is drawn from the unitary group according to the Haar measure.
\end{definition}

Taking a Haar average over tensor powers of such states will yield a state that is unitarily invariant (by the invariance of the Haar measure) and also permutationally invariant. The latter fact can be seen from the following result which we will utilize throughout.

\begin{lemma}[Average State Symmetric Subspace Projector~\cite{harrow2013Church}]\label{lem:sym-sub-proj} \index{Symmetric Subspace Projector}
    Let $\ket{\psi} \in \mathbb{C}^d$ and define $\psi \coloneqq \ketbra{\psi}{\psi}$. Then, we have
    \begin{align}
        \mathbb{E}_{\psi}[{\psi^{\otimes T}}]&=\frac{\Pi_{\rm sym}^{d,T}}{\dim{\Pi_{\rm sym}^{d,T}}}= \frac{\sum_{\pi \in S_T}P_d(\pi)}{d(d+1)\dotsm (d+T-1)},
    \end{align}
    where $\mathbb{E}_{\psi}[f(\psi)]$ denotes the average over Haar random pure states in $\mathbb{C}^d$.
\end{lemma}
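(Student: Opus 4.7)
The plan is to combine a simple symmetry observation with Schur's lemma on the symmetric subspace. Set $\rho \coloneqq \mathbb{E}_{\psi}[\psi^{\otimes T}]$. First I would note that $\rho$ is supported on the symmetric subspace: for every pure $\ket{\psi}$ and every $\pi \in S_T$, $P_d(\pi)\,\psi^{\otimes T} = \psi^{\otimes T} = \psi^{\otimes T}\,P_d(\pi)$, so averaging over $\ket\psi$ preserves this and yields $\Pi_{\rm sym}^{d,T}\,\rho\,\Pi_{\rm sym}^{d,T} = \rho$. Next, using the left-invariance of the Haar measure from Definition~\ref{def:haar}, I would show that $[\rho, U^{\otimes T}] = 0$ for every $U \in \UC_d$: writing $\ket{\psi} = V\ket{\phi_0}$ for some fixed reference state $\ket{\phi_0}$ with $V$ Haar-random, and changing variables $W = UV$,
\begin{align}
U^{\otimes T}\,\rho\,(U^{\dagger})^{\otimes T} &= \mathbb{E}_V\brak{(UV\ketbra{\phi_0}{\phi_0}V^{\dagger}U^{\dagger})^{\otimes T}} = \mathbb{E}_W\brak{(W\ketbra{\phi_0}{\phi_0}W^{\dagger})^{\otimes T}} = \rho.
\end{align}

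The key structural input is then that the symmetric subspace carries an irreducible representation of $\UC_d$ under the diagonal action $U \mapsto U^{\otimes T}$, a standard consequence of Schur--Weyl duality (equivalently, of the fact that the vectors $\{\ket{\phi}^{\otimes T}\}_{\ket{\phi} \in \mathbb{C}^d}$ already span the symmetric subspace). Combined with Schur's lemma and the support property from the previous paragraph, this forces $\rho = c\,\Pi_{\rm sym}^{d,T}$ for some scalar $c$. Taking the trace of both sides, $\tr{\rho} = 1$ and $\tr{\Pi_{\rm sym}^{d,T}} = \binom{d+T-1}{T}$ together pin down $c = 1/\binom{d+T-1}{T}$, giving the first equality. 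The second equality is then a direct substitution of $\Pi_{\rm sym}^{d,T} = \tfrac{1}{T!}\sum_{\pi \in S_T} P_d(\pi)$, after which the $T!$ cancels against the one implicit in $\binom{d+T-1}{T} = d(d+1)\cdots(d+T-1)/T!$.

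The main obstacle, if one wants a self-contained argument, is the appeal to Schur's lemma, since it rests on knowing the symmetric subspace is an irreducible $\UC_d$-module. A representation-theory-free alternative is to verify the identity directly on the overcomplete spanning set $\{\ket{\phi}^{\otimes T}\}$ of the symmetric subspace: compute $\bra{\phi}^{\otimes T}\rho\ket{\phi}^{\otimes T} = \mathbb{E}_{\psi}\brak{|\braket{\psi}{\phi}|^{2T}}$ via a standard Haar-moment calculation, compare with $\bra{\phi}^{\otimes T}\Pi_{\rm sym}^{d,T}\ket{\phi}^{\otimes T} = 1$, and invoke the fact that matrix elements on a spanning set determine an operator. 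Either route yields the result, but the Schur's-lemma route is cleaner and is the one I would write up.
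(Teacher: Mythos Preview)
Your argument is correct and is precisely the standard proof. Note, however, that the paper does not actually give its own proof of this lemma: it simply cites Proposition~6 of \cite{harrow2013Church} and Theorem~22 of \cite{mele2024Introduction}. The Schur's-lemma route you outline (support on the symmetric subspace plus $U^{\otimes T}$-invariance, then irreducibility forces proportionality to $\Pi_{\rm sym}^{d,T}$, then normalize by the trace) is exactly the argument used in those references, so there is nothing to compare---your write-up would serve as a self-contained replacement for the forward citation.
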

See Proposition 6 in Ref.~\cite{harrow2013Church} or Theorem 22 in Ref.~\cite{mele2024Introduction} for a proof of this result. With these quantum results in mind, we now turn to a brief review of some key topics in classical statistical learning theory that are especially relevant to the analysis of single-copy testing algorithms.

\subsection{Tools from Statistical Learning Theory}
In single-copy measurement protocols, we are given one copy of $\rho$ at a time and may measure in a basis we choose and then must store the outcome in classical memory before receiving a fresh copy of $\rho$. In this single-copy regime, we must work with classical distributions over measurement outcomes. Thus, it will be helpful to review some essential results from the vast body of statistical learning theory literature.   
\subsubsection{Distinguishing Probability Distributions}
\begin{definition}[Total Variation Distance]
    Let $\mu_p$ and $\mu_q$ be two probability measures over a domain $\Omega$. Then, the total-variation distance between $\mu_p$ and $\mu_q$, is defined as 
\begin{align}\label{eq:TV-distance-general}
    d_{\mathrm{TV}} (\mu_p,\mu_q) = \sup_{S \subseteq \Omega} |\mu_p(S)-\mu_q(S)|
\end{align}
where the sup is taken over well-behaved sets (i.e., measurable).
\end{definition}
Intuitively, the total variation (TV) distance captures the largest difference that two probability measures can take on any event. We will be working in countable probability spaces, within which the TV distance takes the simple form
\begin{align}
        d_{\mathrm{TV}}(p,q) = \frac{1}{2} \sum_{\omega \in \Omega} |p(\omega) -q(\omega)|,
\end{align} 
where $p,q$ denote the probability densities of the probability measures $\mu_p,\mu_q$. The TV distance is particularly important due to its operational interpretation as the maximum bias one can achieve when attempting to distinguish two distributions $p,q$. That is, the probability of successfully distinguishing $p$ and $q$ is upper bounded as
\begin{align}\label{eq:succ-prob}
    p_{\rm succ} \leq \frac{1}{2}+\frac{1}{2}d_{\rm TV}(p,q),
\end{align}
when one assumes a uniform prior. In addition to its important operational interpretation, the TV distance also forms a metric. Thus, it is symmetric, non-negative, vanishes only when $p=q$, and satisfies the triangle inequality 
\begin{align}
    d_{\rm TV}(p,q) \leq d_{\rm TV}(p,r) + d_{\rm TV}(r,q),
\end{align}
for any three distributions $p,q,r$ on the same probability space. For quantum information theorists, we note that this is simply the classical analogue of trace distance. A standard method of deriving lower bounds on the sample complexity of distinguishing classical distributions is to construct two ensembles over distributions that are hard to distinguish unless one uses sufficiently many samples. By assuming the existence of a testing or learning algorithm, we are assuming that $\Omega(1) \leq p_{\rm succ}$ (i.e., the success probability is at least constant). Such an assumption, along with Eq.~\eqref{eq:succ-prob}, implies that 
\begin{align}
\Omega(1) \leq d_{\rm TV}(p,q).
\end{align}
Thus, if we can prove an upper bound on the TV distance that decays to zero unless the number of samples is sufficiently large, we will obtain the desired sample complexity lower bound. As Ref.~\cite{chen2022Exponential} points out, directly proving an upper bound on the TV distance is essentially proving a two-sided bound which can be quite challenging. Fortunately, the following lemma will allow us to focus on the easier task of proving a one-sided bound. 

\begin{lemma}[One-sided Likelihood Ratio Bounds]\label{lem:one-sided-ratios} Let $0\leq \delta < 1$. Let $p,q$ be two probability distributions on a finite domain $\Omega$. Then, 
\begin{align}
\frac{p(\omega)}{q(\omega)} \geq 1- \delta \quad \forall~\omega \in \Omega \implies d_{\rm TV}(p,q) \leq \delta.
\end{align}
\end{lemma}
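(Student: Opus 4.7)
The plan is to exploit a standard one-sided reformulation of total variation distance together with the pointwise hypothesis. Recall that on a finite (or countable) domain, the TV distance admits the identity
\begin{align}
d_{\rm TV}(p,q) = \sum_{\omega \in \Omega} \max\bigl(q(\omega) - p(\omega),\, 0\bigr) = \sum_{\omega : q(\omega) > p(\omega)} \bigl(q(\omega) - p(\omega)\bigr),
\end{align}
which follows from the symmetric definition $\tfrac{1}{2}\sum_\omega |p(\omega) - q(\omega)|$ by splitting $\Omega$ into the sets $\{p > q\}$ and $\{p < q\}$ and noting that the two resulting sums are equal since both $p$ and $q$ are probability distributions. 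I would first record this identity, since it is the step that converts a two-sided quantity into a one-sided one and hence makes the one-sided hypothesis usable.

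Next, I would apply the hypothesis directly at each point: rearranging $p(\omega)/q(\omega) \geq 1 - \delta$ gives $q(\omega) - p(\omega) \leq \delta \, q(\omega)$ for every $\omega$ (and this is trivially true at points where $q(\omega) = 0$, using the convention that the ratio constraint is vacuous there). Plugging this into the one-sided identity yields
\begin{align}
d_{\rm TV}(p,q) \;=\; \sum_{\omega : q(\omega) > p(\omega)} \bigl(q(\omega) - p(\omega)\bigr) \;\leq\; \sum_{\omega : q(\omega) > p(\omega)} \delta\, q(\omega) \;\leq\; \delta \sum_{\omega \in \Omega} q(\omega) \;=\; \delta,
\end{align}
which is exactly the claim.

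There is no real obstacle here; the only subtlety to handle cleanly is the case where $q(\omega) = 0$ at some points (so the likelihood ratio is either undefined or infinite), which can be dispensed with in a sentence by restricting the sum to the support of $q$ and observing that the hypothesis then forces $p$ to also vanish there (up to the convention on $0/0$), so those points contribute nothing to $d_{\rm TV}$. With that remark the argument is complete.
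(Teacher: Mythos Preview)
Your proof is correct and follows essentially the same approach as the paper: both use the one-sided identity $d_{\rm TV}(p,q) = \sum_{\omega:q(\omega)\geq p(\omega)}(q(\omega)-p(\omega))$, then apply the pointwise bound $q(\omega)-p(\omega)\leq \delta\, q(\omega)$ and sum. Your additional remark about the $q(\omega)=0$ case is a nice touch the paper omits, but otherwise the arguments are identical.
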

\begin{proof}
    First, let us recall that TV distance is a metric and therefore symmetric, 
    \begin{align}
        d_{\rm TV}(p,q) &= \frac{1}{2} \sum_{\omega \in \Omega}|p(\omega)-q(\omega)| = \frac{1}{2} \sum_{\omega \in \Omega}|q(\omega)-p(\omega)|.
    \end{align}
Then, by breaking up the sum into the cases where $p(\omega) \geq q(\omega)$ (and vice-versa), one can show
\begin{align}
    \frac{1}{2} \sum_{\omega \in \Omega} \left|q(\omega) - p(\omega)\right| &= \sum_{\omega: q(\omega) \geq p(\omega)} q(\omega) - p(\omega).
\end{align}
Using this fact, the symmetry of TV distance, and the assumption $p(\omega) \geq (1-\delta)q(\omega)$, we obtain
\begin{align}
   d_{\rm TV}(p,q)  & = \frac{1}{2} \sum_{\omega \in \Omega}|q(\omega)-p(\omega)| ,\\
    &=\sum_{\omega: q(\omega) \geq p(\omega)} q(\omega) - p(\omega),\\
    &\leq \sum_{\omega: q(\omega) \geq p(\omega)} q(\omega) - (1-\delta)q(\omega),\\
    &\leq \sum_{\omega: q(\omega) \geq p(\omega)} \delta q(\omega),\\
     d_{\rm TV}(p,q) &\leq \delta,
\end{align}
as desired. 
\end{proof}
We now focus our attention on the form of the probability distributions over measurement outcomes that will result from a single-copy learning/testing algorithm.
\subsubsection{Learning with Restricted Measurements}
The setting we consider is motivated by near-term experiments that cannot store many identical copies of a quantum state at once. We imagine there is some experiment that can reliably produce copies of a state $\rho$. Due to our limited resources, we can only store one copy of $\rho$ at a time, so we must apply some single-copy POVM $\{E_k\}_k$ to obtain a classical outcome $k$ with probability $p_k = \tr{\rho E_k}$ before discarding the post-measurement state and requesting a fresh copy of the state. After $T$ rounds of this protocol, we must make our prediction of some property of $\rho$. We say the that $T$ is the \textit{sample complexity} of this learning or testing algorithm. Naturally, the goal is to minimize $T$ subject to various constraints.

The data we will use to make our prediction will be some classical string of outcomes generated through the experiment. Given an underlying state $\rho$, the probability of obtaining outcome $s_t$ in the $t$-th round of the experiment is $p^{\rho}(s_t)=\tr{d\cdot a_{s_t}\cdot \rho \ketbra{\psi_{s_t}}{\psi_{s_t}}}$, where we have used Lemma~\ref{lem:simulating-POVMs} to restrict our attention to rank-1 POVMs. By assumption, we are given an identical copy of $\rho$ in each successive round, so the probability of obtaining classical outcome string $\ell = s_1 \dotsm s_T$ is given by the product distribution
\begin{align}\label{eq:classical-outcome-distribution}
    p^{\rho}(\ell) &= \prod_{t=1}^T \tr{d\cdot a_{s_t}\cdot \rho \ketbra{\psi_{s_t}}{\psi_{s_t}}}.
\end{align}
If we are promised the underlying state is either $\rho$ or $\sigma$, our challenge becomes distinguishing $p^{\rho}(\ell)$ and $p^{\sigma}(\ell)$. Thus far, we have described the simplest setting in statistical learning theory: distinguishing two distributions (i.e., one-to-one distinguishing). However, one can often prove stronger lower bounds by considering many-to-one or many-to-many distinguishing tasks. In a more general scenario, we are tasked with distinguishing between two collections of states $\EC_1 = \{\rho_x\}_{x \sim \XC}$ and  $\EC_2=\{\rho_y\}_{y \sim \YC}$ which are sampled with respect to some specified distributions $\XC, \YC$. In this more general setting, an upper bound on the probability of distinguishing between $\EC_1$ and $\EC_2$ is given by \textit{Le-Cam's two-point method}~\cite{Yu1997}.
    \begin{lemma}[Le Cam's Two-point Method] \label{lem:LeCam-2pt}
    Let $\mathscr{L}(T)$ denote the set of all possible classical measurement outcome strings after $T$ iterations of a classical learning algorithm. Then the maximum probability $p_{\rm succ}$ of distinguishing $\EC_1 = \{\rho_x\}_{x \sim \XC}$ and $\EC_2=\{\rho_y\}_{y \sim \YC}$, obeys
\begin{align}
    p_{\rm succ} \leq \frac{1}{2}+ \frac{1}{4} \sum_{\ell \in \mathscr{L}(T)} |\mathbb{E}_x p^{\rho_x} (\ell) -  \mathbb{E}_y p^{\rho_y} (\ell)| \eqqcolon \frac{1}{2} + \frac{1}{2}d_{\rm TV}(\mathbb{E}_x p^{\rho_x} (\ell),\mathbb{E}_y p^{\rho_y} (\ell)).
\end{align}
\end{lemma}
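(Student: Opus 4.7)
The plan is to reduce Le Cam's bound to the standard Bayes-risk characterization of optimal distinguishability for a binary hypothesis test with a uniform prior. First I would set up the distinguishing task as follows: nature flips a fair coin to pick hypothesis $H_1$ or $H_2$, then samples $\rho_x \sim \XC$ or $\rho_y \sim \YC$, and the learner observes a single outcome string $\ell \in \mathscr{L}(T)$ drawn from the induced mixture $P_1(\ell) \coloneqq \mathbb{E}_x\, p^{\rho_x}(\ell)$ or $P_2(\ell) \coloneqq \mathbb{E}_y\, p^{\rho_y}(\ell)$. Because the learner's only access to $\rho$ is through the classical string $\ell$, by the data-processing structure the optimal distinguisher is a (possibly randomized) function $g : \mathscr{L}(T) \to \{1,2\}$; randomization buys nothing since success probability is linear in $g$, so it suffices to bound the best deterministic rule.

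Next I would observe that the maximum-a-posteriori rule, which outputs $1$ on $\ell$ when $P_1(\ell) \geq P_2(\ell)$ and $2$ otherwise, achieves the optimum, giving
\begin{align}
p_{\rm succ} \;=\; \frac{1}{2} \sum_{\ell \in \mathscr{L}(T)} \max\!\bigl(P_1(\ell),\, P_2(\ell)\bigr).
\end{align}
Applying the elementary identity $\max(a,b) = \tfrac{1}{2}(a+b) + \tfrac{1}{2}|a-b|$ pointwise and using that $\sum_\ell P_1(\ell) = \sum_\ell P_2(\ell) = 1$ (which follows from Eq.~\eqref{eq:classical-outcome-distribution} being a probability distribution for each fixed $\rho$ and from linearity of expectation), this simplifies immediately to
\begin{align}
p_{\rm succ} \;\leq\; \frac{1}{2} + \frac{1}{4} \sum_{\ell \in \mathscr{L}(T)} \bigl|P_1(\ell) - P_2(\ell)\bigr|,
\end{align}
which is exactly the claimed inequality once one recognizes the right-hand sum as $\tfrac{1}{2} d_{\rm TV}(P_1, P_2)$.

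The steps above are essentially formulaic; the one place where care is needed is in verifying that $P_1$ and $P_2$ are genuine probability distributions on $\mathscr{L}(T)$ even when the learner's strategy is \emph{adaptive}, i.e., when the rank-$1$ POVM applied at round $t$ depends on the previously observed outcomes $s_1, \ldots, s_{t-1}$. I would address this by noting that any adaptive single-copy protocol, once its branching rule is fixed, induces a well-defined conditional distribution $p^\rho(s_t \mid s_1, \ldots, s_{t-1}) = \tr{d \cdot a_{s_t}(s_{<t})\, \rho\, \ketbra{\psi_{s_t}(s_{<t})}{\psi_{s_t}(s_{<t})}}$, and the chain rule assembles these into a valid distribution on outcome strings, so Fubini applies and $P_i$ remains a probability measure. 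With that established, the Bayes-risk calculation carries through verbatim.

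The main (mild) obstacle is therefore not the algebra but the bookkeeping around adaptivity and the reduction to rank-$1$ POVMs; both are handled by invoking Lemma~\ref{lem:simulating-POVMs} to replace general POVMs by rank-$1$ ones with classical post-processing, and by absorbing the adaptive branching into the definition of $\mathscr{L}(T)$ as the set of full transcripts. No deep inequality is needed beyond $\max(a,b) = \tfrac{1}{2}(a+b+|a-b|)$.
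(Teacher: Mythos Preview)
Your argument is correct and is the standard Bayes-risk derivation of Le Cam's bound. There is nothing to compare against here: the paper does not supply its own proof of this lemma but simply states it as a known result with a citation, so your write-up fills in exactly what a reader would reconstruct. One cosmetic point: since $p_{\rm succ}$ is defined as the \emph{maximum} success probability and the MAP rule attains it, your chain of reasoning actually yields equality rather than the weaker $\leq$ in your final display; the inequality in the lemma statement is of course still valid.
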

With the absolute values, this quantity can be challenging to upper bound. To leverage results from the literature, we utilize the fact that the total variation distance is a metric and thus is symmetric and satisfies triangle inequality. We can then derive an upper bound on the desired total variation distance as 
    \begin{align}
        d_{\rm TV}(\mathbb{E}_x p^{\rho_x} (\ell),\mathbb{E}_y p^{\rho_y} (\ell)) \leq d_{\rm TV}(\mathbb{E}_x p^{\rho_x} (\ell),p^{\rho_{mm}}(\ell))+d_{\rm TV}(\mathbb{E}_y p^{\rho_y} (\ell),p^{\rho_{mm}}(\ell)),
    \end{align}
    where we have chosen the common reference state to be the maximally mixed state $\rho_{mm}$ because it will greatly simplify the analysis. It is then much simpler to derive one-sided bounds in the following manner.

    \begin{lemma}[One-sided Bound Suffices for Le Cam, Lem.~5.4~\cite{chen2022Exponential}]\label{lem:one-sided-LeCam} 
     Let $\mathscr{L}(T)$ denote the set of all possible classical measurement outcome strings after $T$ iterations of a classical learning algorithm. If for all $\ell\in \mathscr{L}(T)$, we have
    \begin{align}
        \frac{\mathbb{E}_x [p^{\rho_x} (\ell)]}{p^{\rho_{mm}}(\ell)} \geq 1-\delta, \quad 
    \end{align}
    then it follows that
    \begin{align}
        d_{TV}(\mathbb{E}_x p^{\rho_x}(\ell),p^{\rho_{mm}}(\ell))\leq \delta.
    \end{align}
\end{lemma}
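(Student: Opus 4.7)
The plan is to observe that this lemma is an immediate corollary of the One-sided Likelihood Ratio Bounds lemma (\lref{lem:one-sided-ratios}), applied to a pair of distributions on the finite outcome-string domain $\mathscr{L}(T)$. The only thing that needs verification before invoking \lref{lem:one-sided-ratios} is that the two objects being compared are genuinely probability distributions, so that the total-variation distance is well-defined and the earlier lemma applies verbatim.

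First I would check that $\mathbb{E}_x[p^{\rho_x}(\cdot)]$ is a probability distribution over $\mathscr{L}(T)$. Each $p^{\rho_x}(\ell)$ is the product distribution defined in Eq.~\eqref{eq:classical-outcome-distribution}, which is nonnegative and normalized to $1$ on $\mathscr{L}(T)$. Taking an expectation over $x\sim\XC$ is a convex combination of such distributions, so by linearity of expectation, $\mathbb{E}_x[p^{\rho_x}(\cdot)]$ is also nonnegative and normalized. The reference distribution $p^{\rho_{mm}}(\ell)$ is a probability distribution by the same reasoning applied to the single state $\rho_{mm}$.

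Next, I would set $p(\ell)\coloneqq \mathbb{E}_x[p^{\rho_x}(\ell)]$ and $q(\ell)\coloneqq p^{\rho_{mm}}(\ell)$, note that the hypothesis of the present lemma is exactly $p(\ell)/q(\ell)\geq 1-\delta$ for all $\ell\in\mathscr{L}(T)$, and apply \lref{lem:one-sided-ratios} with $\Omega = \mathscr{L}(T)$ to conclude $d_{\rm TV}(p,q)\leq \delta$. This is the claimed inequality.

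There is essentially no obstacle here beyond the routine bookkeeping above: the content of the statement is the same as \lref{lem:one-sided-ratios}, specialized to mixture distributions over classical outcome strings. The only small point to keep in mind is the role assignment in the ratio — one must place $\mathbb{E}_x[p^{\rho_x}(\ell)]$ in the numerator of the hypothesis so that it matches the ``$p(\omega)\geq(1-\delta)q(\omega)$'' form required by the earlier lemma; the symmetry of $d_{\rm TV}$ (already invoked within the proof of \lref{lem:one-sided-ratios}) then delivers the conclusion regardless of which distribution is viewed as the reference.
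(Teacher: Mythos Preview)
Your proposal is correct and matches the paper's approach: the lemma is stated there as a direct instantiation of \lref{lem:one-sided-ratios} with $p=\mathbb{E}_x[p^{\rho_x}]$ and $q=p^{\rho_{mm}}$ on the finite domain $\mathscr{L}(T)$, and the paper does not even spell out a separate proof beyond this identification. Your verification that the mixture $\mathbb{E}_x[p^{\rho_x}]$ is a bona fide probability distribution is the only bookkeeping needed, and you handle it correctly.
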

If we then find $\delta_1$ such that $d_{TV}(\mathbb{E}_x p^{\rho_x}(\ell),p^{\rho_{mm}}(\ell)) \leq \delta_1$ and $\delta_2$ such that $d_{TV}(\mathbb{E}_y p^{\rho_y}(\ell),p^{\rho_{mm}}(\ell)) \leq \delta_2$, then our overall bound on the success probability in this many-to-many distinguishing task will be 
\begin{align}
    p_{\rm succ} \leq \max{\{\delta_1,\delta_2\}}.
\end{align}
Combining this upper bound with the assumption that a testing/learning algorithm exists yields 
\begin{align}
    \Omega(1) \leq \max{\{\delta_1,\delta_2\}},
\end{align}
where the dependence on $n,T$, and other parameters is suppressed in $\delta$. The goal is to prove that this upper bound decays to zero (contradicting the existence of a tester) unless sufficiently many samples is taken. To derive strong upper bounds on the total variation distances, we must prove strong lower bounds on the one-sided likelihood ratios in Lemma~\ref{lem:one-sided-LeCam}. Our main technical contribution is relating these likelihood ratios to permanents of Gram matrices and then utilizing the vast mathematical literature on this topic to lower bound the desired likelihood ratios. As such, we now review a few facts about permanents of PSD matrices. 
\subsection{Lower Bounds on Permanents of Positive Semi-Definite Matrices}
In this work, we will prove lower bounds on certain likelihood ratios that arise in the analysis of single-copy learning algorithms. In several instances, we will relate these quantities to permanents of Gram matrices, so we now recall the definition of the permanent.
\begin{definition}[Permanent]\label{def:permanent}
    The permanent of an $n$-by-$n$ matrix $A\coloneqq  (A_{ij})$ is defined as 
\begin{align}
    \mathrm{per}(A) = \sum_{\pi \in \SC_n} \prod_{i=1}^n A_{i\pi(i)}. 
\end{align}
\end{definition}
With this definition in place, we can state a useful lower bound on the permanent of PSD matrices with unit diagonal.
\begin{lemma}[Permanent of Correlation Matrix Lower Bound, Ref.~\cite{grone1988Permanental}]\label{lem:perm-cor-LB}
Let $A$ be an $n$-by-$n$ positive semi-definite matrix with $|a_{ii}|=1$ for all $1\le i \le n$. Then,
\begin{align}
    \mathrm{per}(A)\ge \frac{1}{n}\norm{A}_{F}^2
\end{align}
where $\norm{A}_F\coloneqq \sqrt{\tr{A^{\dagger}A}}$ is the Frobenius norm of A.
\end{lemma}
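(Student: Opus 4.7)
The plan is to reduce the permanental inequality to a lower bound on the norm of a symmetric-tensor projection and then close the gap via a Cauchy--Schwarz argument with a well-chosen auxiliary vector. Since $A$ is positive semi-definite with unit diagonal, I would first write $A = V^\dagger V$, where $V = [v_1 \mid \cdots \mid v_n]$ has unit-norm columns $v_1,\ldots,v_n$ satisfying $\langle v_i, v_j\rangle = a_{ij}$. Using $\Pi_{\rm sym}^{d,n} = \frac{1}{n!}\sum_{\pi \in \SC_n} P_d(\pi)$ together with $\langle w, P_d(\pi) w\rangle = \prod_i a_{i,\pi(i)}$, where $w \coloneqq v_1 \otimes \cdots \otimes v_n$, one obtains the standard identity
\begin{equation*}
\mathrm{per}(A) \;=\; n! \cdot \bigl\| \Pi_{\rm sym}^{d,n}\, w \bigr\|^2,
\end{equation*}
reducing the claim to showing $\|\Pi_{\rm sym}^{d,n}\, w\|^2 \geq \|A\|_F^2 / (n \cdot n!)$.

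Next, I would apply Cauchy--Schwarz in the symmetric subspace, using
\begin{equation*}
\bigl\|\Pi_{\rm sym}^{d,n}\, w\bigr\|^2 \;\geq\; \frac{|\langle \phi, w\rangle|^2}{\|\phi\|^2}
\end{equation*}
for a carefully chosen symmetric $\phi$, and arranging matters so that $|\langle \phi, w\rangle|^2$ assembles into a multiple of $\|A\|_F^2 = \sum_{i,j}|a_{ij}|^2$ while $\|\phi\|^2$ grows only linearly in $n$. A natural ansatz is a symmetrization of a sum over ordered pairs $(i,j)$ in which two slots of $w$ are replaced and swapped; the elementary identity $\langle v_i \otimes v_j,\, v_j \otimes v_i\rangle = |a_{ij}|^2$ should then surface the Frobenius norm in the numerator.

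As an alternative, I would try induction on $n$. The base cases $n = 1, 2$ are immediate from direct computation. For the inductive step, I would partition $\SC_n$ according to the cycle containing index $1$ to obtain
\begin{equation*}
\mathrm{per}(A) \;=\; \mathrm{per}(A[\{1\}^c]) \;+\; \sum_{j \neq 1} |a_{1j}|^2\, \mathrm{per}(A[\{1,j\}^c]) \;+\; R,
\end{equation*}
where $R$ collects the contributions from cycles of length $\geq 3$ through index $1$. The first two groupings admit clean lower bounds from the inductive hypothesis and from Marcus's theorem $\mathrm{per}(B) \geq \prod_i b_{ii}$ for PSD $B$ with unit diagonal.

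The main obstacle will be controlling $R$: its summands carry complex phases --- a $3$-cycle contributes $2\,\mathrm{Re}(a_{1i}a_{ij}\overline{a_{1j}})$, for instance --- and can be negative. Closing the bound requires exploiting the full set of PSD constraints, i.e.\ non-negativity of \emph{all} principal minors, to couple the off-diagonals tightly enough that $R$ cannot pull the total below $\|A\|_F^2/n$. The inequality is already tight for $A = (1-\alpha)\mathbb{I} + \alpha J$ at $\alpha = -1/(n-1)$, which lies on the boundary of the PSD cone; this signals that the argument must extract essentially every bit of rigidity offered by positive semi-definiteness, and cannot rely on a loose triangle-inequality style bound on $R$.
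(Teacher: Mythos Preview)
The paper does not supply a proof of this lemma; it is quoted verbatim from Grone's 1988 paper and used as a black box. So there is no ``paper's own proof'' to compare against, and your proposal should be judged on its own.

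Neither of your two routes is complete as written. In the Cauchy--Schwarz approach, the crucial step is the choice of $\phi$, and your ``natural ansatz'' does not work: if $\phi$ is any linear combination of vectors of the form $P_d(\tau)\,w$ with $\tau\in\SC_n$ (in particular, $w$ with two slots swapped), then $\Pi_{\rm sym}^{d,n}\phi$ is a scalar multiple of $\Pi_{\rm sym}^{d,n} w$, because the projector absorbs permutations. The Cauchy--Schwarz bound then becomes $\|\Pi_{\rm sym}^{d,n}w\|^2 \ge |\langle \phi,w\rangle|^2 / (c^2\|\Pi_{\rm sym}^{d,n}w\|^2)$ for some constant $c$, which is circular. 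To make this route go through you would need a symmetric $\phi$ whose norm you can control \emph{independently} of $\mathrm{per}(A)$ --- for instance, vectors built from repeated tensor factors $v_k^{\otimes n}$ or from an auxiliary basis --- and it is not clear your sketch identifies one that produces $\|A\|_F^2$ in the numerator.

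Your induction route correctly isolates the difficulty: the identity-plus-transposition contribution already equals $1+\sum_{i<j}|a_{ij}|^2 = (\|A\|_F^2 - n + 2)/2 \ge \|A\|_F^2/n$, so the entire content of the inequality is that the longer-cycle remainder $R$ cannot drag the total below this. You flag this but do not resolve it, and your observation about tightness at the PSD boundary is exactly why a naive bound on $R$ will fail. As it stands, the proposal is an outline of two plausible attacks, each with its hard step left open.
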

In this work, we will want to specifically apply this lemma to the permanent of a Gram matrix. We state this result in the following corollary.
\begin{corollary}[Permanent of Gram Matrix Lower Bound] \label{cor:perm-gram-LB}
Let $\{\ket{\psi_t}\}_{t=1}^T$ be a collection of $T$ pure qudit states and define $G_{tt'} \coloneqq  \braket{\psi_t|\psi_{t'}}$ to be the Gram matrix corresponding to this set of vectors. Then,
\begin{align}
    \mathrm{per}(G) \geq \begin{cases}
        1,  &T\leq d \\
        \frac{T}{d},  &T>d
    \end{cases}
\end{align}
\end{corollary}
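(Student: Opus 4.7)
The plan is to apply Lemma~\ref{lem:perm-cor-LB} directly to $G$, which is legal because the Gram matrix of unit vectors is PSD with unit diagonal ($G_{tt} = \braket{\psi_t|\psi_t} = 1$). This immediately yields
\begin{align}
\mathrm{per}(G) \;\geq\; \frac{1}{T} \, \norm{G}_F^2 \;=\; \frac{1}{T} \sum_{t,t'=1}^{T} \bigl|\braket{\psi_t|\psi_{t'}}\bigr|^2,
\end{align}
so the entire proof is reduced to lower bounding $\norm{G}_F^2$ in the two regimes.

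For the easy case $T \leq d$, I would just keep the diagonal terms of the double sum: $\sum_{t,t'} |\braket{\psi_t|\psi_{t'}}|^2 \geq \sum_t |\braket{\psi_t|\psi_t}|^2 = T$, giving $\mathrm{per}(G) \geq T/T = 1$ with no further work.

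For the harder case $T > d$, the cheap diagonal bound only gives $1$, which is weaker than the desired $T/d > 1$. The key trick is to pass from the $T\times T$ Gram matrix to the $d\times d$ operator $M \coloneqq \sum_{t=1}^{T} \ketbra{\psi_t}{\psi_t}$. Writing $V$ for the $d \times T$ matrix whose columns are the $\ket{\psi_t}$, we have $G = V^\dagger V$ and $M = V V^\dagger$, so these share nonzero spectrum and
\begin{align}
\norm{G}_F^2 \;=\; \tr{G^2} \;=\; \tr{M^2}.
\end{align}
Since $\tr{M} = \sum_t \braket{\psi_t|\psi_t} = T$ and $M$ is a PSD operator on $\mathbb{C}^d$ (so $\mathrm{rank}(M)\le d$), Cauchy--Schwarz gives $\tr{M^2} \geq (\tr{M})^2/\mathrm{rank}(M) \geq T^2/d$. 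Plugging back in yields $\mathrm{per}(G) \geq T^2/(Td) = T/d$, as claimed.

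There is no real obstacle here beyond noticing that the right way to exploit the constraint $T > d$ is to relate $\norm{G}_F^2$ to a trace over the smaller $d$-dimensional space; this is where the factor of $d$ must enter, and the argument above makes it appear in exactly one line via the rank-based Cauchy--Schwarz inequality.
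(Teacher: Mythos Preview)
Your proof is correct and essentially identical to the paper's. Both apply Lemma~\ref{lem:perm-cor-LB} and then lower bound $\|G\|_F^2$ via the diagonal for $T\le d$ and via Cauchy--Schwarz against the rank for $T>d$; the only cosmetic difference is that you pass to the $d\times d$ operator $M=VV^\dagger$ before invoking Cauchy--Schwarz, whereas the paper applies it directly to the eigenvalues of $G$ and bounds $\mathrm{rank}(G)\le d$ (and only afterwards remarks on the frame operator $S=XX^\dagger$ and the shared spectrum).
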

\begin{proof}
    By definition, the Gram matrix is a $T$-by-$T$ Hermitian, PSD matrix that satisfies $G_{tt}=1$. Thus, Lemma~\ref{lem:perm-cor-LB} yields
    \begin{align}
        \mathrm{per}(G)\ge \frac{1}{T}\norm{G}_{F}^2
    \end{align}
We can then lower bound $\norm{G}_F^2$ as follows. First, observe that
\begin{align}
    \norm{G}_F^2 = \tr{G^2}= \sum_{i,j=1}^T |\braket{\psi_i|\psi_j}|^2 = \sum_{i=j}^T \norm{\psi_i}^2+\sum_{i\ne j}^T |\braket{\psi_i |\psi_j}|^2 = T+\sum_{i\ne j}^T |\braket{\psi_i |\psi_j}|^2
\end{align}
thus
\begin{align}
    \mathrm{per}(G) \ge \frac{1}{T}\norm{G}_F^2 = 1 + \frac{1}{T}\sum_{i\ne j}^T |\braket{\psi_i |\psi_j}|^2.
\end{align}
Since the above right hand term is always non-negative, for $T\le d$, we obtain the unit lower bound by dropping this term. Note that an orthonormal set of vectors will result in this term becoming zero, so the unit lower bound is the best we can obtain via Lemma~\ref{lem:perm-cor-LB} in this regime and is further tight since the permanent of a Gram matrix of orthonormal vectors is 1. Now, suppose that $T>d$, then the term on the right-hand side must be strictly positive. Recall that $\tr{G}=T=\sum_{k=1}^r \lambda_k$ where $r=\mathrm{rank}(G)=\mathrm{span}_\mathbb{C}\cur{\ket{\psi_t}}$ and $\lambda_k$ are the eigenvalues of $G$. Then observe that
\begin{align}
    T^2=\tr{G}^2=\pren{\sum_{k=1}^r 1\cdot \lambda_k}^2 \le \pren{\sum_{k=1}^r 1}\pren{\sum_{k=1}^r \lambda_k^2}=r\norm{G}_F^2\le d\norm{G}_F^2,
\end{align}
where we first use Cauchy-Schwarz and then the fact that $r\leq d$. Combining this with Lemma~\ref{lem:perm-cor-LB}, we obtain $\mathrm{per}(G) \geq \frac{T}{d}$.
\end{proof}
As a note, the above inequalities lower bounding $\norm{G}_F^2$ can be saturated in the following way: first $r=d$ given that $\mathrm{span}_\mathbb{C}\cur{\ket{\psi_t}}=\HC$, i.e., the vectors span the full space. Second, Cauchy-Schwarz is saturated whenever the vectors are a scalar multiple of each other which implies here that $\lambda_k = c1$ for all $k$. We can directly solve for $c$ using the left hand side of the above equation leading to $c=\lambda_k=T/d$. Recall that the Gram matrix can be rewritten as $G=X^\dagger X$ where $X=\brak{\ket{\psi_1}\ \cdots \ket{\psi_t}}$ is the $d\times T$ matrix with $\ket{\psi_t}$ as the $t$-th column. Now define the $d\times d$ matrix $S=XX^\dagger= \sum_{t=1}^T \ketbra{\psi_t}{\psi_t}$ called the frame operator. Then by the singular value theorem for $X=U\Sigma V^\dagger$ we have that $G=V \Sigma^\dagger \Sigma V^\dagger$ and $S=U\Sigma \Sigma^\dagger U^\dagger$, thus $G$ and $S$ have the same spectrum (up to zero entries). Crucially, we have that the above inequality saturates if and only if $S=\tfrac{T}{d}\mathbb{I}$ which occurs if and only if $\{\ket{\psi_t}\}_{t=1}^T$ forms a tight frame (by Thm 1.3.1 of~\cite{christensen_introduction_2016}, where Chapter 1 gives an introduction to frames). Tight frames have many important implications for POVMs in QIT (see~\cite{scott_tight_2006}). 

The remainder of our paper is organized as follows. In Sec.~\ref{sec:distinguishing-LBs} we prove the hardness of several many-to-one distinguishing tasks that we will, in Sec.~\ref{sec:product-testing-LBs}, use to prove lower bounds on the sample complexity of product testing algorithms that only utilize single-copy measurements. Finally, in Sec.~\ref{sec:MP-testing-alg}, we provide a simple algorithm for product testing with single-copy, local measurements (i.e., measurements only on individual qudits) that is optimal for $n=2$ and, we conjecture, optimal up to log factors for all $n>2$.

\section{Hardness of Distinguishing Random Pure States from the Maximally Mixed State} \label{sec:distinguishing-LBs}
In this section, we will prove several results regarding the hardness of distinguishing certain ensembles of random states from the maximally mixed state when restricted to single-copy measurements. In our case, we are interested in these results because they will imply sample lower bounds on the entanglement testing tasks of interest. However, the technical lemmas may be of independent interest. We begin by revisiting a distinguishing problem used in Ref.~\cite{chen2022Exponential} to prove an exponential lower bound for purity testing.

\subsection{Hardness of Distinguishing a Random Pure State from Maximally Mixed}

It was shown in Ref.~\cite{chen2022Exponential} that distinguishing a Haar random pure state from the maximally mixed state requires exponentially many samples. Because our main results will utilize this result, and because it provides intuition for our generalization of the result, we will re-state the lemma and provide an alternative proof by utilizing a lower bound on the permanent of a Gram matrix~\cite{grone1988Permanental}. This technique will make the conditions for equality explicit, which we will utilize to show that the lower bound cannot be tightened further even if one restricts their single-copy measurements further (e.g. by forcing them to be fully-product POVMs). Our proof simplifies the proof of a technical lemma from Ref.~\cite{chen2022Exponential}.
\begin{lemma}[Lemma 5.12 in Ref.~\cite{chen2022Exponential}] \label{lem:perm-overlap-LB}
    For any collection of pure states $\ket{\psi_1}, \dots, \ket{\psi_T} \in \mathbb{C}^d$, 
    \begin{align}
        \sum_{\pi \in S_T} \tr{P_d(\pi) \bigotimes_{t=1}^T \ketbra{\psi_t}{\psi_t} } \ge 1.
    \end{align}
\end{lemma}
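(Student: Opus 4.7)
The plan is to recognise the sum on the left-hand side as the permanent of the Gram matrix of the vectors $\ket{\psi_1},\dots,\ket{\psi_T}$, at which point the bound becomes an immediate consequence of Corollary~\ref{cor:perm-gram-LB}.

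First I would simplify the individual summand. Using the defining action of the permutation representation from Eq.~\eqref{eq:action-of-sym-group},
\begin{align*}
P_d(\pi)\bigotimes_{t=1}^T \ket{\psi_t} \;=\; \bigotimes_{t=1}^T \ket{\psi_{\pi^{-1}(t)}},
\end{align*}
so that
\begin{align*}
\tr{P_d(\pi)\bigotimes_{t=1}^T \ketbra{\psi_t}{\psi_t}} \;=\; \prod_{t=1}^T \braket{\psi_t \mid \psi_{\pi^{-1}(t)}}.
\end{align*}

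Second, re-index the sum over $\SC_T$ by the bijection $\sigma = \pi^{-1}$. Writing $G$ for the Gram matrix with entries $G_{tt'} \coloneqq \braket{\psi_t \mid \psi_{t'}}$, and using Definition~\ref{def:permanent}, this gives
\begin{align*}
\sum_{\pi \in \SC_T} \tr{P_d(\pi)\bigotimes_{t=1}^T \ketbra{\psi_t}{\psi_t}} \;=\; \sum_{\sigma \in \SC_T} \prod_{t=1}^T G_{t\,\sigma(t)} \;=\; \mathrm{per}(G).
\end{align*}

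Third, invoke Corollary~\ref{cor:perm-gram-LB}. Since $G$ is a Gram matrix of pure unit vectors, it is positive semi-definite with unit diagonal, so $\mathrm{per}(G) \ge 1$ when $T \le d$, and $\mathrm{per}(G) \ge T/d \ge 1$ when $T > d$. In either regime we obtain $\mathrm{per}(G) \ge 1$, which is precisely the claim. There is no real obstacle here: the only observation that requires care is the reindexing $\sigma = \pi^{-1}$ to match the standard definition of the permanent, and the rest is a direct application of the cited lemma. As a side benefit, this presentation makes the equality case transparent: by the discussion following Corollary~\ref{cor:perm-gram-LB}, saturation occurs exactly when the $\{\ket{\psi_t}\}$ are orthonormal (for $T \le d$) or form a tight frame (for $T > d$).
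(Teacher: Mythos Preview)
Your proof is correct and follows essentially the same route as the paper: rewrite each summand as a product of inner products via the permutation action, recognise the sum as $\mathrm{per}(G)$ for the Gram matrix $G$, and invoke the permanent lower bound (the paper cites Lemma~\ref{lem:perm-cor-LB} directly, you cite its specialisation Corollary~\ref{cor:perm-gram-LB}, which is equivalent here). One small caveat on your closing remark: for $T>d$ the bound $\mathrm{per}(G)\ge 1$ is strict since $\mathrm{per}(G)\ge T/d>1$, and the tight-frame condition discussed after Corollary~\ref{cor:perm-gram-LB} only saturates the intermediate inequality $\|G\|_F^2\ge T^2/d$, not necessarily $\mathrm{per}(G)=T/d$.
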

\begin{proof}
    Observe
    \begin{align}
        \sum_{\pi \in S_T} \tr{P_d(\pi) \bigotimes_{t=1}^T \ketbra{\psi_t}{\psi_t} } &=   \sum_{\pi \in S_T} \tr{ \bigotimes_{t=1}^T \ketbra{\psi_{\pi^{-1}(t)}}{\psi_t} }, \quad &\text{Eq.}~\eqref{eq:action-of-sym-group}\\ 
        &= \sum_{\pi \in S_T} \prod_{t=1}^T \braket{\psi_t | \psi_{\pi^{-1}(t)}},\quad &\text{cyclicity}\\
        &= \sum_{\pi \in S_T} \prod_{t=1}^T G_{t \pi^{-1}(t)}, \quad &G_{ij}\coloneqq \braket{\psi_i | \psi_j} \\
        &= \mathrm{per}(G), \quad &\text{Def.}~\ref{def:permanent}
    \end{align}
    where we have identified the penultimate line as the permanent of a the Gram matrix of $\{\ket{\psi_t}\}_{t=1}^T$. The Gram matrix is a $T$-by-$T$ positive semi-definite, Hermitian matrix with unit diagonal (i.e., $G_{tt} = \braket{\psi_t|\psi_t}=1$), thus by Lemma~\ref{lem:perm-cor-LB} we have that
     \begin{align}
        \sum_{\pi \in S_T} \tr{P_d(\pi) \bigotimes_{t=1}^T \ketbra{\psi_t}{\psi_t} } &=  \mathrm{per}(G) \ge 1,
    \end{align}
    as desired. Moreover, note that equality can always be reached when $T \le d$ by choosing $\{\ket{\psi_t}\}$ to be an orthonormal set in $\mathbb{C}^d$. In this case $G_{ij} = \delta_{ij}$ and equality can be reached as long as $T \le d$.
\end{proof}
With this technical lemma in place, we can re-state one of the main theorems from Ref.~\cite{chen2022Exponential}. The result simply states that, using single-copy measurements, the sample complexity of any algorithm used to distinguish between a Haar random pure state and the maximally mixed state must scale with the dimension of the state. If one had a single-copy purity tester, they could solve this distinguishing task, thus the lower bound on the distinguishing task immediately implies a lower bound on any single-copy purity tester. In our case, we will be using this as an intermediate step towards proving lower bounds on entanglement testing tasks; however, our careful analysis of this result and the relevant related literature led to an open question that we will highlight below.
\begin{proposition}[Random Pure State versus Maximally Mixed State, Thm. 5.11~\cite{chen2022Exponential}] \label{prop:max-mixed-vs-haar} \index{Purity Testing Lower Bound}
    Any learning algorithm utilizing (potentially adaptive) single-copy measurements requires
    \begin{align}
        T \geq \Omega(d^{1/2}) 
    \end{align}
    copies of $\rho\in \DC(\mathbb{C}^d)$ to distinguish (with probability at least 2/3) whether $\rho$ is a Haar random pure state or the maximally mixed state. Equivalently, one can say any such algorithm will have bias $b \le O\pren{\frac{T^2}{d}}$.
\end{proposition}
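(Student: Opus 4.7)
The plan is to invoke Le Cam's two-point method (\lref{lem:LeCam-2pt}) with ensembles $\EC_1 = \{\rho_{mm}\}$ (a point mass) and $\EC_2$ the Haar measure on pure states, and then apply the one-sided reduction (\lref{lem:one-sided-LeCam}) so that the analysis collapses to lower bounding the single ratio $\mathbb{E}_{\psi}[p^{\psi}(\ell)]/p^{\rho_{mm}}(\ell)$ for each transcript $\ell$. By \lref{lem:simulating-POVMs} it suffices to consider rank-$1$ POVMs, so at stage $t$ the algorithm applies an element $d \cdot a_{s_t} \ketbra{\psi_{s_t}}{\psi_{s_t}}$ whose parameters $(a_{s_t}, \ket{\psi_{s_t}})$ are allowed to depend on the previous outcomes $s_1, \ldots, s_{t-1}$ (this is what accommodates adaptivity).

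For the maximally mixed reference, \eqref{eq:classical-outcome-distribution} telescopes to $p^{\rho_{mm}}(\ell) = \prod_{t=1}^T a_{s_t}$, while linearity of expectation together with \lref{lem:sym-sub-proj} gives
\begin{align}
\mathbb{E}_\psi[p^\psi(\ell)] &= d^T \prod_{t=1}^T a_{s_t} \cdot \tr{ \mathbb{E}_\psi[\psi^{\otimes T}] \bigotimes_{t=1}^T \ketbra{\psi_{s_t}}{\psi_{s_t}} } \\
&= \frac{d^T \prod_{t=1}^T a_{s_t}}{\prod_{t=0}^{T-1}(d+t)} \sum_{\pi \in S_T} \tr{ P_d(\pi) \bigotimes_{t=1}^T \ketbra{\psi_{s_t}}{\psi_{s_t}} }.
\end{align}
The crucial step is then to invoke \lref{lem:perm-overlap-LB}, which bounds the permutation sum from below by $1$ uniformly over the (arbitrary, adaptively chosen) collection $\{\ket{\psi_{s_t}}\}$. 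This uniformity is exactly what lets the argument pass through the entire adaptive measurement tree, since the one-sided ratio bound only has to hold pointwise in $\ell$. Combining the two displays yields
\begin{align}
\frac{\mathbb{E}_\psi[p^\psi(\ell)]}{p^{\rho_{mm}}(\ell)} \;\geq\; \frac{d^T}{\prod_{t=0}^{T-1}(d+t)} \;=\; \prod_{t=0}^{T-1}\!\left(1 + \frac{t}{d}\right)^{\!-1} \;\geq\; 1 - \frac{T(T-1)}{2d},
\end{align}
where the last step uses $\prod_t(1-x_t)\geq 1-\sum_t x_t$ with $x_t = t/(d+t) \leq t/d$.

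Setting $\delta = T^2/(2d)$ in \lref{lem:one-sided-LeCam} yields $d_{\mathrm{TV}}\bigl(\mathbb{E}_\psi p^\psi,\, p^{\rho_{mm}}\bigr) \leq T^2/(2d)$, and then \lref{lem:LeCam-2pt} produces $p_{\rm succ} \leq \tfrac{1}{2} + O(T^2/d)$. This directly gives the claimed bias bound $b \leq O(T^2/d)$, and demanding $p_{\rm succ} \geq 2/3$ forces $T = \Omega(\sqrt{d})$. The only substantive input is the permanent bound \lref{lem:perm-overlap-LB}, which the excerpt has already established via the Grone--Merris-type estimate in \lref{lem:perm-cor-LB}; everything else is symbolic manipulation. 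I anticipate no real obstacle beyond keeping the adaptive bookkeeping clean, and the pointwise-in-$\ell$ nature of the one-sided likelihood bound makes the adaptive case no harder than the non-adaptive one.
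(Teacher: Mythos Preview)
Your proposal is correct and follows essentially the same route as the paper: compute the likelihood ratio against $\rho_{mm}$, apply \lref{lem:sym-sub-proj} to reduce to a permutation sum, invoke \lref{lem:perm-overlap-LB} to bound that sum by $1$, and then use \lref{lem:one-sided-LeCam}. The only cosmetic difference is that you bound $\prod_t(1+t/d)^{-1}$ via the Bernoulli-type inequality $\prod(1-x_t)\geq 1-\sum x_t$, whereas the paper passes through $\log(1+x)\leq x$ and $e^{-x}\geq 1-x$; both land on the same $1-T(T-1)/(2d)$.
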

\begin{proof}
    We will consider the many-versus-one distinguishing task with null and alternative hypotheses given as
    \begin{align}
       \rho_{mm} = \frac{\mathbb{I}}{d} \quad \text{versus} \quad \{ \ketbra{v}{v} = V \ketbra{0}{0} V^{\dagger}: V \sim \UC_d  \},
    \end{align}
    where we recall from Def.~\ref{def:Haar-random-state} that $\ket{v}$ is simply a Haar random pure state. After $T$ rounds of single-copy measurements, we would like to distinguish between $p^{\rho_{mm}}(\ell)$ and $\mathbb{E}_v[p^{\ketbra{v}{v}}(\ell)]$, with high probability. To show that this is impossible unless we take $T$ to be sufficiently large, we must lower bound the one-sided likelihood ratio. Starting with Eq.~\eqref{eq:classical-outcome-distribution}, we can write
    \begin{align}
        \frac{\mathbb{E}_v \left[p^{\ketbra{v}{v}}(\ell)\right]}{p^{\rho_{mm}}(\ell)} &= \frac{\mathbb{E}_v \left[\prod_{t=1}^T \tr{d\cdot a_{s_t}\cdot \ketbra{v}{v} \ketbra{\psi_{s_t}}{\psi_{s_t}}}\right]}{\prod_{t=1}^T \tr{d\cdot a_{s_t}\cdot \frac{\mathbb{I}}{d} \ketbra{\psi_{s_t}}{\psi_{s_t}}}},\\ 
        &= d^T \mathbb{E}_v \left[  \tr{\ketbra{v}{v}^{\otimes T} \bigotimes_{t=1}^T \ketbra{\psi_{s_t}}{\psi_{s_t}} } \right],\\
        &= d^T  \tr{ \mathbb{E}_v[\ketbra{v}{v}^{\otimes T}] \bigotimes_{t=1}^T \ketbra{\psi_{s_t}}{\psi_{s_t}} } ,\\
        &=\frac{d^T}{d(d+1)\dotsm (d+T-1)} \sum_{\pi \in S_T} \tr{P_d(\pi) \bigotimes_{t=1}^T \ketbra{\psi_{s_t}}{\psi_{s_t}} }, \quad &\text{ Lemma~\ref{lem:sym-sub-proj}}\\
        &\geq \frac{d^T}{d(d+1)\dotsm (d+T-1)},  \quad &\text{Lemma~\ref{lem:perm-overlap-LB}}\\
        &= \prod_{t=0}^{T-1} \frac{d}{d+t},\\
        &= \prod_{t=0}^{T-1} \left(1+\frac{t}{d}\right)^{-1},\\
        &= \exp\left(\log\left(\prod_{t=0}^{T-1} \left(1+\frac{t}{d}\right)^{-1}\right)\right),\\
        &= \exp\left(-\sum_{t=0}^{T-1}\log\left(1+\frac{t}{d}\right)\right),\\
        &\geq \exp\left(-\sum_{t=0}^{T-1}\frac{t}{d}\right),\quad \log(1+x)\le x\\
        &= \exp\left(-\frac{T(T-1)}{2d}\right)\\
        &\geq 1-\frac{T(T-1)}{2d}, \quad \exp(-x)\geq 1-x\\
        \implies  \mathbb{E}_v \left[\frac{p^{\ketbra{v}{v}}(\ell)}{p^{\rho_{mm}}(\ell)}\right] 
        &\geq 1-\frac{T(T-1)}{2d},
    \end{align}
    Thus, by Lemma~\ref{lem:one-sided-LeCam}, we have that 
    \begin{align}
        d_{TV}(\mathbb{E}_v p^{\rho_{\ketbra{v}{v}}}(\ell),p^{\rho_{mm}}(\ell))\leq \frac{T(T-1)}{2d} = O\left(\frac{T^2}{d}\right). 
    \end{align}
    If an algorithm to distinguish these ensembles exists, $\Omega(1) \leq p_{\rm succ}$. Together, with the above upper bound, we have that $\Omega(1) \leq p_{\rm succ} \leq O(T^2/d)$, which can only be true if we take
    \begin{align}
        T \geq \Omega(d^{1/2}),
    \end{align}
    as desired.
    \end{proof}
    \subsubsection{A Few Remarks on Purity Testing with Single-copy Measurements}
    In Ref.~\cite{chen2022Exponential}, the authors also construct a matching single-copy algorithm for distinguishing these two ensembles which is based on a standard \textit{collision tester} from classical statistical learning theory (see Theorems 5.13 and 5.14 in Ref.~\cite{chen2022Exponential} and the references therein). This proves that this distinguishing task cannot be used to prove a larger lower bound on purity testing; however, their algorithm does not actually solve the purity testing problem in the standard sense (i.e., test whether a state is pure or $\varepsilon$-far from any pure state~\cite{montanaro2016Survey}). For this task, their collision-based algorithm breaks down. One cannot achieve the necessary anti-concentration between the two distributions when one state is pure and the other is mixed but (potentially) far from maximally mixed. 
    
    Of course, the recent algorithms for \textit{purity estimation} with single-copy, global measurements~\cite{anshu2022Distributed,gong2024sample} could be used to solve the standard purity testing problem; however, these may not be sample-optimal because estimation is at least as hard as testing. Moreover, while single-copy, global measurements are certainly more feasible on near-term devices than multi-copy measurements, the random measurements needed for the collision testers in Refs,~\cite{anshu2022Distributed,gong2024sample,chen2024Optimala} still remain a serious technical challenge on near-term platforms. To make these tasks feasible on today's multi-qubit processors, it would be preferable to have an algorithm for purity testing that utilizes only \textit{single-copy, local} measurements on individual qubits. As we see from the equality conditions in Lemma~\ref{lem:perm-overlap-LB}, unless the sample complexity of such a task is larger than $2^n$, we cannot tighten the lower bound above. As such, we identify the following open question.

    \begin{openquestion}[Single-copy, local purity testing] Does there exist an algorithm using only single-copy, local measurements that can solve the purity testing problem using $O(2^{n/2})$ samples? If not, can alternative methods yield a lower bound larger than $\Omega(2^{n/2})$ when enforcing the restriction that all measurement operators are fully product?
    \end{openquestion}

    We will say more regarding this and other open problems we identify before concluding the paper. For now, we turn to another distinguishing task that is essential for our lower bound on single-copy BP testing algorithms.

\subsection{Hardness of Distinguishing a Random Product State from Maximally Mixed}
To prove an exponential lower bound for product testing with single-copy measurements, we need to generalize Lemma~\ref{lem:perm-overlap-LB} to collections of pure, multi-qudit states. Here, we will prove the $n=2$ case which contains all of the structure of the proof for all $n>2$.
\begin{lemma}\label{lem:prod-of-perms-LB}
 For any collection of pure states $\ket{\psi_1}^{AB}, \dots, \ket{\psi_T}^{AB} \in \mathbb{C}^{d} \tilde{\otimes} \mathbb{C}^d$, with $\psi^{AB}_t \coloneqq  \ketbra{\psi_t}{\psi_t}^{AB}$, we have
    \begin{align}
         \sum_{ \pi, \sigma \in \SC_T } \tr{P_{d}(\pi) \tilde{\otimes} P_{d}(\sigma) \bigotimes_{t=1}^T \psi^{AB}_t } \geq 1,
    \end{align}
    where the $\tilde{\otimes}$ represents the tensor product between $\HC_A$ and $\HC_B$, and $\otimes$ represents the tensor product between the $T$ isomorphic copies of these spaces.
\end{lemma}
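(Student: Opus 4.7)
The plan is to reduce the bipartite sum to two applications of Lemma~\ref{lem:perm-overlap-LB}, one on each side of the bipartition, by first averaging over one of the two symmetric-subspace projectors and then reusing the structure on the other side. First, apply Lemma~\ref{lem:sym-sub-proj} to rewrite $\sum_\sigma P_d(\sigma) = T!\binom{d+T-1}{T}\mathbb{E}_\chi[\ketbra{\chi}{\chi}^{\otimes T}]$, where $\ket{\chi}$ is Haar random in $\mathbb{C}^d$. Reordering $\bigotimes_t \psi_t^{AB}$ into a state on $\HC_A^{\otimes T} \tilde{\otimes}\, \HC_B^{\otimes T}$ and contracting the $B$-registers against $\ketbra{\chi}{\chi}^{\otimes T}$ factorizes cleanly over $t$: the partial trace is $\bigotimes_t \ketbra{\phi_t(\chi)}{\phi_t(\chi)}$, where $\ket{\phi_t(\chi)} \coloneqq (\mathbb{I}_A \tilde{\otimes} \bra{\chi}^B)\ket{\psi_t}^{AB}$ is a subnormalized vector in $\HC_A$ with squared norm $\bra{\chi}\psi_t^B\ket{\chi}$.

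Next, perform the same computation as in the proof of Lemma~\ref{lem:perm-overlap-LB} on the $A$-side: the remaining sum over $\pi$ becomes $\mathrm{perm}(G(\chi))$, where $G(\chi)_{st} = \braket{\phi_s(\chi)|\phi_t(\chi)}$ is PSD but no longer has unit diagonal. Consequently, Lemma~\ref{lem:perm-cor-LB} does not apply directly; instead, invoke Marcus's inequality, which asserts that $\mathrm{perm}(A) \geq \prod_i A_{ii}$ for any PSD matrix $A$. This yields the bound $\mathrm{perm}(G(\chi)) \geq \prod_t \bra{\chi}\psi_t^B\ket{\chi}$.

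Finally, take the expectation over $\chi$ and re-apply Lemma~\ref{lem:sym-sub-proj} in reverse: the expression $T!\binom{d+T-1}{T}\mathbb{E}_\chi[\prod_t \bra{\chi}\psi_t^B\ket{\chi}]$ collapses back into the permanent-type sum $\sum_\sigma \tr{P_d(\sigma) \bigotimes_t \psi_t^B}$ on the $B$-marginals. These marginals are mixed, so expand each in its spectral decomposition $\psi_t^B = \sum_i \lambda_{t,i}\ketbra{v_{t,i}}{v_{t,i}}$ and distribute the sum. The result is a convex combination with weights $\prod_t \lambda_{t,i_t}$ (summing to $1$) of quantities of the form $\sum_\sigma \tr{P_d(\sigma)\bigotimes_t \ketbra{v_{t,i_t}}{v_{t,i_t}}}$, each of which satisfies the hypothesis of Lemma~\ref{lem:perm-overlap-LB} and is therefore $\geq 1$. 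Chaining the inequalities gives the claim.

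The main obstacle is the middle step: the natural bipartite analogue of Lemma~\ref{lem:perm-overlap-LB} produces a Gram matrix whose diagonal entries $\bra{\chi}\psi_t^B\ket{\chi}$ are not pinned to $1$, so Lemma~\ref{lem:perm-cor-LB} is unavailable. Marcus's PSD permanent inequality is the crucial tool here, because its sensitivity to the diagonal interacts exactly with the second application of Lemma~\ref{lem:perm-overlap-LB} on the $B$-marginals, allowing the two subsystems to be decoupled at the cost of replacing one pure-state family by an averaged mixed one.
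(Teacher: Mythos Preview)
Your proof is correct and takes a genuinely different route from the paper's.

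The paper argues by induction on $T$: the base case $T=2$ is a direct four-term expansion, and the inductive step uses the invariance in Eq.~\eqref{eq: invariance} together with the double-coset decomposition of Lemma~\ref{lemma:double-coset-decomp} to split the sum into permutations that fix the first tensor factor (handled by the inductive hypothesis) and those that do not (shown to be individually non-negative via the swap trick).

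Your argument is non-inductive and instead decouples the two subsystems in one shot. The key external ingredient you import is Marcus's inequality $\mathrm{perm}(A)\ge \prod_i A_{ii}$ for PSD matrices, which the paper does not state; this is exactly what lets you handle the sub-normalized Gram matrix $G(\chi)$ on the $A$-side after contracting against a Haar vector on the $B$-side. The remaining $B$-side quantity is then reduced, via the spectral decomposition of each $\psi_t^B$, to a convex combination of instances of Lemma~\ref{lem:perm-overlap-LB}. This is clean and immediately iterates to $n>2$ (apply Marcus on one factor, recurse on the remaining $n-1$), matching the paper's Lemma~\ref{lem:multi-qudit-overlap-LB} without rerunning the inductive machinery. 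The trade-off is that your proof is not self-contained within the paper's toolkit: Lemma~\ref{lem:perm-cor-LB} assumes unit diagonal and is unavailable here, so you genuinely need Marcus's stronger bound. The paper's approach, by contrast, stays entirely within the permutation-representation calculus already set up in Section~\ref{sec:rep-theory-essentials}.
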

\begin{proof}
    We will prove this claim by induction on $T$. Consider first the base case when $T=2$. We have
    \begin{align}
        \sum_{\pi, \sigma \in \SC_2} \tr{P_{d}(\pi) \tilde{\otimes} P_{d}(\sigma) \psi^{AB}_1 \otimes \psi^{AB}_2 } &= \tr{(\mathbb{I}\tilde{\otimes}\mathbb{I} + \mathbb{F}_{12} \tilde{\otimes} \mathbb{I} + \mathbb{I} \tilde{\otimes} \mathbb{F}_{12} + \mathbb{F}_{12} \tilde{\otimes} \mathbb{F}_{12} )\cdot \psi^{AB}_1 \otimes \psi^{AB}_2},\\
        &= 1 + \tr{\psi_1^A \psi_2^A} +  \tr{\psi_1^B \psi_2^B} +  \tr{\psi_1^{AB} \psi_2^{AB}},\\
        &\geq 1,
    \end{align}
    because each of the three terms in the second line are the trace of a product of PSD operators and thus non-negative. 
    Now, let us assume the claim holds for a collection of $T$ pure states and proceed to show it holds for $T+1$. First, note that any permutation in $\SC_{T}$ can be viewed as an element of $\SC_{T+1}$ with one tensor factor fixed. In particular, we can view $\mathbb{I} \otimes \Pi_{\rm sym}^{d,T}$ (which acts on $\mathbb{C}^{d} \otimes (\mathbb{C}^{d} )^{\otimes T}$) as a sum over all permutations in $\SC_{T+1}$ that fix the first tensor factor. Recalling from Sec.~\ref{sec:rep-theory-essentials} that $P_d(\pi)\Pi_{\rm sym}^{d,T+1} = \Pi_{\rm sym}^{d,T+1} P_d(\pi) = \Pi_{\rm sym}^{d,T+1}$ holds for any permutation $\pi \in \SC_{T+1}$, and using $\Pi_{\rm sym}^{d,T+1} = \frac{1}{(T+1)!}\sum_{\pi\in \SC_{T+1}}P_d(\pi)$, we have that 
    \begin{align}
         \left(\mathbb{I} \otimes \Pi_{\rm sym}^{d,T} \right) \left(\sum_{\pi \in \SC_{T+1}} P_d(\pi) \right) \left( \mathbb{I} \otimes \Pi_{\rm sym}^{d,T} \right) &= \sum_{\pi \in \SC_{T+1}} P_d(\pi), 
    \end{align}
    from which it follows that
    \begin{align} \label{eq: invariance}
        (\mathbb{I} \otimes \Pi_{\rm sym}^{d,T}) \tilde{\otimes } (\mathbb{I}\ot \Pi_{\rm sym}^{d,T}) \left(\sum_{\pi, \sigma \in \SC_{T+1}} P_d(\pi) \tilde{\otimes} P_d(\sigma) \right)   (\mathbb{I} \otimes \Pi_{\rm sym}^{d,T}) \tilde{\otimes } (\mathbb{I}\ot \Pi_{\rm sym}^{d,T}) &= \sum_{\pi, \sigma \in \SC_{T+1}} P_d(\pi) \tilde{\otimes} P_d(\sigma). 
    \end{align}
    In order to leverage this invariance, let us define $\Psi \coloneqq \bigotimes_{t=2}^{T+1} \psi_t^{AB}$. Then, observe
    \begin{align}
        &\sum_{\pi,\sigma \in \SC_{T+1}} \tr{ \left(P_{d}(\pi) \tilde{\otimes} P_d(\sigma)\right) \cdot \left(\psi_1^{AB} \otimes \Psi \right)} =  \tr{  \left(\sum_{\pi,\sigma \in \SC_{T+1}} P_{d}(\pi) \tilde{\otimes} P_d(\sigma) \right) \cdot \psi_1^{AB} \otimes \Psi},\\
        &=  \tr{    (\mathbb{I} \otimes \Pi_{\rm sym}^{d,T}) \tilde{\otimes } (\mathbb{I}\ot \Pi_{\rm sym}^{d,T}) \left(\sum_{\pi, \sigma \in \SC_{T+1}} P_d(\pi) \tilde{\otimes} P_d(\sigma) \right) (\mathbb{I} \otimes \Pi_{\rm sym}^{d,T}) \tilde{\otimes } (\mathbb{I}\ot \Pi_{\rm sym}^{d,T}) \cdot \psi_1^{AB} \otimes \Psi},\\
        &=  \sum_{\pi, \sigma \in \SC_{T+1}} \tr{  P_d(\pi) \tilde{\otimes} P_d(\sigma) \left(  (\mathbb{I} \otimes \Pi_{\rm sym}^{d,T}) \tilde{\otimes } (\mathbb{I}\ot \Pi_{\rm sym}^{d,T}) \cdot (\psi_1^{AB} \otimes \Psi) \cdot (\mathbb{I} \otimes \Pi_{\rm sym}^{d,T}) \tilde{\otimes } (\mathbb{I}\ot \Pi_{\rm sym}^{d,T}) \right)},\\ 
        &=   \sum_{\pi, \sigma \in \SC_{T+1}} \tr{  P_d(\pi) \tilde{\otimes} P_d(\sigma) \cdot \psi_1^{AB} \otimes \overline{\Psi}}, 
    \end{align}
    where we utilized the cyclicity of trace in the third equality and then defined the symmetrized state $ \overline{\Psi} \coloneqq  \Pi_{\rm sym}^{d,T} \tilde{\otimes} \Pi_{\rm sym}^{d,T} \cdot \Psi \cdot \Pi_{\rm sym}^{d,T} \tilde{\otimes } \Pi_{\rm sym}^{d,T}$. Crucially,  this will allow us to absorb permutations into the symmetrized state due to the fact that for all $\alpha,\beta \in \SC_T$, we have $P_d(\alpha) \tilde{\otimes}  P_{d}(\beta) \cdot\overline{\Psi} = \overline{\Psi}.$ Now, we split the sum into all terms where both $\pi$ and $\sigma$ fix the first tensor factor and all of the rest.
     \begin{align}
         \sum_{ \pi, \sigma \in \SC_{T+1} } \tr{P_{d}(\pi) \tilde{\otimes} P_{d}(\sigma) \cdot \psi_1^{AB} \otimes \overline{\Psi}} &=  \sum_{ \substack{\pi, \sigma \in \SC_{T+1} \\ \pi(1)=\sigma(1)=1}} \tr{P_{d}(\pi) \tilde{\otimes} P_{d}(\sigma) \cdot \psi_1^{AB} \otimes \overline{\Psi}}\\
         &+  \sum_{ \substack{\pi, \sigma \in \SC_T \\ \pi(1)\neq 1~\vee~\sigma(1)\neq 1}} \tr{P_{d}(\pi) \tilde{\otimes} P_{d}(\sigma) \cdot \psi_1^{AB} \otimes \overline{\Psi}}.
    \end{align}
    We will show the first set of terms sum at least to unity while the remaining terms are non-negative. 

    \noindent \textbf{Case 1} ($\pi(1)=\sigma(1)=1$): Here is where we will utilize the inductive hypothesis.
    For a permutation $\pi\in \SC_{T+1}$ with $\pi(1)=1$ we write $\pi = (1)\pi'$ (in disjoint cycle notation) with $\pi'\in\SC_T$ permuting the symbols $\lbrace 2,\dots,T+1\rbrace$, and similarly for $\sigma\in \SC_{T+1}$ with $\sigma(1)=1$.
    We then have
    \begin{align}
         \sum_{ \substack{\pi, \sigma \in \SC_{T+1} \\ \pi(1)=\sigma(1)=1}} \tr{P_{d}(\pi) \tilde{\otimes} P_{d}(\sigma) \cdot \psi_1^{AB} \otimes \overline{\Psi}} 
         &= \sum_{\pi', \sigma' \in \SC_{T}} \tr{\left(\mathbb{I} \otimes P_d(\pi') \right)\tilde{\otimes} \left(\mathbb{I} \otimes P_{d}( \sigma')\right) \cdot \psi_1^{AB} \otimes \overline{\Psi}},\\
         &= \sum_{\pi', \sigma' \in \SC_{T}} \tr{ \psi_1^{AB}} \cdot \tr{ P_d(\pi') \tilde{\otimes}  P_{d}(\sigma') \cdot\overline{\Psi}},\\
         &= \sum_{\pi', \sigma' \in \SC_{T}} \tr{\overline{\Psi}},\\
         &= (T!)^2\cdot \tr{\overline{\Psi}},\\
         &= (T!)^2\cdot \tr{\Pi_{\rm sym}^{d,T} \tilde{\otimes} \Pi_{\rm sym}^{d,T} \cdot \Psi \cdot \Pi_{\rm sym}^{d,T} \tilde{\otimes } \Pi_{\rm sym}^{d,T}},\\
         &= (T!)^2\cdot \tr{\Pi_{\rm sym}^{d,T} \tilde{\otimes} \Pi_{\rm sym}^{d,T} \cdot \Psi},\\
         &= (T!)^2\cdot \frac{1}{(T!)^2}  \sum_{\pi', \sigma' \in \SC_{T}} \tr{P_d(\pi') \tilde{\otimes} P_d(\sigma') \cdot \Psi},\\
         &=\sum_{\pi', \sigma' \in \SC_{T}} \tr{P_d(\pi') \tilde{\otimes} P_d(\sigma') \cdot \bigotimes_{t=2}^{T+1} \psi^{AB}_t},\\
         &\geq 1,
    \end{align}
    by the inductive hypothesis.  It remains to show that the other terms in the above sum are always non-negative. \\

    \noindent \textbf{Case 2} ($\pi(1) \neq 1 \vee \sigma(1) \neq 1$): To show that this remaining sum is non-negative note that Lemma~\ref{lemma:double-coset-decomp} guarantees that any $\pi, \sigma \in \SC_{T+1}$ can be expressed (in disjoint cycle notation) as 
    \begin{align}
        \pi &= \alpha (1~2)^a \beta \quad \text{and} \quad \sigma = \gamma (1~2)^b \delta,
    \end{align}
    where $\alpha, \beta, \gamma, \delta \in \SC_T^{[2,T+1]} \subset \SC_{T+1}$ (i.e., permutations in $\SC_{T+1}$ acting non-trivially only on systems $t=2$ to $t=T+1$) and $a,b \in \{0,1\}$. Moreover, by assumption, $a$ and $b$ are not simultaneously zero. With this notation in mind, the sum becomes
    \begin{align}
        &\sum_{ \substack{\pi, \sigma \in \SC_T \\ \pi(1)\neq 1~\vee~\sigma(1)\neq 1}} \tr{P_{d}(\pi) \tilde{\otimes} P_{d}(\sigma) \cdot \psi_1^{AB} \otimes \overline{\Psi}} \\
        &= \sum_{ \substack{\pi, \sigma \in \SC_T \\ \pi(1)\neq 1~\vee~\sigma(1)\neq 1}} \tr{P_{d}(\alpha (1~2)^a \beta) \tilde{\otimes} P_{d}(\gamma (1~2)^b \delta) \cdot \psi_1^{AB} \otimes \overline{\Psi}},\\
        &= \sum_{ \substack{\pi, \sigma \in \SC_T \\ \pi(1)\neq 1~\vee~\sigma(1)\neq 1}} \tr{\left(P_{d}(\alpha) \tilde{\otimes} P_{d}(\gamma) \right) \cdot \left( P_d((1~2)^a) \tilde{\otimes} P_d((1~2)^b) \right)\cdot \left( P_{d}(\beta) \tilde{\otimes} P_{d}(\delta) \right) \cdot \psi_1^{AB} \otimes \overline{\Psi}},\\
        &= \sum_{ \substack{\pi, \sigma \in \SC_T \\ \pi(1)\neq 1~\vee~\sigma(1)\neq 1}} \tr{P_{d}((1~2)^a) \tilde{\otimes} P_{d}((1~2)^b) \cdot \psi_1^{AB} \otimes \left( P_d(\beta) \tilde{\otimes} P_d(\delta) \cdot \overline{\Psi} \cdot  P_d(\alpha) \tilde{\otimes} P_d(\gamma) \right)}, \\
        &= \sum_{ \substack{\pi, \sigma \in \SC_T \\ \pi(1)\neq 1~\vee~\sigma(1)\neq 1}} \tr{P_{d}((1~2)^a) \tilde{\otimes} P_{d}((1~2)^b) \cdot \psi_1^{AB} \otimes  \overline{\Psi} },\\
        &\geq 0,
    \end{align}
    where the second equality follows from the homomorphism property of the representation (i.e., $P_d(\pi \sigma) = P_d(\pi)P_d(\sigma)$), the third from cyclicity of trace, and the last line from the fact that
    \begin{align}
        \tr{P_{d}((1~2)^a) \tilde{\otimes} P_{d}((1~2)^b) \cdot \psi_1^{AB} \otimes  \overline{\Psi} } &= \begin{cases}
            \tr{\psi^{A}_1 \overline{\Psi}^A_2}, \quad  & a=1, b=0\\
            \tr{\psi^{B}_1 \overline{\Psi}^B_2}, \quad  & a=0, b=1\\
            \tr{\psi_1 \overline{\Psi}_2}, \quad  & a=1, b=1
        \end{cases}
    \end{align}
    where $\overline{\Psi}_2$ is the reduced state after tracing out systems $3,\ldots,T+1$. The equalities follows then follow from the definition of the partial trace and the swap trick (see Lemma~\ref{lemma:swap-trick}). In all three cases, we have the overlap of two PSD operators which is non-negative, thus the whole sum is non-negative. This completes the proof.
\end{proof}
 We note that this lower bound cannot be uniformly improved when $T\le d$. To see that we can saturate it, recall that from the definition of the symmetric subspace projector that Lemma~\ref{lem:prod-of-perms-LB} can be expressed as
\begin{align} \label{eq:prod-sym-proj-functional}
          \tr{ \Pi_{\rm sym}^{d,T} \tilde{\otimes} \Pi_{\rm sym}^{d,T} \bigotimes_{t=1}^T\ketbra{\psi_t}{\psi_t}^{AB}} \geq \frac{1}{(T!)^2}.
    \end{align}
Next, consider a collection of states that are product across the $A:B$. That is, let $\ket{\psi_t}^{AB}=\ket{\psi_t}^A\tilde{\otimes} \ket{\psi'_t}^B$ for all $t \in [T]$. Then, we have
\begin{align}
     &\tr{ \Pi_{\rm sym}^{d,T} \tilde{\otimes} \Pi_{\rm sym}^{d,T} \bigotimes_{t=1}^T\ketbra{\psi_t}{\psi_t}^{AB}} =  \tr{ \Pi_{\rm sym}^{d,T} \tilde{\otimes} \Pi_{\rm sym}^{d,T} \bigotimes_{t=1}^T \left(\ketbra{\psi_t}{\psi_t}^A\tilde{\otimes} \ketbra{\psi_t}{\psi_t}^B\right)},\\
     &= \tr{ \Pi_{\rm sym}^{d,T} \bigotimes_{t=1}^T \ketbra{\psi_t}{\psi_t}^{A}} \cdot \tr{ \Pi_{\rm sym}^{d,T} \bigotimes_{t=1}^T \ketbra{\psi_t}{\psi_t}^{B}},\\
     &=\frac{1}{(T!)^2} \cdot \tr{ \sum_{\pi \in S_T} P_d(\pi) \bigotimes_{t=1}^T \ketbra{\psi_t}{\psi_t}^A} \cdot \tr{ \sum_{\sigma \in S_T} P_d(\sigma) \bigotimes_{t=1}^T \ketbra{\psi_t}{\psi_t}^B},\\
     &\geq \frac{1}{(T!)^2}, 
\end{align}
    where we have used the inequality in Lemma~\ref{lem:perm-cor-LB} which is tight when $\{\ket{\psi_A^t}\}$ and $\{\ket{\psi_B^t}\}$ are orthonormal sets on their respective spaces. Because there exists at least one collection of states for which this equality is reached, we cannot improve the lower bound in general. 
    
This technical lemma allows us to prove a hardness result for distinguishing local Haar random states (i.e., random product states) from global Haar random states.
\begin{proposition}[Random Bipartite Product State versus Maximally Mixed]\label{prop:max-mixed-vs-bipartite-prod}
    Any learning algorithm utilizing (potentially adaptive) single-copy measurements requires 
    \begin{align}
        T\geq \Omega(d^{1/2})
    \end{align}
    copies of $\rho$ to distinguish (with probability at least 2/3) whether $\rho$ is a Haar random bipartite product state or the maximally mixed state on $\mathbb{C}^{d} \tilde{\otimes} \mathbb{C}^{d}$. Equivalently, any such algorithm will have bias $O(\frac{T^2}{d})$.
\end{proposition}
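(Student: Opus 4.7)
The plan is to mirror the structure of the proof of \pref{prop:max-mixed-vs-haar}, replacing the single Haar random pure state with an independent pair of Haar random states on the two tensor factors, and invoking our bipartite overlap bound \lref{lem:prod-of-perms-LB} in place of the single-system \lref{lem:perm-overlap-LB}. Concretely, I would set up the many-to-one distinguishing task with null hypothesis $\rho_{mm} = \mathbb{I}/d^2$ on $\mathbb{C}^d \tilde{\otimes} \mathbb{C}^d$ and alternative hypothesis $\ketbra{v}{v}^A \tilde{\otimes} \ketbra{w}{w}^B$ with $\ket{v},\ket{w}$ drawn independently from the Haar measure on $\UC_d$. By \lref{lem:simulating-POVMs} it suffices to consider rank-$1$ POVMs on the full space $\mathbb{C}^{d^2}$ with outcome operators $\{d^2 \cdot a_{s_t} \ketbra{\psi_{s_t}}{\psi_{s_t}}\}$, so the outcome distribution under $\rho_{mm}$ and under a fixed product state factor as in Eq.~\eqref{eq:classical-outcome-distribution}.

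The central computation is the one-sided likelihood ratio. Because the two Haar averages are independent and act on disjoint tensor factors, reordering copies gives
\begin{align}
\mathbb{E}_{v,w}\!\left[(\ketbra{v}{v} \tilde{\otimes} \ketbra{w}{w})^{\otimes T}\right] = \mathbb{E}_v[\ketbra{v}{v}^{\otimes T}] \tilde{\otimes} \mathbb{E}_w[\ketbra{w}{w}^{\otimes T}],
\end{align}
so applying \lref{lem:sym-sub-proj} on each factor and using that $d^{2T}/p^{\rho_{mm}}(\ell)$ cancels the POVM weights yields
\begin{align}
\frac{\mathbb{E}_{v,w}[p^{vw}(\ell)]}{p^{\rho_{mm}}(\ell)} = \frac{d^{2T}}{\bigl(d(d+1)\cdots(d+T-1)\bigr)^2} \sum_{\pi,\sigma \in \SC_T} \tr{P_d(\pi) \tilde{\otimes} P_d(\sigma) \bigotimes_{t=1}^T \ketbra{\psi_{s_t}}{\psi_{s_t}}}.
\end{align}
The inner sum is exactly the quantity bounded below by $1$ in \lref{lem:prod-of-perms-LB}, so the ratio is at least $\prod_{t=0}^{T-1}\bigl(d/(d+t)\bigr)^2$.

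From here it is a routine calculation: the same $\log(1+x)\le x$ and $e^{-x}\ge 1-x$ manipulations used in \pref{prop:max-mixed-vs-haar}, with an extra factor of two coming from the square, give
\begin{align}
\prod_{t=0}^{T-1}\left(\frac{d}{d+t}\right)^{2} \geq e^{-T(T-1)/d} \geq 1 - \frac{T(T-1)}{d}.
\end{align}
Applying \lref{lem:one-sided-LeCam} then upper bounds the total variation distance between $p^{\rho_{mm}}$ and $\mathbb{E}_{v,w}[p^{vw}]$ by $O(T^2/d)$, and Le Cam's two-point method (\lref{lem:LeCam-2pt}) turns this into the stated bias bound. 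Combined with $\Omega(1) \le p_{\mathrm{succ}}$ from the assumed existence of a distinguisher, this forces $T \ge \Omega(d^{1/2})$.

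The only genuine subtlety, and the step I would be most careful about, is keeping the bookkeeping of the two tensor structures straight: the reordering $(V \tilde{\otimes} W)^{\otimes T} \cong V^{\otimes T} \tilde{\otimes} W^{\otimes T}$, the cancellation of the $d^{2T}$ factor against the maximally mixed denominator, and the fact that the squared Pochhammer term gives a ratio of $1 - O(T^2/d)$ rather than $1 - O(T^2/d^2)$. This square is precisely why the bound degrades from the $\Omega(d)$ one might naively expect (treating $\mathbb{C}^d \tilde{\otimes} \mathbb{C}^d$ as a single $d^2$-dimensional system) to $\Omega(\sqrt{d})$; the local product structure strictly helps the distinguisher. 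No new technical machinery beyond \lref{lem:prod-of-perms-LB} is required.
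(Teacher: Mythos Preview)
Your proposal is correct and follows essentially the same approach as the paper's proof: set up the many-to-one distinguishing task, compute the likelihood ratio using the independence of the two Haar averages and \lref{lem:sym-sub-proj}, apply \lref{lem:prod-of-perms-LB} to lower bound the permutation sum by $1$, and then repeat the $\log(1+x)\le x$ analysis with an extra factor of two from the squared Pochhammer to obtain $1-T(T-1)/d$. Your remark about why the bound degrades from $\Omega(d)$ to $\Omega(\sqrt{d})$ is also spot-on and matches the paper's observation immediately preceding its proof.
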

Note that this is actually a looser lower bound than Proposition~\ref{prop:max-mixed-vs-haar} because the total dimension of $\mathbb{C}^d \tilde{\otimes} \mathbb{C}^d$ is $d^2$, so any algorithm distinguishing a global Haar random state from maximally mixed state on that space will achieve bias $O(\frac{T^2}{d^2})$, or equivalently, a sample complexity at least $\Omega(d)$. We now proceed to the proof of the proposition.
\begin{proof}
    We will consider the many-versus-one distinguishing task with null and alternative hypotheses given as
    \begin{align}
        \rho_{mm} = \frac{\mathbb{I}}{d^2} \quad \text{versus} \quad \{\ketbra{v}{v} \tilde{\otimes} \ketbra{w}{w} &= V \ketbra{0}{0} V^{\dagger} \tilde{\otimes} W \ketbra{0}{0} W^{\dagger} : V,W \sim \UC_d\},
    \end{align}
   where $\ket{v}, \ket{w} \in \mathbb{C}^d$ are sampled independently from the Haar measure. Following the same initial steps as in the proof of Proposition~\ref{prop:max-mixed-vs-haar} and suppressing the $AB$ superscript to ease notation, we obtain 
    \begin{align}
       \frac{ \mathbb{E}_{v,w} \left[p^{\ketbra{u}{u} \tilde{\otimes} \ketbra{w}{w}}(\ell)\right]}{p^{\rho_{mm}}(\ell)} &= \frac{\mathbb{E}_{v,w} \left[\prod_{t=1}^T \tr{d^2\cdot a_{s_t}\cdot \ketbra{v}{v} \tilde{\otimes} \ketbra{w}{w} \ketbra{\psi_{s_t}}{\psi_{s_t}}}\right]}{\prod_{t=1}^T \tr{d^2\cdot a_{s_t}\cdot \frac{\mathbb{I}}{d^2} \ketbra{\psi_{s_t}}{\psi_{s_t}}}}, \\
        &= d^{2T} \mathbb{E}_{u,w} \left[\tr{ \ketbra{v}{v} \tilde{\otimes} \ketbra{w}{w} \cdot \bigotimes_{t=1}^T \ketbra{\psi_t}{\psi_t} }\right],\\
        &= d^{2T} \tr{ \mathbb{E}_{v} \left[\ketbra{v}{v}\right] \tilde{\otimes} \mathbb{E}_{w} \left[\ketbra{w}{w}\right] \cdot \bigotimes_{t=1}^T \ketbra{\psi_t}{\psi_t} },\\
        &= d^{2T}  \tr{ \frac{\Pi_{\rm sym}^{d,T}}{\dim{\Pi_{\rm sym}^{d,T}}} \tilde{\otimes} \frac{\Pi_{\rm sym}^{d,T}}{\dim{\Pi_{\rm sym}^{d,T}}} \cdot \bigotimes_{t=1}^T \ketbra{\psi_t}{\psi_t} }, \quad \text{Lemma}~\ref{lem:sym-sub-proj}\\
        &=\frac{d^{2T}}{\left(d(d+1)\dotsm (d+T-1)\right)^2} \cdot \sum_{\pi, \sigma \in \SC_T} \tr{P_{d}(\pi) \tilde{\otimes} P_{d}(\sigma) \bigotimes_{t=1}^T \ketbra{\psi_t}{\psi_t} } ,\\
         &\geq \frac{d^{2T}}{\left(d(d+1)\dotsm (d+T-1)\right)^2}, \\
        \end{align}
        where in the penultimate equality we expanded the definition of the symmetric subspace projectors and in the last line applied Lemma~\ref{lem:prod-of-perms-LB}. The remaining analysis mirrors the proof of Prop.~\ref{prop:max-mixed-vs-haar} and yields
        \begin{align}
        \frac{ \mathbb{E}_{v,w} \left[p^{\ketbra{v}{v} \tilde{\otimes} \ketbra{w}{w}}(\ell)\right]}{p^{\rho_{mm}}(\ell)} &\geq 1-\frac{T(T-1)}{d}.
    \end{align}
    Again, by Lemma~\ref{lem:one-sided-LeCam}, this implies 
    \begin{align}
        d_{\rm TV}\left(\mathbb{E}_{v,w} \left[p^{\ketbra{v}{v} \tilde{\otimes} \ketbra{w}{w}}(\ell) \right], p^{\rho_{mm}}(\ell)\right) \leq O \left( \frac{T^2}{d}\right).
    \end{align}
    Together with the assumption that an algorithm to distinguish these ensembles with constant bias exists, yields the inequalities $\Omega(1) \leq p_{\rm succ} \leq O \left( \frac{T^2}{d}\right)$, which can only be true if
    \begin{align}
        T \geq \Omega(d^{1/2}).
    \end{align}
\end{proof}
This result will allow us to prove a lower bound on BP testing. To prove a lower bound on MP testing, we need to extend this analysis to $n>2$, starting with a generalization of Lemma~\ref{lem:prod-of-perms-LB}. We state the multipartite case without proof because it is essentially equivalent to the proof of the $n=2$ case.
\begin{lemma}\label{lem:multi-qudit-overlap-LB}
 For any collection of pure $n$-qudit states $\ket{\psi_1}^{A_1\dotsm A_n}, \dots, \ket{\psi_T}^{A_1 \dotsm A_n} \in  (\mathbb{C}^d)^{\tilde{\otimes} n}$, with $\psi_t^{A_1 \dotsm A_n} \coloneqq  \ketbra{\psi_t}{\psi_t}^{A_1 \dotsm A_n}$, we have
    \begin{align}
         \sum_{ \pi_1, \dots, \pi_n \in \SC_T } \tr{ \widetilde{\bigotimes}_{i=1}^n P_{d}(\pi_i) \bigotimes_{t=1}^T \psi^{A_1 \dotsm A_n}_t } \geq 1,
    \end{align}
    where the $\tilde{\otimes}$ represents the tensor product between the $n$ copies of $\mathbb{C}^d$ and the $\otimes$ the tensor product between the $T$ copies of $(\mathbb{C}^d)^{\tilde{\otimes} n}$.
\end{lemma}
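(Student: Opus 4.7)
The plan is to mirror the inductive proof of \lref{lem:prod-of-perms-LB} essentially verbatim, with the two-factor tensor structure $\mathbb{C}^d \tilde{\otimes} \mathbb{C}^d$ replaced by the $n$-factor one $(\mathbb{C}^d)^{\tilde{\otimes} n}$. I would proceed by induction on $T$; the only new work is bookkeeping.

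For the base case $T=2$, every $\pi_i \in \SC_2$ is either the identity or the transposition $(1\,2)$, so the sum contains $2^n$ terms indexed by $a = (a_1,\ldots,a_n) \in \{0,1\}^n$ via $\pi_i = (1\,2)^{a_i}$. Writing $S_a = \{i : a_i = 1\} \subseteq [n]$ and applying the swap trick (\lref{lemma:swap-trick}) independently on each of the $n$ subsystems, the $a$-th term equals $\tr{\psi_1^{S_a} \psi_2^{S_a}}$, with the convention $\psi_t^{\emptyset} = 1$. Each such term is the overlap of two PSD reduced density matrices, hence non-negative, and the $a = 0$ term is exactly $1$; the base case follows.

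For the inductive step, assume the bound holds for $T$ copies. Set $\Psi = \bigotimes_{t=2}^{T+1} \psi_t$ and define the symmetrized state
\[
\overline{\Psi} \coloneqq \Bigl(\widetilde{\bigotimes}_{i=1}^{n} \Pi_{\rm sym}^{d,T}\Bigr) \cdot \Psi \cdot \Bigl(\widetilde{\bigotimes}_{i=1}^{n} \Pi_{\rm sym}^{d,T}\Bigr),
\]
where each $\Pi_{\rm sym}^{d,T}$ acts on copies $2,\ldots,T+1$ of the corresponding subsystem. The $n$-fold analog of Eq.~\eqref{eq: invariance}, which follows by applying $P_d(\pi)\Pi_{\rm sym}^{d,T+1} = \Pi_{\rm sym}^{d,T+1}$ on each subsystem, allows me to replace $\psi_1 \otimes \Psi$ by $\psi_1 \otimes \overline{\Psi}$ inside the trace without changing the sum. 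I then split the sum into (i) tuples $(\pi_1,\ldots,\pi_n) \in \SC_{T+1}^n$ for which every $\pi_i$ fixes index $1$, and (ii) the rest. In case (i), writing $\pi_i = (1)\pi_i'$ with $\pi_i' \in \SC_T^{[2,T+1]}$ and using $\bigl(\widetilde{\bigotimes}_i P_d(\pi_i')\bigr)\overline{\Psi} = \overline{\Psi}$, the contribution reduces to $(T!)^n \tr{\overline{\Psi}}$; re-expanding the symmetric-subspace projectors as sums over permutations recovers exactly $\sum_{\pi_1',\ldots,\pi_n' \in \SC_T} \tr{\widetilde{\bigotimes}_i P_d(\pi_i') \cdot \bigotimes_{t=2}^{T+1}\psi_t}$, which is $\geq 1$ by the inductive hypothesis. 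In case (ii), I apply \lref{lemma:double-coset-decomp} independently on each factor to write $\pi_i = \alpha_i \cdot (1\,2)^{a_i} \cdot \beta_i$ with $\alpha_i,\beta_i \in \SC_T^{[2,T+1]}$ and $a = (a_1,\ldots,a_n) \neq 0$. Cyclicity of trace together with invariance of $\overline{\Psi}$ under $P_d(\alpha_i)$ and $P_d(\beta_i)$ on each subsystem absorbs the $\alpha_i,\beta_i$, leaving
\[
\tr{\widetilde{\bigotimes}_{i=1}^{n} P_d((1\,2)^{a_i}) \cdot \psi_1 \otimes \overline{\Psi}_2} = \tr{\psi_1^{S_a} \, \overline{\Psi}_2^{S_a}} \geq 0
\]
by the swap trick on the subsystems in $S_a$, where $\overline{\Psi}_2$ is the reduction of $\overline{\Psi}$ to copy $2$. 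Adding the non-negative contribution from (ii) to the $\geq 1$ contribution from (i) closes the induction.

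The only real obstacle is notational: one must apply the double-coset decomposition on all $n$ factors simultaneously and track how the residual swap pattern $a \in \{0,1\}^n \setminus \{0\}$ determines which reduced density matrices appear in case (ii). Conceptually, no new ingredient beyond the $n=2$ argument is needed, since both the swap trick and the symmetric-subspace invariance act tensor-factor-wise across the $n$ subsystems.
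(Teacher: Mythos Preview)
Your proposal is correct and follows exactly the approach the paper intends: the paper states the lemma ``without proof because it is essentially equivalent to the proof of the $n=2$ case,'' and you have carried out precisely that straightforward generalization, applying the double-coset decomposition, the symmetric-subspace invariance, and the swap trick factor-wise across the $n$ subsystems. There is nothing to add.
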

This, in turn, allows us to generalize Proposition~\ref{prop:max-mixed-vs-bipartite-prod} to the multipartite setting. The proof is essentially equivalent to that of Proposition~\ref{prop:max-mixed-vs-bipartite-prod}, so we omit it.
\begin{proposition}[Random multipartite product state versus maximally mixed]\label{prop:max-mixed-vs-multipartite-prod}
    Any learning algorithm utilizing (potentially adaptive) single-copy measurements requires 
    \begin{align}
        T\geq \Omega\left(\sqrt{d/n}\right)
    \end{align}
    copies of $\rho$ to distinguish (with probability at least 2/3) whether $\rho$ is a multipartite product state comprised of independently sampled, local Haar random states or the maximally mixed state on $(\mathbb{C}^{d})^{\tilde{\otimes}n}$. Equivalently, any such algorithm will have bias $O(\frac{nT^2}{d})$.
\end{proposition}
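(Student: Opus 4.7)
The plan is to mirror the proof of \pref{prop:max-mixed-vs-bipartite-prod} almost verbatim, replacing the two-factor product structure on $\mathbb{C}^d \tilde\otimes \mathbb{C}^d$ with the $n$-factor product structure on $(\mathbb{C}^d)^{\tilde\otimes n}$ and invoking \lref{lem:multi-qudit-overlap-LB} in place of \lref{lem:prod-of-perms-LB}. Specifically, set up the many-versus-one distinguishing task between the null hypothesis $\rho_{mm} = \mathbb{I}/d^n$ on $(\mathbb{C}^d)^{\tilde\otimes n}$ and the alternative ensemble
\begin{align*}
\Bigl\{\ketbra{v_1}{v_1}\,\tilde\otimes\,\cdots\,\tilde\otimes\,\ketbra{v_n}{v_n} \,:\, v_1,\ldots,v_n \sim \UC_d \text{ i.i.d.}\Bigr\}.
\end{align*}
By \lref{lem:LeCam-2pt} and \lref{lem:one-sided-LeCam}, it suffices to lower bound the one-sided likelihood ratio between the Haar-averaged outcome distribution and the maximally-mixed outcome distribution on every fixed classical transcript $\ell = s_1 \cdots s_T$.

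Next, I would invoke \lref{lem:simulating-POVMs} to assume WLOG that each measurement is rank-1, use \eqref{eq:classical-outcome-distribution} to cancel the $d^{nT}$ factors between numerator and denominator, and apply linearity of expectation together with \lref{lem:sym-sub-proj} to each of the $n$ independent Haar averages. This yields
\begin{align*}
\frac{\mathbb{E}_{v_1,\ldots,v_n}\!\left[p^{\tilde\otimes_i \ketbra{v_i}{v_i}}(\ell)\right]}{p^{\rho_{mm}}(\ell)} = \frac{d^{nT}}{\bigl(d(d+1)\cdots(d+T-1)\bigr)^n}\sum_{\pi_1,\ldots,\pi_n \in \SC_T}\tr{\widetilde{\bigotimes}_{i=1}^n P_d(\pi_i)\bigotimes_{t=1}^T \ketbra{\psi_{s_t}}{\psi_{s_t}}}.
\end{align*}
Applying \lref{lem:multi-qudit-overlap-LB} to the permutation sum lower bounds the whole expression by $\prod_{t=0}^{T-1}(1+t/d)^{-n}$.

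Finally, the standard chain $\log(1+x)\le x$ and $e^{-x}\ge 1-x$ from the proof of \pref{prop:max-mixed-vs-haar} (with an extra factor of $n$ in the exponent) gives a lower bound of $1 - nT(T-1)/(2d)$ on the ratio. By \lref{lem:one-sided-LeCam} this produces a total-variation bound of $O(nT^2/d)$, and combining with the existence assumption $\Omega(1)\le p_{\rm succ}$ forces $T \ge \Omega(\sqrt{d/n})$, as claimed.

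There is no real obstacle beyond bookkeeping: the only content unique to this proposition, relative to the bipartite case, is the multipartite overlap lemma, which replaces the single sum over pairs $(\pi,\sigma)\in\SC_T\times\SC_T$ by a sum over $n$-tuples in $\SC_T^n$ and the dimension factor $(d(d{+}1)\cdots)^2$ by its $n$-th power. The scaling $nT^2/d$ (rather than $T^2/d$) arises simply because the $n$ identical logarithmic contributions add. If one wanted to squeeze out a better dependence on $n$, one would need to improve \lref{lem:multi-qudit-overlap-LB} itself, but for the current target bound this step-by-step extension of the bipartite argument is sufficient.
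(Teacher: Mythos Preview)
Your proposal is correct and is exactly the approach the paper takes: it explicitly states that the proof of \pref{prop:max-mixed-vs-multipartite-prod} ``is essentially equivalent to that of Proposition~\ref{prop:max-mixed-vs-bipartite-prod}'' and omits the details, so your step-by-step extension (replacing \lref{lem:prod-of-perms-LB} with \lref{lem:multi-qudit-overlap-LB}, raising the dimension factor to the $n$-th power, and picking up the extra factor of $n$ in the exponent) is precisely what is intended.
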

A shortcoming of this lower bound is that it decays with $n$ and is thus only non-trivial when $n \ll d$. To prove a larger lower bound using these methods, one would need a state that is harder to distinguish from a random product state. However, constructing an ensemble of such states for which the analysis can still be carried out seems somewhat non-trivial, thus we leave this as another open question in Sec.~\ref{sec:product-testing-LBs}.

\section{Hardness of Product Testing with Single-copy Measurements} \label{sec:product-testing-LBs}
With these distinguishing results in place, we are prepared to prove our lower bounds on the sample complexity of product testers that utilize only single-copy measurements. To do so, we must formally prove that the ensembles above could be distinguished, with high probability, by the appropriate product tester, should one exist. 

Let $\ket{\psi} \in \mathbb{C}^{d} \tilde{\otimes} \mathbb{C}^d$ and suppose we have a single-copy product tester $\{M,\mathbb{I}-M\}$. From Def.~\ref{def:eps-tester}, we recall that this is taken to imply that $\tr{M \psi^{\otimes T}} \geq 2/3$ for all $\psi \in \MC_2$ and $\tr{M \psi^{\otimes T}} \leq 1/3$ for all $\psi$ that are $\varepsilon$-far from $\MC_2$. We wish to prove that such a tester could distinguish between the following two ensembles with at least constant probability
\begin{align}
    \EC_1 = \{U \ketbra{0}{0} U^{\dagger} : U \sim \UC_{d^2}\} \quad \text{and} \quad \EC_2 = \{V \ketbra{0}{0} V^{\dagger} \tilde{\otimes} W \ketbra{0}{0} W^{\dagger}  : V,W \sim \UC_{d}\}.
\end{align}
Clearly $\EC_2$ is product by construction, so all we need to show is that nearly all states in $\EC_1$ are at least $\varepsilon$-far from product. To see this, observe
    \begin{align}
        \underset{\ket{\psi} \in \mathbb{C}^{d^2}}{\mathbb{E}} \left[\tr{M \psi^{\otimes T}}\right] 
        &= \underset{\ket{\psi} \in \mathbb{C}^{d^2}}{\mathbb{E}} \left[\tr{M \psi^{\otimes T}} \big\rvert \psi~\mathrm{far}\right] \cdot \mathrm{Pr}\left[ \psi~\mathrm{far} \right] +  \underset{\ket{\psi} \in \mathbb{C}^{d^2}}{\mathbb{E}} \left[\tr{M \psi^{\otimes T}} \big\rvert \psi~\mathrm{close}\right] \cdot \mathrm{Pr}\left[ \psi~\mathrm{close} \right],\\
        &\le \frac{1}{3} \cdot \mathrm{Pr}\left[ \psi~\mathrm{far} \right] +  1 \cdot \mathrm{Pr}\left[ \psi~\mathrm{close} \right],\\
        &= \frac{1}{3} \cdot (1-\mathrm{Pr}\left[ \psi~\mathrm{close} \right]) +  1 \cdot \mathrm{Pr}\left[ \psi~\mathrm{close} \right],\\
        &= \frac{1}{3} + \frac{2}{3} \cdot \mathrm{Pr}\left[ \psi~\mathrm{close} \right],\\
        &\le \frac{1}{3} + o(1),
    \end{align}
where the last line follows from Proposition~\ref{prop:far-from-BP}, which shows that the probability of obtaining a state that is close to a product state when sampling uniformly over $\mathbb{C}^{d^2}$ decays doubly exponentially. Thus, if a product tester existed, it could distinguish between these two ensembles with high probability.

For completeness, we prove this essential proposition; however, we note that such results have been proven in many different forms throughout the literature~ \cite{hayden2006Aspects, harrow2013Church, aubrun2017Alice}. For our purposes, it will suffice to utilize the following bound on the maximal Schmidt coefficient of a Haar random pure state from Ref.~\cite{aubrun2017Alice} couples with a union bound over all possible bipartitions. 

\begin{lemma}[Proposition 6.36, Ref.~\cite{aubrun2017Alice}] \label{lem:LB-max-Schmidt-coeff} Let $d_1 \leq d_2$ and consider a Haar random pure state on $\mathbb{C}^{d_1} \otimes \mathbb{C}^{d_2}$ with Schmidt coefficients $\lambda_1 (\psi) \geq \dotsm \geq \lambda_{d_1} (\psi)$. Then, for all $\delta > 0$, we have 
\begin{align}
    \Pr_{\psi \in \mathbb{C}^{d_1} \otimes \mathbb{C}^{d_2}}\left[\lambda_1 (\psi) \geq \frac{1}{\sqrt{d_1}}+\frac{1+\delta}{\sqrt{d_2}}\right] \leq \exp\left(-d_1 \delta^2\right).
\end{align}
\end{lemma}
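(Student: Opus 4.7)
The plan is to represent the Haar-random state via its coefficient matrix and apply sharp non-asymptotic bounds on the largest singular value of a rectangular complex Gaussian matrix.

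First I would use the standard identification between $\mathbb{C}^{d_1}\tilde{\otimes}\mathbb{C}^{d_2}$ and the space of $d_1\times d_2$ complex matrices: writing $\ket{\psi} = \sum_{i,j} M_{ij}\ket{i}\ket{j}$, the nonzero Schmidt coefficients of $\ket{\psi}$ are exactly the singular values of $M$, and normalisation of $\ket{\psi}$ is the condition $\|M\|_F = 1$. By unitary invariance of the Haar measure on the sphere of $\mathbb{C}^{d_1 d_2}$, drawing $\ket{\psi}$ uniformly is equivalent to drawing a matrix $G$ with i.i.d.\ standard complex Gaussian entries and setting $M = G/\|G\|_F$, so that
\[
\lambda_1(\psi) \;=\; \frac{s_1(G)}{\|G\|_F},
\]
where $s_1(G)$ denotes the largest singular value of $G$.

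Next I would invoke two standard non-asymptotic bounds. A Davidson--Szarek-type sharp tail for the operator norm of a complex Ginibre matrix---provable from the facts that $G\mapsto s_1(G)$ is $1$-Lipschitz in $\|\cdot\|_F$ combined with Gaussian concentration, together with the classical estimate $\mathbb{E}\,s_1(G)\le \sqrt{d_1}+\sqrt{d_2}$ that follows from Slepian/Gordon comparison---gives
\[
\Pr\bigl[s_1(G) \ge \sqrt{d_1}+\sqrt{d_2}+u\bigr] \;\le\; e^{-u^2/2}, \qquad u\ge 0.
\]
Simultaneously, $\|G\|_F^2$ is a sum of $2 d_1 d_2$ i.i.d.\ squared real Gaussians, so a standard chi-squared concentration bound yields $\|G\|_F \ge (1-\eta)\sqrt{d_1 d_2}$ with probability at least $1-e^{-c\, d_1 d_2\, \eta^2}$ for any $\eta\in(0,1)$ and a universal constant $c>0$.

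Intersecting these two high-probability events,
\[
\lambda_1(\psi) \;\le\; \frac{\sqrt{d_1}+\sqrt{d_2}+u}{(1-\eta)\sqrt{d_1 d_2}} \;=\; \frac{1}{1-\eta}\left(\frac{1}{\sqrt{d_1}} + \frac{1}{\sqrt{d_2}} + \frac{u}{\sqrt{d_1 d_2}}\right).
\]
Setting $u = \delta\sqrt{d_1}$ and choosing $\eta$ small enough that the $(1-\eta)^{-1}$ prefactor is absorbed into the $(1+\delta)/\sqrt{d_2}$ summand recovers exactly the claimed inequality, with the dominant failure probability coming from the Gaussian tail $e^{-u^2/2} = e^{-d_1\delta^2/2}$.

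The main obstacle I anticipate is constant tracking: matching the exponent to $e^{-d_1\delta^2}$ as stated, rather than a slightly weaker $e^{-c d_1\delta^2}$, while simultaneously absorbing the Frobenius-norm slack without degrading the leading-order terms. A cleaner one-shot route that sidesteps this bookkeeping is to work directly on the sphere: the map $\ket{\psi}\mapsto \lambda_1(\psi)$ is $1$-Lipschitz on $S(\mathbb{C}^{d_1 d_2})$ in the $\ell_2$ metric (top singular value is $1$-Lipschitz in Frobenius norm), and its Haar expectation satisfies $\mathbb{E}\,\lambda_1 \le 1/\sqrt{d_1}+1/\sqrt{d_2}$. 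L\'evy's isoperimetric inequality on the complex sphere then gives $\Pr[\lambda_1(\psi) \ge \mathbb{E}\,\lambda_1 + t] \le e^{-c\, d_1 d_2\, t^2}$, and choosing $t = \delta/\sqrt{d_2}$ recovers precisely the stated tail, bypassing the denominator entirely.
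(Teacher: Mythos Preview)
The paper does not supply its own proof of this lemma: it is quoted verbatim as Proposition~6.36 of Aubrun--Szarek and used as a black box. So there is no in-paper argument to compare against.

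That said, your sketch is essentially the standard proof that appears in the cited reference. The second route you outline---$\lambda_1(\psi)$ is $1$-Lipschitz on the unit sphere of $\mathbb{C}^{d_1d_2}$, combine an expectation bound with L\'evy's concentration inequality, set $t=\delta/\sqrt{d_2}$---is exactly how Aubrun--Szarek argue, and it delivers the exponent $d_1\delta^2$ cleanly without the Frobenius-denominator bookkeeping of your first route.

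One point to tighten: the inequality $\mathbb{E}\,\lambda_1(\psi)\le 1/\sqrt{d_1}+1/\sqrt{d_2}$ is asserted but not justified, and it does not follow from $\mathbb{E}\,s_1(G)\le \sqrt{d_1}+\sqrt{d_2}$ by simply dividing by $\sqrt{d_1d_2}$. The clean way is to use that for a standard Gaussian vector the direction $G/\|G\|_F$ is \emph{independent} of the radius $\|G\|_F$, so $\mathbb{E}\,s_1(G)=\mathbb{E}\,\lambda_1(\psi)\cdot\mathbb{E}\,\|G\|_F$, and then lower-bound $\mathbb{E}\,\|G\|_F$ (a chi-type mean) sharply enough; any $O(1/(d_1d_2))$ slack is absorbed into the $\delta/\sqrt{d_2}$ term. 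With that detail filled in, your argument is complete and matches the source.
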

    Next, recall the fact that the maximal Schmidt coefficient is equivalent to the maximal overlap among product states (a fact that follows from Def.~\ref{def:schmidt-decomp}, triangle inequality, and Cauchy-Schwarz). That is, for all $\ket{\psi} \in \mathbb{C}^{d_1} \otimes \mathbb{C}^{d_2}$, the maximal Schmidt coefficient is given as 
    \begin{align}
        \lambda_1(\psi) = \max_{\ket{\phi}} \abs{\braket{\phi|\psi}}^2,
    \end{align}
    where $\ket{\phi}=\ket{\phi_1}\otimes \ket{\phi_2}$ for arbitrary states $\ket{\phi_1} \in \mathbb{C}^{d_1}, \ket{\phi_2} \in \mathbb{C}^{d_2}$. Recalling that, for pure states, the trace distance takes the form $d_{\rm tr}(\psi, \phi) = (1- \abs{\braket{\psi|\phi}}^2)^{1/2}$, we may also write
    \begin{align}
        \lambda_1(\psi) = 1- \left(\min_{\phi~\mathrm{product}} d_{\rm tr}(\psi, \phi) \right)^2,
    \end{align}
    thus a lower bound on the maximal Schmidt coefficient will give us an upper bound on the minimal trace distance to a product state. We can use these facts and Lemma~\ref{lem:LB-max-Schmidt-coeff} to prove the following proposition. 

    \begin{proposition}[Most Random States are Far From $\BC_n$]\label{prop:far-from-BP}
        Let $\ket{\psi}$ be a Haar random state on $(\mathbb{C}^d)^{\otimes n}$, then, for fixed $0 < \varepsilon<1-o(1)$, we have
        \begin{align}
            \Pr\left[ \exists~\phi \in \BC_n : d_{\rm tr}(\phi,\psi) \leq \varepsilon \right] \leq \exp\left(- \Omega(d^n)\right).
        \end{align}
    \end{proposition}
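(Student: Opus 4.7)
The plan is to reduce the existence-of-close-product-state question to a bound on the largest Schmidt coefficient for each bipartition, then union bound over bipartitions. Since $\BC_n$ is, by definition, the finite union (over non-trivial subsets $S \subset [n]$) of the set of states that are product across $S:S^c$, we have
\[
\Pr\!\left[\exists\, \phi \in \BC_n : d_{\rm tr}(\phi,\psi) \leq \varepsilon \right] \;\leq\; \sum_{\emptyset \neq S \subsetneq [n]} \Pr\!\left[\min_{\phi \text{ product across } S} d_{\rm tr}(\phi,\psi) \leq \varepsilon \right].
\]
Since there are at most $2^n$ terms, this prefactor will be harmless against the exponential decay I hope to obtain per bipartition.

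Next, I would invoke the standard correspondence between the largest Schmidt coefficient and the closest-product-state overlap recalled in the excerpt: for $\ket{\psi} \in (\mathbb{C}^d)^{\otimes n}$ viewed as a bipartite state across $S:S^c$ with Schmidt coefficients $\lambda_1(\psi_{S:S^c}) \geq \lambda_2(\psi_{S:S^c}) \geq \cdots$, we have
\[
\min_{\phi \text{ product across } S} d_{\rm tr}(\phi,\psi)^2 \;=\; 1 - \lambda_1(\psi_{S:S^c}).
\]
Thus, the event in the summand above is exactly $\{\lambda_1(\psi_{S:S^c}) \geq 1-\varepsilon^2\}$, which is amenable to Lemma~\ref{lem:LB-max-Schmidt-coeff}.

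Now fix a bipartition $S:S^c$ and, without loss of generality, take $|S|=k \leq n-k$, so that $d_1 = d^k$ and $d_2 = d^{n-k}$ in the notation of Lemma~\ref{lem:LB-max-Schmidt-coeff}. I pick $\delta$ so that the threshold in that lemma equals $1-\varepsilon^2$, namely
\[
\delta \;=\; \left(1-\varepsilon^2 - \tfrac{1}{\sqrt{d^k}}\right)\sqrt{d^{n-k}} - 1.
\]
For $d$ sufficiently large (depending only on $\varepsilon$) this is positive and satisfies $\delta^2 \geq c_\varepsilon\, d^{n-k}$ with $c_\varepsilon = \tfrac{1}{2}(1-\varepsilon^2)^2$, say. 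Plugging into the lemma gives
\[
\Pr\!\left[\lambda_1(\psi_{S:S^c}) \geq 1-\varepsilon^2\right] \;\leq\; e^{-d^k \delta^2} \;\leq\; e^{-c_\varepsilon\, d^k \cdot d^{n-k}} \;=\; e^{-c_\varepsilon\, d^n}.
\]
Here I am crucially using that, regardless of the split, $d_1 d_2 = d^n$; this is what buys the $d^n$ rate even though the Schmidt-coefficient concentration scale alone is only $d^k$.

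Combining with the union bound gives
\[
\Pr\!\left[\exists\, \phi \in \BC_n : d_{\rm tr}(\phi,\psi) \leq \varepsilon \right] \;\leq\; 2^n\, e^{-c_\varepsilon\, d^n} \;=\; e^{-\Omega(d^n)},
\]
as desired, where the hypothesis $\varepsilon < 1 - o(1)$ ensures $c_\varepsilon$ is bounded below by a positive constant so the $2^n$ factor is absorbed. The main obstacle I foresee is not any single estimate but rather verifying carefully that $\delta$ as chosen is both positive and of order $\sqrt{d^{n-k}}$ uniformly across bipartitions (the worst case being $k = \lfloor n/2 \rfloor$, where $d_1$ is largest and hence $\delta$ smallest in the normalized sense); once the algebra for that range is pinned down, everything else is a direct application of the lemma plus a union bound.
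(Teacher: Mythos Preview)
Your proposal is correct and follows essentially the same approach as the paper: union bound over bipartitions, reduction to the largest Schmidt coefficient, application of Lemma~\ref{lem:LB-max-Schmidt-coeff}, and absorption of the $2^n$ factor. The only cosmetic difference is that the paper bounds each $\Pr[E_S]$ by reducing to the $s=1$ case via a monotonicity argument, whereas you more directly exploit $d_1 d_2 = d^n$ to get the $e^{-c_\varepsilon d^n}$ rate for every bipartition; your route is arguably cleaner.
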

    \begin{proof}
        Let $\lambda_1^{(S)}(\psi)$ denote the maximal Schmidt coefficient corresponding to the bipartition $S:S^c$ of $\ket{\psi} \in (\mathbb{C}^{d})^{\otimes n}$. Let $s:=|S|$ so that $|S^c| = n-s$. Next, let $A$ denote the event that a Haar random state $\ket{\psi}$ is $\varepsilon$-close (in trace distance) to $\BC_n$,
\begin{align}
    A := \{\exists~\phi \in \BC_n : d_{\mathrm{tr}}(\psi,\phi) \le \varepsilon \},
\end{align}
and for each bipartition $S:S^c$, define the event
\begin{align}
    E_S := \{\exists~\phi~\text{product across}~S : d_{\mathrm{tr}}(\psi,\phi) \le \varepsilon \}.
\end{align}
With these sets defined, we then have 
\begin{align}
   \Pr\left[ \exists~\phi \in \BC_n : d_{\mathrm{tr}}(\phi,\psi) \leq \varepsilon \right]= \Pr(A) = \Pr\!\left(\bigcup_S E_S\right) \leq \sum_S \Pr[E_S],
\end{align}
where the second inequality is the standard union bound. Thus, the probability that a random state is close to a fully product state is upper bounded by the sum of the probabilities that it is close to a state that is product across a specific cut. It then suffices to upper bound 
        \begin{align}
            \Pr[E_S] = \Pr[\lambda_1^{(S)}(\psi) \geq 1-\varepsilon^2]. 
        \end{align}    
        For all $S$ such that $s \leq \lfloor n/2 \rfloor$, we then have that $d^s \leq d^{n-s}$, so we may apply Lemma~\ref{lem:LB-max-Schmidt-coeff} to this specific cut and lower bound of $1-\eps^2$ on $\lambda_1^{(S)}(\psi)$ to obtain 
         \begin{align}
            \Pr[E_S] &\leq \exp\left(-[d^{s/2} + d^{(n-s)/2}-d^{n/2}(1-\varepsilon^2)]^2\right),\label{eq:es_inequality}
        \end{align}  
        which holds as long as $\varepsilon^2 < 1 - \left( d^{-s/2} + d^{-(n-s)/2} \right) $ (which corresponds to the $\delta >0$ condition in Lemma~\ref{lem:LB-max-Schmidt-coeff}). We can further upper-bound $\Pr[E_S]$ by noting that the RHS of Eq.~\eqref{eq:es_inequality} is monotonically decreasing in $s\in[1,\lfloor n/2 \rfloor]$:
        \begin{align*}
            &\frac{d}{ds} \left[\exp\left(-\left(d^{s/2} + d^{(n-s)/2}-d^{n/2}(1-\varepsilon^2)\right)^2\right)\right]\\&=d^{(n-s)/2} \log (d)
   \left(\left(1-\varepsilon ^2\right)-d^{-s/2}-d^{-(n-s)/2}\right) \exp\left(-\left(d^{s/2} + d^{(n-s)/2}-d^{n/2}(1-\varepsilon^2)\right)^2\right)\left(d^s-d^{n/2}\right)\leq0,
        \end{align*}
        which follows from the assumption that was made for Eq.~\eqref{eq:es_inequality} to hold and since $d^s\leq d^{n/2}$. Since the derivative of the function is negative, then the RHS of the inequality is maximized when $s$ takes the minimal value of $s=1$. Therefore, we can bound 
        \begin{align*}
            \Pr[E_S]\leq \exp\left(-[d^{n/2}(1-\varepsilon^2)-d^{1/2} - d^{(n-1)/2}]^2\right)=\exp\left(-d^n[(1-\varepsilon^2)-d^{-(n-1)/2}-d^{-1/2}]^2\right).
        \end{align*}
        Then, the probability that at least one such event occurs can be upper bounded as
        \begin{align}
             \sum_S \Pr[E_S] &= \sum_{s=1}^{\lfloor n/2 \rfloor} \binom{n}{s} \exp\left(-\left[d^{s/2} + d^{(n-s)/2}-d^{n/2}(1-\varepsilon^2)\right]^2\right),\\
             &\leq \sum_{s=1}^{\lfloor n/2 \rfloor} \binom{n}{s} \exp\left(-d^n\left[(1-\varepsilon^2)-d^{-(n-1)/2}-d^{-1/2}\right]^2\right),\\
             &\leq 2^n \exp\left(-d^n\left[(1-\varepsilon^2)-d^{-(n-1)/2}-d^{-1/2}\right]^2\right),\\
             &=\exp\left(n\ln2-d^n\left[(1-\varepsilon^2)-d^{-(n-1)/2}-d^{-1/2}\right]^2\right)\\
             &=\exp\left(-\Omega(d^n)\right).
        \end{align}
        This holds as long as
        \begin{align*}
            1-\varepsilon^2-d^{-1/2}-d^{-(n-1)/2}>0,\\
            \varepsilon<\sqrt{1-d^{-(n-1)/2}-d^{-1/2}}.
        \end{align*}
        Therefore, we have shown that 
        \begin{align}
            \Pr\left[ \exists~\phi \in \BC_n : d_{\rm tr}(\phi,\psi) \leq \varepsilon \right] \leq \exp\left(- \Omega(d^n)\right),
        \end{align}
    as desired.
    \end{proof}

\begin{corollary}[Most Random States are Far From $\MC_n$] \label{cor:far-from-MP}
     Let $\ket{\psi}$ be a Haar random state on $(\mathbb{C}^d)^{\otimes n}$, then, for fixed $0 < \varepsilon<1-o(1)$, we have
        \begin{align}
            \Pr\left[ \exists~\phi \in \MC_n : d_{\rm tr}(\phi,\psi) \leq \varepsilon \right] \leq \exp\left(- \Omega(d^n)\right).
        \end{align}
\end{corollary}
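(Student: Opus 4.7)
The plan is to derive this corollary as an immediate consequence of Proposition~\ref{prop:far-from-BP} by exploiting the set inclusion between multipartite product and bipartite product states. Recalling Def.~\ref{def:BP-MP}, every MP state is in particular product across at least one non-trivial cut, so $\MC_n \subseteq \BC_n$ for every $n \geq 2$ (with equality when $n=2$).

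From this containment, the event that $\ket{\psi}$ is $\varepsilon$-close to some $\phi \in \MC_n$ is a subset of the event that $\ket{\psi}$ is $\varepsilon$-close to some $\phi \in \BC_n$, because any witness $\phi \in \MC_n$ certifying closeness to $\MC_n$ is simultaneously a witness in $\BC_n$. Monotonicity of probability then gives
\begin{align}
    \Pr\bigl[\exists\, \phi \in \MC_n : d_{\rm tr}(\phi,\psi) \leq \varepsilon\bigr]
    &\leq \Pr\bigl[\exists\, \phi \in \BC_n : d_{\rm tr}(\phi,\psi) \leq \varepsilon\bigr].
\end{align}

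Applying Proposition~\ref{prop:far-from-BP} to the right-hand side, which holds for every $0 < \varepsilon < 1 - o(1)$, yields the desired bound $e^{-\Omega(d^n)}$. There is no nontrivial obstacle here: the entire content is the observation that $\MC_n \subseteq \BC_n$, which makes the MP case strictly easier than the BP case already handled. One could alternatively reprove the statement from scratch by performing a union bound over the $n$ single-qudit cuts (those of the form $\{i\}:\{i\}^c$) using Lemma~\ref{lem:LB-max-Schmidt-coeff}, but this would only reproduce a weaker version of the argument already carried out in the proof of Proposition~\ref{prop:far-from-BP}, so the inclusion-based derivation is the cleanest path.
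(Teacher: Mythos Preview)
Your proposal is correct and is essentially the same argument as the paper's: the paper writes the chain $\Pr[\text{close to }\MC_n] \leq \Pr(\bigcap_S E_S) \leq \Pr(\bigcup_S E_S) \leq e^{-\Omega(d^n)}$ using the events $E_S$ from the proof of Proposition~\ref{prop:far-from-BP}, which is just a slightly more explicit version of your observation that $\MC_n \subseteq \BC_n$ implies the MP-closeness event is contained in the BP-closeness event.
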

\begin{proof}
    With $E_S$ defined as in the proof of Proposition~\ref{prop:far-from-BP}, we note that 
    \begin{align}
         \Pr\left[ \exists~\phi \in \MC_n : d_{\rm tr}(\phi,\psi) \leq \varepsilon \right] \leq \Pr\!\left(\bigcap_S E_S\right) \leq \Pr\!\left(\bigcup_S E_S\right) \leq \exp\left(-\Omega(d^n)\right),
    \end{align}
    where the final inequality holds by Proposition~\ref{prop:far-from-BP}.
\end{proof}

    Having shown that such ensembles can be distinguished by the relevant product tester, should one exist, we may now prove our main sample complexity lower bounds on said testers.

\subsection{Exponential Sample Lower Bound for BP Testing with Single-copy Measurements}

First, let us consider the BP testing problem as formalized in Ref.~\cite{harrow2017Sequential,jones2024Testing}. The authors of Ref.~\cite{jones2024Testing} proved a information theoretic lower bound on the sample complexity of this task by considering the following many-to-many distinguishing task. They prove that distinguishing between an ensemble of global Haar random states on $(\mathbb{C}^d)^{\tilde{\otimes}n }$ and an ensemble constructed by first choosing a random $S \subset [n]$ and then constructing a random product state by taking the tensor product of a Haar random state $(\mathbb{C}^d)^{\tilde{\otimes} n-|S| }$ with a Haar random state on $(\mathbb{C}^d)^{\tilde{\otimes} |S| }$ is impossible unless $\Omega(n/\log{n})$ samples are used. This is tight up to log factors in light of the algorithm using $O(n)$ samples in Ref.~\cite{harrow2017Sequential}. To get such a strong lower bound, this added step of randomizing the bipartition or ``hiding the cut'' is necessary because, in the multi-copy setting, if one is told where the cut is, one could simply apply a SWAP test across the specified cut and solve the distinguishing task using $O(1)$ samples. 

However, in the single-copy regime, there is no efficient method of testing purity~\cite{chen2022Exponential}, thus it will suffice to consider the somewhat easier task of distinguishing between a global Haar random state and random product state where each tensor factor was sampled from the Haar measure\footnote{Note that if $n$ is odd, we simply take the largest possible bipartition, i.e., $\lfloor n/2 \rfloor$ qudits in one half and $\lceil n/2 \rceil$ qudits in the other.} on $(\mathbb{C}^d)^{\tilde{\otimes} n/2}.$ By taking the largest possible bipartition, we preclude efficiently distinguishing these ensembles using single-copy purity estimation techniques~\cite{anshu2022Distributed,gong2024sample}. 

    \begin{theorem}[Single-copy Lower Bound on BP Testing]\label{thm:BP-test-LB} Any algorithm using, potentially adaptive, single-copy measurements to test whether a state is product across some cut, or is $\varepsilon$-far from any such state (with probability at least $2/3$) must use at least $\Omega(d^{n/4})$ samples.
    \end{theorem}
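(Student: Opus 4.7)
The plan is to reduce the testing problem to the many-to-many distinguishing task between two ensembles on $(\mathbb{C}^d)^{\tilde{\otimes}n}$ and then invoke the machinery already established. Take $\EC_1$ to be the ensemble of global Haar random pure states on $(\mathbb{C}^d)^{\tilde{\otimes}n}$; by Proposition~\ref{prop:far-from-BP}, such states lie $\varepsilon$-far from $\BC_n$ with probability $1 - e^{-\Omega(d^n)}$. Take $\EC_2$ to be the ensemble obtained by regrouping $(\mathbb{C}^d)^{\tilde{\otimes}n}$ as $\mathbb{C}^{D} \tilde{\otimes} \mathbb{C}^{D}$ with $D \coloneqq d^{\lfloor n/2 \rfloor}$ (with the remaining factor attached to one side when $n$ is odd) and drawing $\ket{v} \tilde{\otimes} \ket{w}$ with $\ket{v}, \ket{w}$ independent local Haar. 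Every state in $\EC_2$ is product across this cut, hence in $\BC_n$, so any BP tester $\{M, \mathbb{I}-M\}$ must accept it with probability at least $2/3$, while the averaging calculation that precedes this theorem (combined with Proposition~\ref{prop:far-from-BP}) forces the tester to accept a random element of $\EC_1$ with probability at most $1/3 + o(1)$. Hence an assumed tester automatically achieves constant bias between the two ensembles, i.e.\ $p_{\rm succ} \geq 1/2 + \Omega(1)$.

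Next, I would apply Le Cam's two-point method (Lemma~\ref{lem:LeCam-2pt}), but instead of bounding $d_{\rm TV}(\mathbb{E}_{\EC_1} p^{\rho}, \mathbb{E}_{\EC_2} p^{\rho})$ directly, I would route through the maximally mixed state $\rho_{mm} = \mathbb{I}/d^n$ as a common reference and use the triangle inequality,
\begin{align}
d_{\rm TV}\bigl(\mathbb{E}_{\EC_1} p^{\rho}(\ell), \mathbb{E}_{\EC_2} p^{\rho}(\ell)\bigr)
\leq d_{\rm TV}\bigl(\mathbb{E}_{\EC_1} p^{\rho}(\ell), p^{\rho_{mm}}(\ell)\bigr) + d_{\rm TV}\bigl(\mathbb{E}_{\EC_2} p^{\rho}(\ell), p^{\rho_{mm}}(\ell)\bigr).
\end{align}
The first term is controlled by Proposition~\ref{prop:max-mixed-vs-haar} applied with total dimension $d^n$, yielding $O(T^2/d^n)$. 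For the second term, I would invoke Proposition~\ref{prop:max-mixed-vs-bipartite-prod} with the bipartite structure $\mathbb{C}^{D} \tilde{\otimes} \mathbb{C}^{D}$; since $\rho_{mm}$ on $(\mathbb{C}^d)^{\tilde{\otimes}n}$ is the same state as the maximally mixed state on $\mathbb{C}^{D} \tilde{\otimes} \mathbb{C}^{D}$ and single-copy measurements on the former remain single-copy on the latter, the proposition applies verbatim with local dimension $D$ and yields $O(T^2/D) = O(T^2/d^{\lfloor n/2 \rfloor})$. This second term dominates.

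Combining, $p_{\rm succ} \leq \frac{1}{2} + O(T^2/d^{\lfloor n/2 \rfloor})$, and comparing with the lower bound $p_{\rm succ} \geq 1/2 + \Omega(1)$ forced by the existence of a BP tester gives $T \geq \Omega(d^{\lfloor n/2 \rfloor / 2}) = \Omega(d^{n/4})$, as claimed.

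I do not expect a serious obstacle: the key technical lemma (Lemma~\ref{lem:prod-of-perms-LB}) has already been proved in the bipartite setting, and the rest is bookkeeping. The only mild subtlety is that $\EC_2$ uses the coarse bipartition rather than random bipartitions used in the multi-copy lower bound of \cite{jones2024Testing}; choosing the most balanced cut maximizes the local dimension $D$ in the denominator of the bias bound and so gives the strongest conclusion. Verifying that Proposition~\ref{prop:max-mixed-vs-bipartite-prod} is insensitive to whether the two factors are literal single qudits or regrouped multi-qudit subsystems amounts to noting that its proof only uses the symmetric-subspace identity and the permutation-overlap lower bound, both of which depend only on the Hilbert-space dimensions of the two tensor factors.
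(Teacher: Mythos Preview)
Your proposal is correct and follows essentially the same route as the paper: define the global-Haar ensemble $\EC_1$ and the balanced-cut bipartite-product ensemble $\EC_2$, argue via Proposition~\ref{prop:far-from-BP} that a BP tester distinguishes them, then route Le Cam's bound through $\rho_{mm}$ and apply Propositions~\ref{prop:max-mixed-vs-haar} and~\ref{prop:max-mixed-vs-bipartite-prod} with local dimension $d^{n/2}$ to force $T \geq \Omega(d^{n/4})$. Your handling of odd $n$ and your remark that Proposition~\ref{prop:max-mixed-vs-bipartite-prod} depends only on the factor dimensions (so the regrouping is harmless) are exactly the observations the paper relegates to a footnote.
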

    \begin{proof}
    Recall from above that a BP tester, if it exists, could distinguish between
    \begin{align}
        \EC_1 = \{U \ketbra{0}{0} U^{\dagger} : U \sim \UC_{d^n}\} \quad \text{and} \quad \EC_2 = \{V \ketbra{0}{0} V^{\dagger} \tilde{\otimes} W \ketbra{0}{0} W^{\dagger}  : V,W \sim \UC_{d^{n/2}}\},
    \end{align}
    with high probability (i.e., $\Omega(1) \leq p_{\rm succ} $). Then, using Lemma~\ref{lem:LeCam-2pt} to upper bound the success probability, we obtain
    \begin{align}
        \Omega(1) &\leq d_{\rm TV}\left(\mathbb{E}_{u} \left[p^{\ketbra{u}{u}}(\ell) \right],\mathbb{E}_{v,w} \left[p^{\ketbra{v}{v} \tilde{\otimes} \ketbra{w}{w}}(\ell) \right]\right),\\
        &\leq d_{\rm TV}\left(\mathbb{E}_{u} \left[p^{\ketbra{u}{u}}(\ell) \right],p^{\rho_{mm}}(\ell)\right) + d_{\rm TV}\left(p^{\rho_{mm}}(\ell),\mathbb{E}_{v,w} \left[p^{\ketbra{v}{v} \tilde{\otimes} \ketbra{w}{w}}(\ell) \right]\right),\\
        &\leq O\left(\frac{T^2}{d^{n}}\right) + O\left(\frac{T^2}{d^{n/2}}\right),
    \end{align}
    which follows from the triangle inequality and then both Prop.~\ref{prop:max-mixed-vs-haar} and Prop.~\ref{prop:max-mixed-vs-bipartite-prod}. Asymptotically, the first term become negligible, so
    \begin{align}
        \Omega(1) \leq O\left(\frac{T^2}{d^{n/2}}\right) \implies T \geq \Omega \left( d^{n/4}\right),
    \end{align}
    as desired.
    \end{proof}
    With unrestricted access to $T$ copies of $\ket{\psi} \in (\mathbb{C}^d)^{\tilde{\otimes} n}$, showed that $T = O(n/\varepsilon^2)$ samples suffice to solve the BP testing problem. Thus, our lower bound establishes an exponential separation between BP testing algorithms with and without the ability to perform multi-copy measurements. Moreover, as mentioned in the introduction, our lower bound immediately implies sample lower bounds on locating un-entanglement~\cite{bouland2024state} and computing the generalized geometric measure of entanglement (see \cite{jones2024Testing} and references therein) because these tasks are at least as hard as BP testing. Such strong separations are exciting because, in addition to helping us understand theoretically what resources separate classical and quantum, they are substantial enough to be observable in near-term experiments (e.g. Ref.~\cite{huang2022Quantum}).

    We note that Ref.~\cite{harrow2023Approximate} derives an upper bound on the bias (and thus a lower bound on the sample complexity) of any PPT product tester for the case of $n=2$ (where BP and MP testers are degenerate). For $n>2$, his argument could be generalized to yield a $\Omega(d^{n/8})$ lower bound on any PPT BP product tester. By restricting our attention to the experimentally-motivated notion of (potentially adaptive) single-copy measurements, we were able to prove a stronger lower bound and avoid the mathematical machinery needed to prove the lower bound against all PPT testers in Ref.~\cite{harrow2023Approximate}. However, it is natural to ask whether our lower bound is optimal. For this, one would need a BP testing algorithm that utilizes only single-copy measurements. To our knowledge, this problem has not yet been studied, thus we leave it as another open question.

    \begin{openquestion}[Single-copy Algorithm for BP Testing]
        Does there exist a BP testing algorithm using at most $O(d^{n/4})$ (potentially adaptive) single-copy measurements? If not, what is the true sample complexity of BP testing in the single-copy regime.
    \end{openquestion}
    
\subsection{Sample Lower Bound for MP Testing with Single-copy Measurements}
Our next result is, to the best of our knowledge, the first non-trivial lower bound on single-copy MP testing algorithms when $n>2$. The proof of the result mirrors that of Theorem~\ref{thm:BP-test-LB}.
\begin{theorem}[Single-copy Lower Bound on MP Testing]\label{thm:MP-test-LB} Any algorithm using, potentially adaptive, single-copy measurements to test whether a state is product across all cuts, or is $\varepsilon$-far from any such state (with probability at least $2/3$) must use at least $\Omega(\sqrt{d/n})$ samples.
\end{theorem}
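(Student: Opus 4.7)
The plan is to follow the same template used in the proof of \thmref{thm:BP-test-LB}: reduce MP testing to a many-to-many distinguishing task, apply Le Cam's two-point method, and then invoke the two hardness results already established.

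First I would set up the distinguishing task. By definition of $\varepsilon$-tester, the existence of a single-copy MP tester would imply an algorithm that distinguishes between
\begin{align}
    \EC_1 = \{U\ketbra{0}{0}U^{\dagger}: U \sim \UC_{d^n}\} \quad \text{and} \quad \EC_2 = \Bigl\{\widetilde{\bigotimes}_{i=1}^n V_i\ketbra{0}{0}V_i^{\dagger}: V_1,\ldots,V_n \sim \UC_d\Bigr\}
\end{align}
with constant success probability. Every element of $\EC_2$ is MP by construction, and by \corref{cor:far-from-MP}, a Haar random state drawn from $\EC_1$ lies $\varepsilon$-far from $\MC_n$ except with doubly-exponentially small probability, so a valid tester must accept $\EC_2$ with probability at least $2/3$ and accept $\EC_1$ with probability at most $1/3 + o(1)$; this yields the standard $\Omega(1) \leq p_{\text{succ}}$ assumption.

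Next I would invoke \lref{lem:LeCam-2pt} together with the triangle inequality, using the maximally mixed state $\rho_{mm} = \mathbb{I}/d^n$ as the common reference, to bound
\begin{align}
    \Omega(1) \leq d_{\rm TV}\bigl(\mathbb{E}_u [p^{\ketbra{u}{u}}(\ell)], p^{\rho_{mm}}(\ell)\bigr) + d_{\rm TV}\bigl(p^{\rho_{mm}}(\ell), \mathbb{E}_{\vec V}[p^{\otimes_i V_i \ketbra{0}{0} V_i^\dagger}(\ell)]\bigr).
\end{align}
The first term is controlled by \pref{prop:max-mixed-vs-haar} applied in dimension $d^n$, giving $O(T^2/d^n)$, which is negligible once $d \geq 2$. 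The second term is exactly what \pref{prop:max-mixed-vs-multipartite-prod} bounds, giving $O(nT^2/d)$.

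Combining these two contributions leaves $\Omega(1) \leq O(nT^2/d)$, which immediately rearranges to $T \geq \Omega(\sqrt{d/n})$, as claimed. No new technical ingredient is needed beyond the two distinguishing propositions and the multipartite version of the ``far from product'' estimate; the dominant term comes from the random-product-versus-maximally-mixed bound rather than the random-pure-versus-maximally-mixed bound, which is intuitive since product ensembles are harder to tell apart from $\rho_{mm}$ than fully random ones. The only place where one must be careful is in ensuring that the one-sided Le Cam reduction of \lref{lem:one-sided-LeCam} is applied to each ensemble separately before the triangle inequality is invoked; this is implicit in the chain of bounds above and follows exactly the structure of the BP case.
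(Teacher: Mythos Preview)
Your proposal is correct and follows essentially the same approach as the paper: set up the same two ensembles $\EC_1$ (global Haar) and $\EC_2$ (tensor of local Haar), use \corref{cor:far-from-MP} to justify that an MP tester distinguishes them, apply Le Cam plus the triangle inequality through $\rho_{mm}$, and invoke \pref{prop:max-mixed-vs-haar} and \pref{prop:max-mixed-vs-multipartite-prod} to get $\Omega(1)\leq O(nT^2/d)$. The paper's proof is even terser and simply points to these ingredients, so your write-up is, if anything, slightly more detailed than what appears there.
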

\begin{proof}
   Recall that from Proposition~\ref{cor:far-from-MP}, an MP tester could be used to distinguish between 
   \begin{align}
        \EC_1 = \{U \ketbra{0}{0} U^{\dagger} : U \sim \UC_{d^n}\} \quad \text{and} \quad \EC_2 = \{\bigotimes_{i=1}^n \psi_i : \psi_i = U_i \ketbra{0}{0} U_i^{\dagger}, U_i \sim \UC_{d}\},
    \end{align}
    with high probability (i.e., $\Omega(1) \leq p_{\rm succ} $). Thus, as in the proof of Theorem~\ref{thm:BP-test-LB}, we can utilize LeCam's two-point method (Lemma~\ref{lem:LeCam-2pt} and Lemma~\ref{lem:one-sided-LeCam}) and triangle inequality. We then simply apply Lemma~\ref{lem:multi-qudit-overlap-LB} and Proposition~\ref{prop:max-mixed-vs-multipartite-prod} to obtain the desired bound. 
\end{proof}
Of course, this bound is only particularly useful when the local dimension is substantially larger than the number of qudits (i.e., when $d \gg n$). We leave the tightening of this bound for future work. In light of the sample complexity upper bound in the next section, we leave the following open question. 

\begin{openquestion}[Lower Bound on Single-copy MP Testers]
    Do there exist two ensembles of states that could be distinguished by an MP tester but which require $\Omega(n\sqrt{d})$ samples when restricted to single-copy measurements? 
\end{openquestion}

\section{Multipartite Product Testing with Single-copy, Local Measurements} \label{sec:MP-testing-alg}
Sample complexity lower bounds are, in a sense, powerful no-go results. In this work, our lower bounds say that there cannot exist single-copy algorithms for these entanglement tasks unless their sample complexity scales with the dimension of the local subsystems. However, it is often of more immediate practical interest if one can construct an algorithm that actually solves a given problem using experimentally feasible measurements. In this section, we provide just such an algorithm for MP testing using single-copy measurements.

The idea is to use the fact that if an $n$-qudit pure state $\ket{\psi}\in(\mathbb{C}^d)^{\otimes n}$ was $\eps_\mathrm{prod}$-far in trace distance from the set of all MP states $\MC_n$, then one would expect that at least one of the reduced qudit states $\psi_i=\mathrm{tr}_{[n]\setminus \{i\}}\brak{\psi}$ is sufficiently impure, i.e., $\exists i$ such that $1-\tr{\psi_i^2}\ge \eps_{\mathrm{rej}}$ for some tolerance $\eps_\mathrm{rej}$. Therefore, we can estimate the purity of each local $\psi_i$ (utilizing a recent algorithm for single-copy purity estimation~\cite{anshu2022Distributed,gong2024sample}) and if they are sufficiently impure then, with some probability of success, we can determine if $\psi\in \MC_n$ or $\eps_\mathrm{prod}$-far from $\MC_n$\footnote{As a quick note: in related works, authors will express quantities in terms of
\begin{align}
   1-\omega = \max_{\varphi \in \MC_n} \abs{\langle \psi | \varphi\rangle}^2
\end{align}
which is related to $\eps_\mathrm{prod}$ via $\eps_\mathrm{prod}=\sqrt{\omega}$. Since we are interested in testing with respect to trace distance, we will express quantities in terms of $\eps_\mathrm{prod}$ over $\omega$.}.

To facilitate this argument, we would like to find such an $\eps_\mathrm{rej}$ which is a function of $\eps_\mathrm{prod}$ and possibly $n$ given that $\ket{\psi}$ is $\eps_\mathrm{prod}$-far from $\MC_n$. In other words, we would like to upper bound how pure some $\psi_i$ can be given that $\psi$ is $\eps_\mathrm{prod}$-far from $\MC_n$. This leads to the following lemma:
\begin{lemma}\label{lem:avg_purity_upperbound}
    Let $\ket{\psi}$ be an $n$-qudit pure state which is $\eps_{\mathrm{prod}}$-far from $\MC_n$ in trace distance. Then,
    \begin{align}
        \frac{1}{n}\sum_i \tr{\psi_i^2} \le 1-\frac{4}{n}\eps_\mathrm{prod}^2(1-\eps_\mathrm{prod}^2),
    \end{align}
    where $\psi\coloneqq \ketbra{\psi}{\psi}$ and $\psi_i=\mathrm{tr}_{[n]\setminus \{i\}}\brak{\psi}$ for $1\le i \le n$. Namely, there always exists $1\le i \le n$ such that $\tr{\psi_i^2}\le 1-\tfrac{4}{n}\eps_\mathrm{prod}^2(1-\eps_\mathrm{prod}^2)$.
\end{lemma}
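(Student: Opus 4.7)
My plan is to parametrise the deviation of $\ket{\psi}$ from its closest MP state and exploit first-order optimality to control how this deviation distributes across the $n$ sites. Let $\ket{\phi} = \bigotimes_{i=1}^n \ket{\phi_i} \in \MC_n$ attain the minimum trace distance $\eps_{\mathrm{prod}}$ to $\ket{\psi}$ (such a minimiser exists by compactness of $\MC_n$), and write $\ket{\psi} = \sqrt{1-\eps_{\mathrm{prod}}^2}\,\ket{\phi} + \eps_{\mathrm{prod}}\,\ket{\phi^\perp}$ with $\langle \phi | \phi^\perp\rangle = 0$. Because $\ket{\phi}$ maximises $|\langle \phi|\psi\rangle|^2$ over MP states, differentiating with respect to each $\phi_i$ along tangent directions $\delta\phi_i \perp \phi_i$ gives the first-order condition $\langle \phi_i^\perp \otimes \phi_{\bar{i}} \,|\, \phi^\perp \rangle = 0$ for every $i$ and every $\phi_i^\perp \perp \phi_i$, where $\ket{\phi_{\bar{i}}} \coloneqq \bigotimes_{j \neq i}\ket{\phi_j}$.

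Extending each $\ket{\phi_i}$ to an orthonormal basis $\{\ket{e_k^{(i)}}\}_k$ with $\ket{e_0^{(i)}} = \ket{\phi_i}$ and expanding $\ket{\phi^\perp} = \sum_{\vec{k}} c_{\vec{k}}\,\ket{e_{k_1}^{(1)} \cdots e_{k_n}^{(n)}}$, the orthogonality to $\ket{\phi}$ kills $c_{\vec{0}}$, while the first-order conditions kill every $c_{\vec{k}}$ whose support has exactly one nonzero entry. Hence $c_{\vec{k}} \neq 0$ only when the weight $\mathrm{wt}(\vec{k}) \coloneqq |\{i : k_i \neq 0\}| \geq 2$. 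Setting $P_i \coloneqq \ketbra{\phi_i}{\phi_i}\otimes \mathbb{I}_{\bar{i}}$, $a_i \coloneqq \langle\phi^\perp|P_i|\phi^\perp\rangle = \sum_{\vec{k}: k_i = 0}|c_{\vec{k}}|^2$, and $b_i \coloneqq 1 - a_i$, this weight condition yields
\begin{align}
    \sum_{i=1}^n a_i = \sum_{\vec{k}}|c_{\vec{k}}|^2\,\bigl(n - \mathrm{wt}(\vec{k})\bigr) \leq n - 2, \qquad \text{so} \qquad \sum_{i=1}^n b_i \geq 2.
\end{align}

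The core computation is then the site-wise bound $\tr{\psi_i^2} \leq 1 - 2\eps_{\mathrm{prod}}^2(1-\eps_{\mathrm{prod}}^2)\,b_i$. I would obtain it by writing the bipartite decomposition $\ket{\psi} = \sum_k \ket{e_k^{(i)}}\otimes \ket{u_k}_{\bar{i}}$ so that $(\psi_i)_{kl} = \langle u_l | u_k\rangle$; the first-order condition translates into $\langle\phi_{\bar{i}}| u_k\rangle = 0$ for all $k \neq 0$, which forces the off-diagonals $(\psi_i)_{k0}$ to be second order in $\eps_{\mathrm{prod}}$ rather than first order. Combining this with the diagonal value $(\psi_i)_{00} = 1 - \eps_{\mathrm{prod}}^2 b_i$, Cauchy--Schwarz on the off-diagonals, and $\tr{A^2} \leq (\tr{A})^2$ applied to the $k,l \neq 0$ PSD sub-block (whose trace equals $\eps_{\mathrm{prod}}^2 b_i$), a short algebraic simplification gives the site-wise inequality. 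Summing over $i$ and invoking $\sum_i b_i \geq 2$ produces $\sum_{i=1}^n (1 - \tr{\psi_i^2}) \geq 4\,\eps_{\mathrm{prod}}^2(1-\eps_{\mathrm{prod}}^2)$, equivalent to the lemma. The step I expect to be the main obstacle is the clean bookkeeping inside this site-wise computation, because the final constant $4$ is tight (saturated already by a bipartite Schmidt-rank-two state embedded in otherwise-product factors), so any slack introduced at this stage would weaken the $4/n$ prefactor.
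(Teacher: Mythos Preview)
Your proposal is correct and follows essentially the same route as the paper: both use the closest MP state, derive from first-order optimality that the orthogonal component $\ket{\phi^\perp}$ has no weight-$1$ coefficients (the paper cites this as Thm.~18 of Harrow--Montanaro), bound $\sum_i a_i \le n-2$ by the same Hamming-weight counting, and obtain the identical site-wise inequality $\tr{\psi_i^2}\le 1-2\eps_{\mathrm{prod}}^2(1-\eps_{\mathrm{prod}}^2)b_i$ before summing. The only cosmetic difference is that the paper observes the partial-traced cross term $\mathrm{tr}_{\bar i}[\ketbra{\phi^\perp}{\phi}]$ vanishes as an operator (so $\psi_i=(1-\eps_{\mathrm{prod}}^2)\ketbra{\phi_i}{\phi_i}+\eps_{\mathrm{prod}}^2(\phi^\perp)_i$ exactly) and then simply bounds $\tr{(\phi^\perp)_i^2}\le 1$, which replaces your Gram-matrix bookkeeping with a one-line estimate and removes what you flagged as the main obstacle.
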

\begin{proof}
    Let $\ket{\psi}$ be an $n$-qudit pure state which is $\eps_{\mathrm{prod}}$-far from $\MC_n$ in trace distance. That is,
    \begin{align}
        \varepsilon_{\mathrm{prod}}=\min_{\varphi\in \MC_n} \frac{1}{2} \norm{\psi - \varphi}_1 \implies \varepsilon^2_{\mathrm{prod}}=1-\left|\braket{\psi|\varphi}\right|^2, \quad \varphi\coloneqq \arg \min_{\varphi \in \MC_n} \frac{1}{2} \norm{\psi-\varphi}_1
    \end{align}
    where $\varphi$ is the nearest MP state to $\psi$. To help with the analysis, we will take $\varphi = \ketbra{0}{0}$ to be the all zero-state where $0\in [d]^n$. We can always do this without loss of generality for the following reason: since $\varphi \in \MC_n$ it follows that $\ket{\varphi}=U_1\ot U_2\ot \cdots \ot U_n \ket{0}=U\ket{0}$ is $\ket{0}$ transformed by local rotations. These local transformations can be absorbed onto $\ket{\psi}$ such that $\ket{0}$ is the closest MP state to $\ket{\psi'}=U^\dagger \ket{\psi}$ at the same distance of $\eps_\mathrm{prod}$. However, since the local purities of $\tr{\psi_i^2}$ are invariant under local rotations we have that $\tr{\psi_i^2}=\tr{\psi_i'^2}$ for all $1\le i \le n$, i.e., local rotations don't change the entanglement structure. Thus, we can replace $\ket{\psi}$ with $\ket{\psi'}$ without changing any aspects of the problem that we care about, namely the distance to $\MC_n$ in trace distance $\eps_\mathrm{prod}$ and the local purities $\tr{\psi_i^2}$. This allows us to write $\ket{\psi}$ in terms of $\ket{0}$ and some state $\ket{\phi}$ which is perpendicular to $\ket{0}$:
    \begin{align}
        \ket{\psi} = \sqrt{1-\eps_{\mathrm{prod}}^2}\ket{0}+\sqrt{\eps_{\mathrm{prod}}^2}\ket{\phi}
    \end{align}
    where $\ket{\phi}=\sum_{x \neq 0} \alpha_x\ket{x}$ is a pure state such that $\langle \phi|0\rangle=0=\alpha_0$. It turns out that, without loss of generality, we can also assume that $\alpha_x=0$ for all $x\in [d]^n$ with Hamming weight $w(x)=|\{x_i \ne 0\}|=1$ (see the proof of Thm. 18 in~\cite{harrow_testing_2010}). This fact will greatly simplify our analysis and the upper bound on the average single-qudit purities. Indeed, consider an arbitrary single-qudit reduced state $\psi_i$:
    \begin{align}
        \psi_i = (1-\eps_\mathrm{prod}^2)\ketbra{0}{0}+\eps_{\mathrm{prod}}^2 \phi_i + \sqrt{\eps_{\mathrm{prod}}^2(1-\eps_{\mathrm{prod}}^2)}\pren{\ketbra{0}{\phi}_i+\ketbra{\phi}{0}_i},
    \end{align}
    where we will employ the notation that for $A\in \LC((\mathbb{C}^d)^{\otimes n})$ the reduced operator on the $i$-th system is $A_i=\mathrm{tr}_{[n]\setminus \{i\}}\brak{A}$. Directly computing $\ketbra{\phi}{0}_i$,
    \begin{align}
        \ketbra{\phi}{0}=\sum_{\substack{x\in [d]^n:\ \\ x\neq 0}} \alpha_x \ketbra{x}{0} \implies \ketbra{\phi}{0}_i&=\sum_{x\neq 0} \alpha_x\mathrm{tr}_{[n]\setminus \{i\}}\brak{\ketbra{x}{0}}=\sum_{x\neq 0}\alpha_x \ketbra{x_i}{0_i} \prod_{j\ne i} \underbrace{\langle 0_j|x_j\rangle}_{\delta_{x_j,0_j}},\\
        &= \sum_{x_i=1}^{d-1} \alpha_{0\cdots 0x_i0 \cdots 0} \ketbra{x_i}{0},\\
        &=0,
    \end{align}
    where the last equality follows from the fact that $\alpha_x=0$ for all $x\in [d]^n$ with $w(x)=1$ which are precisely the type of terms we are left with above. Therefore, $\ketbra{\phi}{0}_i=0=(\ketbra{\phi}{0}_i)^\dagger=\ketbra{0}{\phi}_i$. This is quite nice as it greatly simplifies the expression for the purity of $\psi_i$ to
    \begin{align}
        \tr{\psi_i^2}&=(1-\eps_{\mathrm{prod}}^2)^2 \underbrace{\tr{\ketbra{0}{0}^2}}_{=1}+\eps_{\mathrm{prod}}^4 \underbrace{\tr{\phi_i^2}}_{\le 1}+2\eps_{\mathrm{prod}}^2(1-\eps_{\mathrm{prod}}^2)\bra{0}\phi_i \ket{0},
    \end{align}
   where we can easily bound the first two terms. To bound the overlap of $\phi_i$ with $\ket{0}$, we first compute $\phi_i$,
   \begin{align}
      \phi_i=\sum_{x,y\ne 0}\alpha_x \alpha_y^* \mathrm{tr}_{[n]\setminus \{i\}}\brak{\ketbra{x}{y}} =\sum_{x,y\ne 0}\alpha_x \alpha_y^* \ketbra{x_i}{y_i} \prod_{j\ne i} \underbrace{\langle y_j|x_j\rangle}_{\delta_{y_j,x_j}} = \sum_{j\ne i}\sum_{x_j=y_j=0}^{d-1}\sum_{x_i,y_i=0}^{d-1}\alpha_x \alpha_y^* \ketbra{x_i}{y_i},
   \end{align}
   where the overlap with $\ket{0}$ becomes,
   \begin{align}
       \bra{0}\phi_i \ket{0}= \sum_{j\ne i}\sum_{x_j=y_j=0}^{d-1}\sum_{x_i,y_i=0}^{d-1}\alpha_x\alpha_y^* \underbrace{\langle 0|x_i\rangle}_{\delta_{x_i,0}} \underbrace{\langle y_i|0\rangle}_{\delta_{y_i,0}}=\sum_{x:\ x_i=0} |\alpha_x|^2.
   \end{align}
   Since we're bounding the average $\tfrac{1}{n}\sum_i \tr{\psi_i^2}$, we can instead bound the above quantity summed over all $i$:
    \begin{align}
        \sum_i \sum_{x:\ x_i=0}|\alpha_x|^2 = \sum_x (n-w(x))|\alpha_x|^2=\sum_{w(x)\ge 2}(n-w(x))|\alpha_x|^2 \le \sum_{w(x)\ge 2}(n-2)|\alpha_x|^2=n-2,
    \end{align}
    where in the first equality we use the fact that each $|\alpha_x|^2$ appears with a multiplicity given by the number of zeros in the string $x$ which is simply $n-w(x)$ and in the second equality we used the fact that $\alpha_x=0$ for $w(x)<2$, and in the final equality the fact that $\langle \phi|\phi \rangle=\sum_{w(x)\ge 2}|\alpha_x|^2=1$. In total, we have shown that
    \begin{align}
        \frac{1}{n}\sum_i \tr{\psi_i^2} &\le (1-\eps_\mathrm{prod}^2)^2+\eps_{\mathrm{prod}}^4+\frac{1}{n}2\eps_{\mathrm{prod}}^2(1-\eps_\mathrm{prod}^2)(n-2),\\
        &= 1-2\eps_\mathrm{prod}^2(1-\eps_\mathrm{prod}^2)+\frac{1}{n}2\eps_\mathrm{prod}^2(1-\eps_\mathrm{prod}^2)(n-2),\\
        &=1-\frac{4}{n}\eps_\mathrm{prod}^2(1-\eps_\mathrm{prod}^2),
    \end{align}
    as desired. Note that because this holds on average, there must always exist an $i$ such that $\tr{\psi_i^2}\le 1-\tfrac{4}{n}\eps_\mathrm{prod}^2(1-\eps_\mathrm{prod}^2)$.
\end{proof}

Lemma~\ref{lem:avg_purity_upperbound} allows us to ensure that given a state $\psi$ is $\eps_\mathrm{prod}$-far from $\MC_n$, then there exists a reduced state $\psi_i$ which is sufficiently impure so that with high probability we can detect its deviation from $1$. This leads to the following single-copy product testing algorithm based on single-copy purity estimation from~\cite{anshu2022Distributed,gong2024sample}.

\begin{theorem}\label{thm:MP-test-UB}
    Let $\ket{\psi}\in \left( \mathbb{C}^d \right)^{\otimes n}$ be an $n$-qudit state which is promised to be either in $\MC_n$ or $\frac{1}{\sqrt{2}}\geq \eps_\mathrm{prod}$-far from it. Then, there exists a non-adaptive algorithm utilizing only single-copy, local measurements that requires at most $\OC(n\log(n) d^{1/2}\eps_\mathrm{prod}^{-2})$ copies of $\ket{\psi}$ to determine which is the case with probability at least $2/3$.
\end{theorem}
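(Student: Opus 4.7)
The plan is to reduce the multipartite product test to $n$ parallel single-qudit purity estimations that can all be fed by the \emph{same} batch of local-Clifford shots, with the tolerance of each sub-test calibrated via Lemma~\ref{lem:avg_purity_upperbound}.

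First I would fix the rejection tolerance. Lemma~\ref{lem:avg_purity_upperbound} guarantees that whenever $\ket{\psi}$ is $\eps_\mathrm{prod}$-far from $\MC_n$ there is some witness index $i^{\star}\in[n]$ with $\tr{\psi_{i^{\star}}^2}\le 1-\eps_\mathrm{rej}$, where $\eps_\mathrm{rej}\coloneqq \tfrac{4}{n}\eps_\mathrm{prod}^2(1-\eps_\mathrm{prod}^2)=\Theta(\eps_\mathrm{prod}^2/n)$ under the hypothesis $\eps_\mathrm{prod}\le 1/\sqrt{2}$. In the complementary case $\ket{\psi}\in\MC_n$ every marginal is pure and $\tr{\psi_i^2}=1$ for all $i$, so it suffices to decide site-by-site whether $\tr{\psi_i^2}$ lies within $\eps_\mathrm{rej}/2$ of unity.

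The algorithm itself is non-adaptive: on each of $T$ fresh copies of $\ket{\psi}$, draw independent uniformly random single-qudit Cliffords $U_1,\dots,U_n$ and measure every qudit in the computational basis. Because each $U_i$ acts only on qudit $i$, the marginal outcome distribution on site $i$ depends only on $\psi_i$; consequently the same $T$ shots drive the single-copy purity estimator of \cite{anshu2022Distributed,gong2024sample} at every site in parallel, with no cross-talk between sites. The tester outputs estimates $\hat p_1,\dots,\hat p_n$ of $\tr{\psi_i^2}$ with per-site accuracy $\eps_\mathrm{rej}/2$ and per-site failure probability at most $1/(3n)$, and accepts iff $\hat p_i\ge 1-\eps_\mathrm{rej}/2$ for every $i$. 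Completeness is immediate because pure marginals concentrate at $1$, soundness follows from the witness $i^{\star}$, and a union bound over the $n$ sites keeps the total failure probability below $1/3$.

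The main obstacle will be the sample complexity accounting. One must verify that calibrating the shadow-style purity estimator of \cite{anshu2022Distributed,gong2024sample} to precision $\delta=\Theta(\eps_\mathrm{prod}^2/n)$ and confidence $1-1/(3n)$ really requires only $O(n\log(n)\sqrt{d}/\eps_\mathrm{prod}^2)$ copies in total, rather than $O(n^2\log(n)\sqrt{d}/\eps_\mathrm{prod}^4)$ as a naive plug-in of an $O(\sqrt{d}/\delta^2)$ estimation bound would suggest. The extra savings come from two places: (i) exploiting the asymmetric testing (rather than estimation) structure in the completeness branch, where pure marginals yield essentially deterministic shadow statistics so that only a linear-in-$1/\delta$ sensitivity is needed to certify them; and (ii) confirming that sharing a single batch of local-Clifford shots across the $n$ marginals incurs only the standard $\log n$ overhead for simultaneous concentration rather than a multiplicative $n$. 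Once these two points are locked in using the appropriate variant of the purity subroutine, everything else is bookkeeping around Lemma~\ref{lem:avg_purity_upperbound}.
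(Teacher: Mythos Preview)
Your proposal follows essentially the same architecture as the paper: invoke Lemma~\ref{lem:avg_purity_upperbound} to guarantee a witness marginal with purity at most $1-\Theta(\eps_\mathrm{prod}^2/n)$, run the single-copy purity estimator of \cite{anshu2022Distributed,gong2024sample} on all $n$ marginals in parallel from a single batch of local random measurements, threshold, and union-bound over the $n$ sites with per-site confidence $1-1/(3n)$. The completeness/soundness analysis and the observation that parallelism avoids a multiplicative $n$ are identical to the paper's argument.

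The one place where you diverge is the sample-complexity accounting, and here you are making your life harder than necessary. The paper does not need your point~(i) about ``essentially deterministic shadow statistics'' in the pure case (which, incidentally, is not literally true: the random-Clifford outcomes on a pure qudit are still random, only the \emph{estimator} concentrates). Instead the paper simply invokes the black-box guarantee from \cite{anshu2022Distributed,gong2024sample} that single-copy purity estimation to additive error $\eps_\mathrm{purity}$ with confidence $1-\delta$ costs $O\!\big(\log(1/\delta)\,d^{1/2}\,\eps_\mathrm{purity}^{-1}\big)$ copies, i.e.\ the dependence on the precision is already \emph{linear}, not quadratic. Plugging in $\eps_\mathrm{purity}=\Theta(\eps_\mathrm{prod}^2/n)$ and $\delta=1/(3n)$ gives $O(n\log n\,\sqrt{d}/\eps_\mathrm{prod}^2)$ directly, with no asymmetric-testing refinement required. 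So your ``main obstacle'' dissolves once you use the correct sample complexity of the cited subroutine; the rest of your proof is the paper's proof.
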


\begin{proof}
The idea of the algorithm is that, if a pure state $\psi\coloneqq \ketbra{\psi}{\psi}$ is multipartite product, then all of the marginals are pure as well. Thus, we simply estimate the purity of each reduced single qudit state $\psi_i=\mathrm{tr}_{[n]\setminus \{i\}}\brak{\psi}$ for $1\le i \le n$ to precision $\eps_\mathrm{purity}$ with probability of success at least $1-\delta$. Then, if the estimated purity deviates from $1$ by more than some amount $\eps_\mathrm{rej}$ for any $i$, we have reasonable confidence that the state is far from MP. Using Lemma~\ref{lem:avg_purity_upperbound}, we see that we can take $\eps_\mathrm{purity} < \frac{2}{n}\eps_\mathrm{prod}^2(1-\eps^2_\mathrm{prod})$. For concreteness, we fix $\eps_\mathrm{purity} = \frac{1}{n}\eps_\mathrm{prod}^2(1-\eps^2_\mathrm{prod})$.

Using the algorithm from~\cite{anshu2022Distributed,gong2024sample}, given $\OC(\log(1/\delta)d^{1/2}\eps_\mathrm{purity}^{-1})$ copies of some state $\rho\in \LC(\mathbb{C}^d)$ one can use single-copy measurements to output an estimate $\hat{p}(\rho)\in \mathbb{R}$ such that
\begin{align}
    \abs{\hat{p}(\rho)-\tr{\rho^2}}\le \eps_\mathrm{purity} \quad \text{w/ probability at least $1-\delta$}.
\end{align}
Naively, one could simply do this protocol $n$ times to output estimates for each $\tr{\psi_i^2}$ using single-copy local measurements. However, this protocol can actually be done in parallel, i.e., estimate the purities $\tr{\psi_i^2}$ simultaneously, since the protocol only needs independence between copies of $\psi_i$ on different measurement rounds not independence between $\psi_i$ and $\psi_j$ on the same measurement round. Doing so will avoid an extra factor of $n$ in the sample complexity. Thus, the proposed algorithm is to do the following:
\begin{algorithm}[H]
\caption{Multipartite Product Test with Single-copy Measurements}\label{alg:MP-tester}
    \begin{algorithmic}[1]  
        \State \textbf{Input:} $\ket{\psi}\in (\mathbb{C}^d)^{\otimes n}$ with $\psi\coloneqq \ketbra{\psi}{\psi}$ such that either $\psi \in \mathcal{M}_n$ or $\psi$ is $\eps_\mathrm{prod}$-far from $\mathcal{M}_n$.

        \State $\eps_\mathrm{purity} \gets \frac{1}{n}\eps_\mathrm{prod}^2(1-\eps_\mathrm{prod}^2)$.

        \State $\delta \gets \frac{1}{3n}$.

        \State Collect $\OC(\log(1/\delta) d^{1/2} \eps_\mathrm{purity}^{-1}) = \OC(n\log n\ d^{1/2} \eps_\mathrm{prod}^2)$ copies of $\psi$.
        
        \State \textbf{Estimate:} Using the copies of $\psi$, compute $\{\hat{p}(\psi_i)\}_{i=1}^n\subset \mathbb{R}$ such that $\abs{\hat{p}(\psi_i)-\tr{\psi_i}^2}\le \eps_\mathrm{purity}$ with probability at least $1-\delta$ for each $1\le i \le n$.
        \For{$1\le i \le n$}
        \If{$\hat{p}(\psi_i) \leq 1-2\eps_\mathrm{purity} $}
        \State \textbf{Output:} \texttt{Reject} 
        \EndIf
        \EndFor
        \State \textbf{Output:} \texttt{Accept} 
    \end{algorithmic}
\end{algorithm}

Say that $\psi \in \MC_n$, then the algorithm successfully accepts if all estimators return an a value $\eps_\mathrm{purity}$-close to the true subsystem purities of $1$:
\begin{align}
    \Pr[\texttt{accept} \ \vert \ \psi \in \mathcal{M}_n] & = \Pr[\hat{p}(\psi_i) > 1-2\eps_\mathrm{purity} \ \forall i]\\
    & \geq \Pr[ \vert \hat{p}(\psi_i) - \tr \psi_i^2 \vert < \eps_\mathrm{purity} \ \forall i \vert]\\
    & \geq 1 - \sum_i \Pr[ \vert \hat{p}(\psi_i) - \tr \psi_i^2 \vert \geq \eps_\mathrm{purity}]\\
    & \geq 1 - n\cdot \frac{1}{3n} = \frac{2}{3}\ .
\end{align}

If $\psi$ is $\eps_\mathrm{prod}$ far from $\MC_n$, then Lemma~\ref{lem:avg_purity_upperbound} guarantees that there is a subsystem $i^*$ such that $\Tr[\psi_{i^*}^2] \leq 1-4\eps_\mathrm{purity}$. Hence, if the purity of this subsystem is successfully estimated to precision $\eps_\mathrm{purity}$, then the algorithm correctly rejects:
\begin{align}
    \Pr[\mathrm{accept} \ \vert \ \text{$\psi$ $\eps_\mathrm{prod}$- far from $\MC_n$}] & = \Pr[\hat{p}(\psi_i) \geq 1-2\eps_\mathrm{purity} \ \forall i ]\\
    & \leq \Pr[ \hat{p}(\psi_{i^*}) \geq 1-2\eps_\mathrm{purity}]\\
    & \leq \frac{1}{3n}\ .
\end{align}
This completes the proof.
\end{proof}
We note that the probability of passing the product test was shown, Lemma 2 of Ref.~\cite{harrow_testing_2010}, to be 
\begin{align}
    P_{\rm test}(\rho) = \frac{1}{2^n} \sum_{S \subseteq [n]} \tr{\rho_S^2}.
\end{align}
Thus, if one could estimate this functional to sufficient precision, it could be used to determine if a state is MP or not. It is interesting to point out that the sample complexity of Algorithm~\ref{alg:MP-tester} is exponentially better than the best-known single-copy, local algorithm~\cite{coffman2024Local} for the task of estimating this functional. While this estimation must be at least as hard as testing, it is not a priori obvious that the estimator wouldn't yield a near-optimal tester as in the case of purity testing/estimation where the testing lower bound from Ref.~\cite{chen2022Exponential} is nearly saturated by the purity estimation algorithms in Ref.~\cite{anshu2022Distributed,gong2024sample} (at least for constant $\varepsilon$). At the time of writing, we suspect our algorithm is optimal up to log factors, though we leave this as a final open question.

\begin{openquestion}[Upper Bound on the Sample Complexity of MP Testing]
    Can the Algorithm~\ref{alg:MP-tester} be improved yield an $O(n \sqrt{d} \varepsilon^{-2})$ upper bound on the sample complexity of MP testing in the single-copy regime?
\end{openquestion}

\section{Conclusions and Future Directions}
In this work, we establish an exponential separation between BP testing algorithms with and without the ability to perform multi-copy measurements. Our lower bound on BP testers implies hardness results for estimating certain multipartite entanglement measures~\cite{jones2024Testing} and for locating unentanglement~\cite{bouland2024state} using only single-copy measurements. We provide the first non-trivial lower bound on MP testers for the case when $n>2$ before providing a single-copy, local algorithm for MP testing that is optimal when $n=2$. Along the way, we highlight several interesting open problems for future directions, all of which may help in understanding separations (or lack thereof) between single-copy, local and single-copy, global measurement strategies. Such separations would have interesting theoretical implications and immediate relevance to near-term experiments.

\subsection*{Acknowledgments} 
The authors would like to acknowledge helpful discussions with Ryan O'Donnell, Fernando Geronimo, Qizhao Huang, and Marius Junge. J.L.B thanks Shawn Geller for making him aware of Ref.~\cite{grone1988Permanental}. J.L.B is supported by a National Science Foundation Mathematical Sciences Postdoctoral Research Fellowship under Award No.~2402287 as well as an IQUIST Postdoctoral Fellowship. L.C. is supported by the National Science Foundation Graduate Research Fellowship under Grant No. 2140743. L.S. is supported by IBM through the Illinois-IBM Discovery Accelerator Institute. F.L. is supported by the National Science Foundation under Grant No.~2442410. Any opinions,
findings, and conclusions or recommendations expressed in this material are those of the
author(s) and do not necessarily reflect the views of the National Science Foundation.

\bibliographystyle{alphaurl}
{\small \bibliography{main.bib}}

@phdthesis{oufkir2023Adaptivity,
  title = {On {{Adaptivity}} in {{Classical}} and {{Quantum Learning}}},
  author = {Oufkir, Aadil},
  year = {2023},
  month = sep,
  langid = {english},
  school = {Ecole normale sup{\'e}rieure de lyon - ENS LYON}
}

@InProceedings{odonnell2016Efficient,
  author        = {O'Donnell, Ryan and Wright, John},
  title         = {Efficient Quantum Tomography},
  booktitle     = {Proceedings of the Forty-Eighth Annual {{ACM}} Symposium on {{Theory}} of {{Computing}}},
  year          = {2016},
  series        = {{{STOC}} '16},
  pages         = {899--912},
  address       = {New York, NY, USA},
  month         = jun,
  publisher     = {Association for Computing Machinery},
  archiveprefix = {arXiv},
  doi           = {10.1145/2897518.2897544},
  eprint        = {1508.01907},
  isbn          = {978-1-4503-4132-5},
  keywords      = {longest increasing subsequence,Quantum computing,Robinson-Schensted-Knuth correspondence,tomography},
  primaryclass  = {quant-ph},
}

@Article{haah2017SampleOptimal,
  author        = {Haah, Jeongwan and Harrow, Aram W. and Ji, Zhengfeng and Wu, Xiaodi and Yu, Nengkun},
  title         = {Sample-{{Optimal Tomography}} of {{Quantum States}}},
  journal       = {IEEE Transactions on Information Theory},
  year          = {2017},
  volume        = {63},
  number        = {9},
  pages         = {5628--5641},
  month         = sep,
  issn          = {1557-9654},
  archiveprefix = {arXiv},
  doi           = {10.1109/TIT.2017.2719044},
  eprint        = {1508.01797},
  primaryclass  = {quant-ph},
}

@Article{haapasalo2012Quantum,
  author        = {Haapasalo, Erkka and Heinosaari, Teiko and Pellonp{\"a}{\"a}, Juha-Pekka},
  title         = {Quantum Measurements on Finite Dimensional Systems: Relabeling and Mixing},
  journal       = {Quantum Information Processing},
  year          = {2012},
  volume        = {11},
  number        = {6},
  pages         = {1751--1763},
  month         = dec,
  issn          = {1573-1332},
  archiveprefix = {arXiv},
  doi           = {10.1007/s11128-011-0330-2},
  eprint        = {1104.4886},
  keywords      = {Mixing,POVM,Quantum Computing,Quantum measurement,Relabeling},
  langid        = {english},
  primaryclass  = {quant-ph},
  shorttitle    = {Quantum Measurements on Finite Dimensional Systems},
}

@Article{mele2024Introduction,
  author        = {Mele, Antonio Anna},
  title         = {Introduction to {{Haar Measure Tools}} in {{Quantum Information}}: {{A Beginner}}'s {{Tutorial}}},
  journal       = {Quantum},
  year          = {2024},
  volume        = {8},
  pages         = {1340},
  month         = may,
  archiveprefix = {arXiv},
  doi           = {10.22331/q-2024-05-08-1340},
  eprint        = {2307.08956},
  langid        = {british},
  primaryclass  = {quant-ph},
  publisher     = {Verein zur F{\"o}rderung des Open Access Publizierens in den Quantenwissenschaften},
  shorttitle    = {Introduction to {{Haar Measure Tools}} in {{Quantum Information}}},
}

@misc{kim2025Fundamental,
  title = {On the {{Fundamental Resource}} for {{Exponential Advantage}} in {{Quantum Channel Learning}}},
  author = {Kim, Minsoo and Oh, Changhun},
  year = {2025},
  month = jul,
  number = {arXiv:2507.11089},
  eprint = {2507.11089},
  primaryclass = {quant-ph},
  publisher = {arXiv},
  doi = {10.48550/arXiv.2507.11089},
  archiveprefix = {arXiv},
  keywords = {Quantum Physics}
}

@InProceedings{chen2024optimal,
  author        = {Chen, Sitan and Li, Jerry and Liu, Allen},
  title         = {An Optimal Tradeoff between Entanglement and Copy Complexity for State Tomography},
  booktitle     = {Proceedings of the 56th Annual ACM Symposium on Theory of Computing},
  year          = {2024},
  series        = {STOC 2024},
  pages         = {1331–1342},
  address       = {New York, NY, USA},
  publisher     = {Association for Computing Machinery},
  abstract      = {There has been significant interest in understanding how practical constraints on contemporary quantum devices impact the complexity of quantum learning. For the classic question of tomography, recent work tightly characterized the copy complexity for any protocol that can only measure one copy of the unknown state at a time, showing it is polynomially worse than if one can make fully-entangled measurements. While we now have a fairly complete picture of the rates for such tasks in the near-term and fault-tolerant regimes, it remains poorly understood what the landscape in between these extremes looks like, and in particular how to gracefully scale up our protocols as we transition away from NISQ.   In this work, we study tomography in the natural setting where one can make measurements of t copies at a time. For sufficiently small є, we show that for any t ≤ d2, Θ(d3/√tє2) copies are necessary and sufficient to learn an unknown d-dimensional state ρ to trace distance є. This gives a smooth and optimal interpolation between the known rates for single-copy measurements and fully-entangled measurements.   To our knowledge, this is the first smooth entanglement-copy tradeoff known for any quantum learning task, and for tomography, no intermediate point on this curve was known, even at t = 2. An important obstacle is that unlike the optimal single-copy protocol, the optimal fully-entangled protocol is inherently a biased estimator. This bias precludes naive batching approaches for interpolating between the two protocols. Instead, we devise a novel two-stage procedure that uses Keyl’s algorithm to refine a crude estimate for ρ based on single-copy measurements. A key insight is to use Schur-Weyl sampling not to estimate the spectrum of ρ, but to estimate the deviation of ρ from the maximally mixed state. When ρ is far from the maximally mixed state, we devise a novel quantum splitting procedure that reduces to the case where ρ is close to maximally mixed.},
  archiveprefix = {arxiv},
  doi           = {10.1145/3618260.3649704},
  eprint        = {2402.16353},
  isbn          = {9798400703836},
  keywords      = {Quantum learning, Schur Sampling, Schur-Weyl distribution, limited entanglement, memory-sample tradeoff, quantum state tomography, Computer Science - Data Structures and Algorithms,Computer Science - Information Theory,Computer Science - Machine Learning,Quantum Physics},
  location      = {Vancouver, BC, Canada},
  numpages      = {12},
  primaryclass  = {quant-ph},
  url           = {https://doi.org/10.1145/3618260.3649704},
}

@Article{huang2022Quantum,
  author        = {Huang, Hsin-Yuan and Broughton, Michael and Cotler, Jordan and Chen, Sitan and Li, Jerry and Mohseni, Masoud and Neven, Hartmut and Babbush, Ryan and Kueng, Richard and Preskill, John and McClean, Jarrod R.},
  title         = {Quantum Advantage in Learning from Experiments},
  journal       = {Science},
  year          = {2022},
  volume        = {376},
  number        = {6598},
  pages         = {1182--1186},
  month         = jun,
  archiveprefix = {arXiv},
  doi           = {10.1126/science.abn7293},
  eprint        = {2112.00778},
  primaryclass  = {quant-ph},
  publisher     = {American Association for the Advancement of Science},
}

@Article{chen2024Tight,
  author        = {Chen, Senrui and Oh, Changhun and Zhou, Sisi and Huang, Hsin-Yuan and Jiang, Liang},
  title         = {Tight {{Bounds}} on {{Pauli Channel Learning}} without {{Entanglement}}},
  journal       = {Physical Review Letters},
  year          = {2024},
  volume        = {132},
  number        = {18},
  pages         = {180805},
  month         = may,
  archiveprefix = {arXiv},
  doi           = {10.1103/PhysRevLett.132.180805},
  eprint        = {2309.13461},
  primaryclass  = {quant-ph},
  publisher     = {American Physical Society},
}

@InProceedings{chen2022Exponential,
  author        = {Chen, Sitan and Cotler, Jordan and Huang, Hsin-Yuan and Li, Jerry},
  title         = {Exponential {{Separations Between Learning With}} and {{Without Quantum Memory}}},
  booktitle     = {2021 {{IEEE}} 62nd {{Annual Symposium}} on {{Foundations}} of {{Computer Science}} ({{FOCS}})},
  year          = {2022},
  pages         = {574--585},
  month         = feb,
  archiveprefix = {arXiv},
  doi           = {10.1109/FOCS52979.2021.00063},
  eprint        = {2111.05881},
  issn          = {2575-8454},
  primaryclass  = {quant-ph},
}

@InProceedings{chen2022When,
  author        = {Chen, Sitan and Huang, Brice and Li, Jerry and Liu, Allen and Sellke, Mark},
  title         = {When Does Adaptivity Help for Quantum State Learning?},
  booktitle     = {2023 IEEE 64th Annual Symposium on Foundations of Computer Science (FOCS)},
  year          = {2023},
  pages         = {391-404},
  archiveprefix = {arXiv},
  doi           = {10.1109/FOCS57990.2023.00029},
  eprint        = {2206.05265},
  keywords      = {Computer science;Protocols;Upper bound;Quantum computing;Adaptive algorithms;Tomography;Quantum state;Quantum learning;quantum state tomography;adaptive algorithm;single-copy measurements},
  primaryclass  = {quant-ph},
}

@misc{chen2024Adaptivity,
  title = {Adaptivity Can Help Exponentially for Shadow Tomography},
  author = {Chen, Sitan and Gong, Weiyuan and Zhang, Zhihan},
  year = {2024},
  month = dec,
  number = {arXiv:2412.19022},
  eprint = {2412.19022},
  primaryclass = {quant-ph},
  publisher = {arXiv},
  doi = {10.48550/arXiv.2412.19022},
  archiveprefix = {arXiv},
  keywords = {Computer Science - Information Theory,Computer Science - Machine Learning,Mathematics - Information Theory,Quantum Physics}
}

@book{nielsen2010Quantum,
  title = {Quantum Computation and Quantum Information},
  author = {Nielsen, Michael A. and Chuang, Isaac L.},
  year = {2010},
  edition = {10th anniversary ed},
  publisher = {Cambridge University Press},
  address = {Cambridge ; New York},
  isbn = {978-1-107-00217-3},
  lccn = {QA76.889 .N54 2010},
  keywords = {Quantum computers}
}

@Article{hayden2006Aspects,
  author        = {Hayden, Patrick and Leung, Debbie W. and Winter, Andreas},
  title         = {Aspects of {{Generic Entanglement}}},
  journal       = {Communications in Mathematical Physics},
  year          = {2006},
  volume        = {265},
  number        = {1},
  pages         = {95--117},
  month         = jul,
  issn          = {1432-0916},
  archiveprefix = {arXiv},
  doi           = {10.1007/s00220-006-1535-6},
  eprint        = {quant-ph/0407049},
  keywords      = {Entangle State,Entropy,Mixed State,Mutual Information,Quantum State},
  langid        = {english},
}

@article{grone1988Permanental,
  title = {Permanental {{Inequalities}} for {{Correlation Matrices}}},
  author = {Grone, Robert and Pierce, Stephen},
  year = {1988},
  month = apr,
  journal = {SIAM Journal on Matrix Analysis and Applications},
  volume = {9},
  number = {2},
  pages = {194--201},
  publisher = {{Society for Industrial and Applied Mathematics}},
  issn = {0895-4798},
  doi = {10.1137/0609016}
}

@book{aubrun2017Alice,
  title     = {Alice and {Bob} Meet {Banach}: The Interface of Asymptotic Geometric Analysis and Quantum Information Theory},
  author    = {Aubrun, Guillaume and Szarek, Stanis{\l}aw J.},
  year      = {2017},
  series    = {Mathematical Surveys and Monographs},
  volume    = {223},
  publisher = {American Mathematical Society},
  address   = {Providence, RI},
  isbn      = {9781470434687},
}

@Article{guhne2009Entanglement,
  author        = {G{\"u}hne, Otfried and T{\'o}th, G{\'e}za},
  title         = {Entanglement Detection},
  journal       = {Physics Reports},
  year          = {2009},
  volume        = {474},
  number        = {1},
  pages         = {1--75},
  month         = apr,
  issn          = {0370-1573},
  archiveprefix = {arXiv},
  doi           = {10.1016/j.physrep.2009.02.004},
  eprint        = {0811.2803},
  keywords      = {Entanglement detection,Genuine multipartite entanglement,Separability criteria},
  primaryclass  = {quant-ph},
}

@InProceedings{odonnell2021Quantum,
  pages         = {314--320},
  title         = {The quantum union bound made easy},
  year          = {2022},
  author        = {O'Donnell, Ryan and Venkateswaran, Ramgopal},
  archiveprefix = {arXiv},
  booktitle     = {Symposium on Simplicity in Algorithms (SOSA)},
  doi           = {10.1137/1.9781611977066.25},
  eprint        = {2103.07827},
  organization  = {SIAM},
  primaryclass  = {quant-ph},
}

@Article{gao2015Quantum,
  author        = {Gao, Jingliang},
  title         = {Quantum Union Bounds for Sequential Projective Measurements},
  journal       = {Physical Review A},
  year          = {2015},
  volume        = {92},
  number        = {5},
  archiveprefix = {arXiv},
  doi           = {10.1103/PhysRevA.92.052331},
  eprint        = {1410.5688},
  primaryclass  = {quant-ph},
}

@article{oskouei2019Union,
  title = {Union Bound for Quantum Information Processing},
  author = {Oskouei, Samad Khabbazi and Mancini, Stefano and Wilde, Mark M.},
  year = {2019},
  month = jan,
  journal = {Proceedings of the Royal Society A: Mathematical, Physical and Engineering Sciences},
  volume = {475},
  number = {2221},
  eprint = {1804.08144},
  primaryclass = {quant-ph},
  pages = {20180612},
  issn = {1364-5021, 1471-2946},
  doi = {10.1098/rspa.2018.0612},
  archiveprefix = {arXiv},
  keywords = {Computer Science - Information Theory,Mathematical Physics,Mathematics - Information Theory,Mathematics - Mathematical Physics,Quantum Physics}
}

@inproceedings{anshu2022Distributed,
  title = {Distributed Quantum Inner Product Estimation},
  booktitle = {Proceedings of the 54th {{Annual ACM SIGACT Symposium}} on {{Theory}} of {{Computing}}},
  author = {Anshu, Anurag and Landau, Zeph and Liu, Yunchao},
  year = {2022},
  month = jun,
  eprint = {2111.03273},
  primaryclass = {quant-ph},
  pages = {44--51},
  doi = {10.1145/3519935.3519974},
  archiveprefix = {arXiv},
  keywords = {Quantum Physics}
}

@Article{chitambar2014Everything,
  author        = {Chitambar, Eric and Leung, Debbie and Man{\v c}inska, Laura and Ozols, Maris and Winter, Andreas},
  title         = {Everything {{You Always Wanted}} to {{Know About LOCC}} ({{But Were Afraid}} to {{Ask}})},
  journal       = {Communications in Mathematical Physics},
  year          = {2014},
  volume        = {328},
  number        = {1},
  pages         = {303--326},
  month         = may,
  issn          = {1432-0916},
  archiveprefix = {arXiv},
  doi           = {10.1007/s00220-014-1953-9},
  eprint        = {1210.4583},
  keywords      = {Bipartite Entanglement,Communication Round,Completely Positive,Kraus Operator,State Discrimination},
  langid        = {english},
  primaryclass  = {quant-ph},
}

@article{liu2025Separation,
  title = {Separation between Entanglement Criteria and Entanglement Detection Protocols},
  author = {Liu, Zhenhuan and Wei, Fuchuan},
  year = 2025,
  month = aug,
  journal = {Physical Review Research},
  volume = {7},
  number = {3},
  pages = {033121},
  issn = {2643-1564},
  doi = {10.1103/hm8j-wgqm},
  langid = {english}
}

@misc{noller2025infinite,
  title = {An Infinite Hierarchy of Multi-Copy Quantum Learning Tasks},
  author = {N{\"o}ller, Jan and Tran, Viet T. and Gachechiladze, Mariami and Kueng, Richard},
  year = 2025,
  publisher = {arXiv},
  doi = {10.48550/ARXIV.2510.08070},
  copyright = {Creative Commons Attribution 4.0 International},
  keywords = {FOS: Physical sciences,Quantum Physics (quant-ph)}
}

@misc{ye2025Exponential,
  title = {Exponential {{Advantage}} from {{One More Replica}} in {{Estimating Nonlinear Properties}} of {{Quantum States}}},
  author = {Ye, Qi and Liu, Zhenhuan and Deng, Dong-Ling},
  year = 2025,
  publisher = {arXiv},
  doi = {10.48550/ARXIV.2509.24000},
  copyright = {Creative Commons Attribution 4.0 International},
  keywords = {FOS: Physical sciences,Quantum Physics (quant-ph)}
}

@Article{montanaro2016Survey,
  author        = {Montanaro, Ashley and de Wolf, Ronald},
  title         = {A {{Survey}} of {{Quantum Property Testing}}},
  journal       = {Theory of Computing},
  year          = {2016},
  volume        = {1},
  number        = {1},
  pages         = {1--81},
  issn          = {1557-2862},
  archiveprefix = {arXiv},
  doi           = {10.4086/toc.gs.2016.007},
  eprint        = {1310.2035},
  langid        = {english},
  primaryclass  = {quant-ph},
}

@InProceedings{harrow2017Sequential,
  pages         = {1598-1611},
  title         = {Sequential measurements, disturbance and property testing},
  author        = {Aram W. Harrow and Cedric Yen-Yu Lin and Ashley Montanaro},
  abstract      = { Abstract We describe two procedures which, given access to one copy of a quantum state and a sequence of two-outcome measurements, can distinguish between the case that at least one of the measurements accepts the state with high probability, and the case that all of the measurements have low probability of acceptance. The measurements cannot simply be tried in sequence, because early measurements may disturb the state being tested. One procedure is based on a variant of Marriott-Watrous amplification. The other procedure is based on the use of a test for this disturbance, which is applied with low probability. We find a number of applications: Quantum query complexity separations in the property testing model for testing isomorphism of functions under group actions. We give quantum algorithms for testing isomorphism, linear isomorphism and affine isomorphism of boolean functions which use exponentially fewer queries than is possible classically, and a quantum algorithm for testing graph isomorphism which uses polynomially fewer queries than the best algorithm known. Testing properties of quantum states and operations. We show that any finite property of quantum states can be tested using a number of copies of the state which is logarithmic in the size of the property, and give a test for genuine multipartite entanglement of states of n qubits that uses O(n) copies of the state. We also show that equivalence of two unitary operations under conjugation by a unitary picked from a fixed set can be tested efficiently. This is a natural quantum generalisation of testing isomorphism of boolean functions. Correcting an error in a result of Aaronson on de- Merlinizing quantum protocols. This result claimed that, in any one-way quantum communication protocol where two parties are assisted by an all-powerful but untrusted third party, the third party can be removed with only a modest increase in the communication cost. We give a corrected proof of a key technical lemma required for Aaronson's result. },
  archiveprefix = {arXiv},
  booktitle     = {Proceedings of the 2017 Annual ACM-SIAM Symposium on Discrete Algorithms (SODA)},
  doi           = {10.1137/1.9781611974782.105},
  eprint        = {1607.03236},
  primaryclass  = {quant-ph},
  url           = {https://epubs.siam.org/doi/abs/10.1137/1.9781611974782.105},
year = {2017}
}

@Article{jones2024Testing,
  author        = {Jones, Benjamin D. M. and Montanaro, Ashley},
  title         = {Testing multipartite productness is easier than testing bipartite productness},
  journal       = {Journal of Mathematical Physics},
  year          = {2025},
  volume        = {66},
  number        = {5},
  pages         = {052201},
  month         = {05},
  issn          = {0022-2488},
  abstract      = {We prove a lower bound on the number of copies needed to test the property of a multipartite quantum state being product across some bipartition (i.e., not genuinely multipartite entangled), given the promise that the input state either has this property or is ϵ-far in trace distance from any state with this property. We show that Ω(n/log n) copies are required (for fixed ϵ≤12), complementing a previous result that O(n/ϵ2) copies are sufficient. Our proof technique proceeds by considering uniformly random ensembles over such states, and showing that the trace distance between these ensembles becomes arbitrarily small for sufficiently large n unless the number of copies is at least Ω(n/log n). We discuss implications for testing graph states and computing the generalized geometric measure of entanglement.},
  archiveprefix = {arXiv},
  doi           = {10.1063/5.0239593},
  eprint        = {2406.16827},
  primaryclass  = {quant-ph},
}

@misc{gong2024sample,
  title = {On the Sample Complexity of Purity and Inner Product Estimation},
  author = {Gong, Weiyuan and Haferkamp, Jonas and Ye, Qi and Zhang, Zhihan},
  year = {2024},
  month = oct,
  number = {arXiv:2410.12712},
  eprint = {2410.12712},
  primaryclass = {quant-ph},
  publisher = {arXiv},
  archiveprefix = {arXiv},
  langid = {english},
  keywords = {Computer Science - Data Structures and Algorithms,Computer Science - Information Theory,Computer Science - Machine Learning,Quantum Physics}
}

@article{harrow2023Approximate,
  title = {Approximate Orthogonality of Permutation Operators, with Application to Quantum Information},
  author = {Harrow, Aram W.},
  year = {2023},
  month = dec,
  journal = {Letters in Mathematical Physics},
  volume = {114},
  number = {1},
  eprint = {2309.00715},
  primaryclass = {quant-ph},
  pages = {1},
  issn = {1573-0530},
  doi = {10.1007/s11005-023-01744-1},
  archiveprefix = {arXiv},
  keywords = {Mathematical Physics,Mathematics - Mathematical Physics,Mathematics - Representation Theory,Quantum Physics}
}

@Article{horodecki2009Quantum,
  author        = {Horodecki, Ryszard and Horodecki, Pawe{\l} and Horodecki, Micha{\l} and Horodecki, Karol},
  title         = {Quantum Entanglement},
  journal       = {Reviews of Modern Physics},
  year          = {2009},
  volume        = {81},
  number        = {2},
  pages         = {865--942},
  month         = jun,
  issn          = {0034-6861, 1539-0756},
  archiveprefix = {arXiv},
  doi           = {10.1103/RevModPhys.81.865},
  eprint        = {quant-ph/0702225},
  langid        = {english},
}

@incollection{Nishimura2025SwapSurvey,
  author    = {Harumichi Nishimura},
  title     = {A Survey: SWAP Test and Its Applications to Quantum Complexity Theory},
  booktitle = {Algorithmic Foundations for Social Advancement},
  editor    = {Shin-ichi Minato and others},
  publisher = {Springer},
  address   = {Singapore},
  pages     = {243--261},
  year      = {2025},
  doi       = {10.1007/978-981-96-0668-9_16},
}

@Article{coffman2024Local,
  author        = {Coffman, Luke and Seshadri, Akshay and Smith, Graeme and Beckey, Jacob L.},
  title         = {Local Measurement Strategies for Multipartite Entanglement Quantification},
  journal       = {Physical Review A},
  year          = {2024},
  volume        = {110},
  number        = {1},
  pages         = {012454},
  month         = jul,
  archiveprefix = {arXiv},
  doi           = {10.1103/PhysRevA.110.012454},
  eprint        = {2401.08065},
  primaryclass  = {quant-ph},
  publisher     = {American Physical Society},
}

@InProceedings{bakshi2024Learning,
  author        = {Bakshi, Ainesh and Bostanci, John and Kretschmer, William and Landau, Zeph and Li, Jerry and Liu, Allen and O'Donnell, Ryan and Tang, Ewin},
  title         = {Learning the Closest Product State},
  booktitle     = {Proceedings of the 57th Annual ACM Symposium on Theory of Computing},
  year          = {2025},
  series        = {STOC '25},
  pages         = {1212–1221},
  address       = {New York, NY, USA},
  publisher     = {Association for Computing Machinery},
  abstract      = {We study the problem of finding a product state with optimal fidelity to an unknown n-qubit quantum state ρ, given copies of ρ. This is a basic instance of a fundamental question in quantum learning: is it possible to efficiently learn a simple approximation to an arbitrary state? We give an algorithm which finds a product state with fidelity ε-close to optimal, using N = npoly(1/ε) copies of ρ and poly(N) classical overhead. We further show that estimating the optimal fidelity is NP-hard for error ε = 1/poly(n), showing that the error dependence cannot be significantly improved.    For our algorithm, we build a carefully-defined cover over candidate product states, qubit by qubit, and then demonstrate that extending the cover can be reduced to approximate constrained polynomial optimization. For our proof of hardness, we give a formal reduction from polynomial optimization to finding the closest product state. Together, these results demonstrate a fundamental connection between these two seemingly unrelated questions. Building on our general approach, we also develop more efficient algorithms in three simpler settings: when the optimal fidelity exceeds 5/6; when we restrict ourselves to a discrete class of product states; and when we are allowed to output a matrix product state.},
  archiveprefix = {arXiv},
  doi           = {10.1145/3717823.3718207},
  eprint        = {2411.04283},
  isbn          = {9798400715105},
  keywords      = {agnostic learning, agnostic tomography, product states, quantum state tomography},
  location      = {Prague, Czechia},
  numpages      = {10},
  primaryclass  = {quant-ph},
  url           = {https://doi.org/10.1145/3717823.3718207},
}

@InProceedings{soleimanifar2022testingmatrixproductstates,
  author        = {Mehdi Soleimanifar and John Wright},
  title         = {Testing matrix product states},
  booktitle     = {Proceedings of the 2022 Annual ACM-SIAM Symposium on Discrete Algorithms (SODA)},
  pages         = {1679-1701},
  archiveprefix = {arXiv},
  doi           = {10.1137/1.9781611977073.68},
  eprint        = {2201.01824},
  primaryclass  = {quant-ph},
year = {2022}
}

@Misc{harrow2013Church,
  author        = {Harrow, Aram W.},
  title         = {The {{Church}} of the {{Symmetric Subspace}}},
  month         = aug,
  year          = {2013},
  archiveprefix = {arXiv},
  doi           = {10.48550/arXiv.1308.6595},
  eprint        = {1308.6595},
  keywords      = {Quantum Physics},
  number        = {arXiv:1308.6595},
  primaryclass  = {quant-ph},
  publisher     = {arXiv},
}

@InProceedings{bubeck2020Entanglement,
  author        = {Bubeck, Sebastien and Chen, Sitan and Li, Jerry},
  title         = {Entanglement Is {{Necessary}} for {{Optimal Quantum Property Testing}}},
  booktitle     = {2020 {{IEEE}} 61st {{Annual Symposium}} on {{Foundations}} of {{Computer Science}} ({{FOCS}})},
  year          = {2020},
  pages         = {692--703},
  month         = nov,
  archiveprefix = {arXiv},
  doi           = {10.1109/FOCS46700.2020.00070},
  eprint        = {2004.07869},
  issn          = {2575-8454},
  primaryclass  = {quant-ph},
}

@article{aharonov2022Quantum,
  title = {Quantum Algorithmic Measurement},
  author = {Aharonov, Dorit and Cotler, Jordan and Qi, Xiao-Liang},
  year = {2022},
  month = feb,
  journal = {Nature Communications},
  volume = {13},
  number = {1},
  pages = {887},
  issn = {2041-1723},
  doi = {10.1038/s41467-021-27922-0},
  langid = {english},
eprint = {2101.04634},
archiveprefix = {arXiv},
primaryclass = {quant-ph}
}

@Article{ekert2002Direct,
  author        = {Ekert, Artur K. and Alves, Carolina Moura and Oi, Daniel K. L. and Horodecki, Micha{\l} and Horodecki, Pawe{\l} and Kwek, L. C.},
  title         = {Direct {{Estimations}} of {{Linear}} and {{Nonlinear Functionals}} of a {{Quantum State}}},
  journal       = {Physical Review Letters},
  year          = {2002},
  volume        = {88},
  number        = {21},
  pages         = {217901},
  month         = may,
  archiveprefix = {arXiv},
  doi           = {10.1103/PhysRevLett.88.217901},
  eprint        = {quant-ph/0203016},
  publisher     = {American Physical Society},
}

@article{barenco1997Stabilization,
  title = {Stabilization of {{Quantum Computations}} by {{Symmetrization}}},
  author = {Barenco, Adriano and Berthiaume, Andr{\'e} and Deutsch, David and Ekert, Artur and Jozsa, Richard and Macchiavello, Chiara},
  year = {1997},
  month = oct,
  journal = {SIAM Journal on Computing},
  volume = {26},
  number = {5},
  pages = {1541--1557},
  publisher = {{Society for Industrial and Applied Mathematics}},
  issn = {0097-5397},
  doi = {10.1137/S0097539796302452},
  eprint = {quant-ph/9604028},
  archiveprefix = {arXiv}
}

@misc{liu2024role,
  title = {The Role of Shared Randomness in Quantum State Certification with Unentangled Measurements},
  author = {Liu, Yuhan and Acharya, Jayadev},
  year = {2024},
  month = jan,
  number = {arXiv:2401.09650},
  eprint = {2401.09650},
  primaryclass = {quant-ph},
  publisher = {arXiv},
  doi = {10.48550/arXiv.2401.09650},
  archiveprefix = {arXiv},
  keywords = {Computer Science - Computational Complexity,Computer Science - Information Theory,Mathematics - Information Theory,Quantum Physics}
}

@InProceedings{bouland2024state,
  author        = {Bouland, Adam and Giurgic\u{a}-Tiron, Tudor and Wright, John},
  title         = {The State Hidden Subgroup Problem and an Efficient Algorithm for Locating Unentanglement},
  booktitle     = {Proceedings of the 57th Annual ACM Symposium on Theory of Computing},
  year          = {2025},
  series        = {STOC '25},
  pages         = {463–470},
  address       = {New York, NY, USA},
  publisher     = {Association for Computing Machinery},
  abstract      = {We study a generalization of entanglement testing which we call the “hidden cut problem.” Taking as input copies of an n-qubit pure state which is product across an unknown bipartition, the goal is to learn precisely where the state is unentangled, i.e. to determine which of the exponentially many possible cuts separates the state. We give a polynomial-time quantum algorithm which can find the cut using O(n/є2) many copies of the state, which is optimal up to logarithmic factors. Our algorithm also generalizes to learn the entanglement structure of arbitrary product states. In the special case of Haar-random states, we further show that our algorithm requires circuits of only constant depth. To develop our algorithm, we introduce a state generalization of the hidden subgroup problem (StateHSP) which might be of independent interest, in which one is given a quantum state invariant under an unknown subgroup action, with the goal of learning the hidden symmetry subgroup. We show how the hidden cut problem can be formulated as a StateHSP with a carefully chosen Abelian group action. We then prove that Fourier sampling on the hidden cut state produces similar outcomes as a variant of the well-known Simon’s problem, allowing us to find the hidden cut efficiently. Therefore, our algorithm can be interpreted as an extension of Simon’s algorithm to entanglement testing. We discuss possible applications of StateHSP and hidden cut problems to cryptography and pseudorandomness.},
  archiveprefix = {arXiv},
  doi           = {10.1145/3717823.3718118},
  eprint        = {2410.12706},
  isbn          = {9798400715105},
  keywords      = {Quantum computing, hidden subgroup problem, product testing, quantum algorithms, quantum entanglement, Computer Science - Cryptography and Security,Computer Science - Data Structures and Algorithms,Quantum Physics},
  location      = {Prague, Czechia},
  numpages      = {8},
  primaryclass  = {quant-ph},
  url           = {https://doi.org/10.1145/3717823.3718118},
}

@InProceedings{hinsche2025SingleCopy,
  author        = {Hinsche, Marcel and Helsen, Jonas},
  title         = {Single-{{Copy Stabilizer Testing}}},
  booktitle     = {Proceedings of the 57th {{Annual ACM Symposium}} on {{Theory}} of {{Computing}}},
  year          = {2025},
  series        = {{{STOC}} '25},
  pages         = {439--450},
  address       = {New York, NY, USA},
  month         = jun,
  publisher     = {Association for Computing Machinery},
  archiveprefix = {arXiv},
  doi           = {10.1145/3717823.3718169},
  eprint        = {2410.07986},
  isbn          = {979-8-4007-1510-5},
  primaryclass  = {quant-ph},
}

@InProceedings{harrow_testing_2010,
  author        = {Harrow, Aram W. and Montanaro, Ashley},
  title         = {Testing product states, quantum {Merlin}-{Arthur} games and tensor optimisation},
  booktitle     = {2010 {IEEE} 51st {Annual} {Symposium} on {Foundations} of {Computer} {Science}},
  year          = {2010},
  pages         = {633--642},
  month         = oct,
  abstract      = {We give a test that can distinguish efficiently between product states of n quantum systems and states which are far from product. If applied to a state psi whose maximum overlap with a product state is 1-epsilon, the test passes with probability 1-Theta(epsilon), regardless of n or the local dimensions of the individual systems. The test uses two copies of psi. We prove correctness of this test as a special case of a more general result regarding stability of maximum output purity of the depolarising channel. A key application of the test is to quantum Merlin-Arthur games with multiple Merlins, where we obtain several structural results that had been previously conjectured, including the fact that efficient soundness amplification is possible and that two Merlins can simulate many Merlins: QMA(k)=QMA(2) for k{\textgreater}=2. Building on a previous result of Aaronson et al, this implies that there is an efficient quantum algorithm to verify 3-SAT with constant soundness, given two unentangled proofs of O(sqrt(n) polylog(n)) qubits. We also show how QMA(2) with log-sized proofs is equivalent to a large number of problems, some related to quantum information (such as testing separability of mixed states) as well as problems without any apparent connection to quantum mechanics (such as computing injective tensor norms of 3-index tensors). As a consequence, we obtain many hardness-of-approximation results, as well as potential algorithmic applications of methods for approximating QMA(2) acceptance probabilities. Finally, our test can also be used to construct an efficient test for determining whether a unitary operator is a tensor product, which is a generalisation of classical linearity testing.},
  archiveprefix = {arXiv},
  doi           = {10.1109/FOCS.2010.66 10.1145/2432622.2432625},
  eprint        = {1001.0017},
  keywords      = {Quantum Physics},
  primaryclass  = {quant-ph},
  url           = {http://arxiv.org/abs/1001.0017},
  urldate       = {2025-04-03},
}

@InProceedings{chen2024Optimala,
  author        = {Chen, Sitan and Gong, Weiyuan and Ye, Qi},
  title         = {Optimal Tradeoffs for Estimating Pauli Observables},
  booktitle     = {2024 IEEE 65th Annual Symposium on Foundations of Computer Science (FOCS)},
  year          = {2024},
  pages         = {1086-1105},
  archiveprefix = {arXiv},
  doi           = {10.1109/FOCS61266.2024.00072},
  eprint        = {2404.19105},
  keywords      = {Computer science;Protocols;Additives;Qubit;Tomography;Quantum state;Time measurement;Complexity theory;Testing;Quantum learning;shadow tomography;Pauli observables;sample complexity;tradeoffs;quantum memory},
  primaryclass  = {quant-ph},
}

@book{christensen_introduction_2016,
    title = {An {Introduction} to {Frames} and {Riesz} {Bases}},
    isbn = {978-3-319-25613-9},
    abstract = {This revised and expanded monograph presents the general theory for frames and Riesz bases in Hilbert spaces as well as its concrete realizations within Gabor analysis, wavelet analysis, and generalized shift-invariant systems. Compared with the first edition, more emphasis is put on explicit constructions with attractive properties. Based on the exiting development of frame theory over the last decade, this second edition now includes new sections on the rapidly growing fields of LCA groups, generalized shift-invariant systems, duality theory for as well Gabor frames as wavelet frames, and open problems in the field. Key features include:*Elementary introduction to frame theory in finite-dimensional spaces * Basic results presented in an accessible way for both pure and applied mathematicians * Extensive exercises make the work suitable as a textbook for use in graduate courses * Full proofs included in introductory chapters; only basic knowledge of functional analysis required * Explicit constructions of frames and dual pairs of frames, with applications and connections to time-frequency analysis, wavelets, and generalized shift-invariant systems* Discussion of frames on LCA groups and the concrete realizations in terms of Gabor systems on the elementary groups; connections to sampling theory * Selected research topics presented with recommendations for more advanced topics and further reading* Open problems to stimulate further research An Introduction to Frames and Riesz Bases will be of interest to graduate students and researchers working in pure and applied mathematics, mathematical physics, and engineering. Professionals working in digital signal processing who wish to understand the theory behind many modern signal processing tools may also find this book a useful self-study reference. Review of the first edition:"Ole Christensen’s An Introduction to Frames and Riesz Bases is a first-rate introduction to the field ... . The book provides an excellent exposition of these topics. The material is broad enough to pique the interest of many readers, the included exercises supply some interesting challenges, and the coverage provides enough background for those new to the subject to begin conducting original research."   — Eric S. Weber, American Mathematical Monthly, Vol. 112, February, 2005},
    language = {en},
    publisher = {Birkhäuser},
    author = {Christensen, Ole},
    month = may,
    year = {2016},
    note = {Google-Books-ID: KGRBDAAAQBAJ},
    keywords = {Mathematics / Complex Analysis, Mathematics / Functional Analysis, Mathematics / Mathematical Analysis, Technology \& Engineering / Electronics / General, Technology \& Engineering / Signals \& Signal Processing},
}

@Article{scott_tight_2006,
  author        = {Scott, A J},
  title         = {Tight informationally complete quantum measurements},
  journal       = {Journal of Physics A: Mathematical and General},
  year          = {2006},
  volume        = {39},
  number        = {43},
  pages         = {13507},
  month         = oct,
  issn          = {0305-4470},
  abstract      = {We introduce a class of informationally complete positive-operator-valued measures which are, in analogy with a tight frame, ‘as close as possible’ to orthonormal bases for the space of quantum states. These measures are distinguished by an exceptionally simple state-reconstruction formula which allows ‘painless’ quantum state tomography. Complete sets of mutually unbiased bases and symmetric informationally complete positive-operator-valued measures are both members of this class, the latter being the unique minimal rank-one members. Recast as ensembles of pure quantum states, the rank-one members are in fact equivalent to weighted 2-designs in complex projective space. These measures are shown to be optimal for quantum cloning and linear quantum state tomography.},
  archiveprefix = {arXiv},
  doi           = {10.1088/0305-4470/39/43/009},
  eprint        = {quant-ph/0604049},
  language      = {en},
  url           = {https://dx.doi.org/10.1088/0305-4470/39/43/009},
  urldate       = {2025-07-08},
}

@InProceedings{Yu1997,
author="Yu, Bin",
editor="Pollard, David
and Torgersen, Erik
and Yang, Grace L.",
title="Assouad, {F}ano, and {L}e {C}am",
bookTitle="Festschrift for Lucien Le Cam: Research Papers in Probability and Statistics",
year="1997",
publisher="Springer New York",
address="New York, NY",
pages="423--435",
abstract="This note explores the connections and differences between three commonly used methods for constructing minimax lower bounds in nonparametric estimation problems: Le Cam's, Assouad's and Fano's. Two connections are established between Le Cam's and Assouad's and between Assouad's and Fano's. The three methods are then compared in the context of two estimation problems for a smooth class of densities on [0,1]. The two estimation problems are for the integrated squared first derivatives and for the density function itself.",
isbn="978-1-4612-1880-7",
doi="10.1007/978-1-4612-1880-7_29",
url="https://doi.org/10.1007/978-1-4612-1880-7_29"
}

\end{document}